\pgfplotsset{compat=1.17}
\newtheorem{theorem}{Theorem}
\newtheorem{lemma}{Lemma}
\newtheorem{remark}{Remark}
\newtheorem{definition}{Definition}
\DeclareMathOperator*{\sort}{sort}
\DeclareMathOperator*{\lexsort}{lexsort}
\newcommand{\kibitz}[2]{\ifnum\Comments=1{\color{#1}{#2}}\fi}
\title{Equilibrium-Invariant Embedding, Metric Space, and Fundamental Set of 2×2 Normal-Form Games}
\keywords{2×2 Games, Invariance, Embedding, Metric Space of Games, Topology of Games, Taxonomy of Games, Periodic Table of Games, Game Representation, Game Embedding, Nash Equilibrium, Correlated Equilibrium, Coarse Correlated Equilibrium, Game Theory, General Sum, Mixed Motive}
\author[1,2]{Luke Marris}
\author[1]{Ian Gemp}
\author[1]{Georgios Piliouras}
\affil[1]{DeepMind}
\affil[2]{University College London}
\begin{abstract}
Equilibrium solution concepts of normal-form games, such as Nash equilibria, correlated equilibria, and coarse correlated equilibria, describe the joint strategy profiles from which no player has incentive to unilaterally deviate. They are widely studied in game theory, economics, and multiagent systems. Equilibrium concepts are invariant under certain transforms of the payoffs. We define an equilibrium-inspired distance metric for the space of all normal-form games and uncover a distance-preserving equilibrium-invariant embedding. Furthermore, we propose an additional transform which defines a better-response-invariant distance metric and embedding. To demonstrate these metric spaces we study 2×2 games. The equilibrium-invariant embedding of 2×2 games has an efficient two variable parameterization (a reduction from eight), where each variable geometrically describes an angle on a unit circle. Interesting properties can be spatially inferred from the embedding, including: equilibrium support, cycles, competition, coordination, distances, best-responses, and symmetries. The best-response-invariant embedding of 2×2 games, after considering symmetries, rediscovers a set of 15 games, and their respective equivalence classes. We propose that this set of game classes is fundamental and captures all possible interesting strategic interactions in 2×2 games. We introduce a directed graph representation and name for each class. Finally, we leverage the tools developed for 2×2 games to develop game theoretic visualizations of large normal-form and extensive-form games that aim to fingerprint the strategic interactions that occur within.
\end{abstract}
\begin{document}

\DeclareRobustCommand{\colorsquarergb}[3]{%
\begin{tikzpicture}%
    \node[fill={rgb:red,#1;green,#2;blue,#3},rectangle,minimum size=0.6em] {};
\end{tikzpicture}%
}

\DeclareRobustCommand{\colorsquare}[1]{%
\begin{tikzpicture}%
    \node[fill=#1,rectangle,minimum size=0.6em] {};
\end{tikzpicture}%
}

\DeclareRobustCommand{\northwestpattern}{%
\begin{tikzpicture}%
    \path[pattern=south east lines,pattern color=gray!20]
        (-0.45em, -0.45em) --
        (-0.45em, 0.45em) --
        (0.45em, 0.45em) --
        (0.45em, -0.45em) --
        cycle;
\end{tikzpicture}%
}

\DeclareRobustCommand{\northeastpattern}{%
\begin{tikzpicture}%
    \path[pattern=south west lines,pattern color=gray!20]
        (-0.45em, -0.45em) --
        (-0.45em, 0.45em) --
        (0.45em, 0.45em) --
        (0.45em, -0.45em) --
        cycle;
\end{tikzpicture}%
}

\DeclareRobustCommand{\verticalpattern}{%
\begin{tikzpicture}%
    \path[pattern=vertical lines,pattern color=gray!20]
        (-0.45em, -0.45em) --
        (-0.45em, 0.45em) --
        (0.45em, 0.45em) --
        (0.45em, -0.45em) --
        cycle;
\end{tikzpicture}%
}

\definecolor{tab1color}{RGB}{102,194,165}
\definecolor{tab2color}{RGB}{252,141,98}
\definecolor{tab3color}{RGB}{141,160,203}
\definecolor{tab4color}{RGB}{231,138,195}
\definecolor{tab5color}{RGB}{166,216,84}
\definecolor{tab6color}{RGB}{255,217,47}

\definecolor{tabB1color}{RGB}{31,119,180}
\definecolor{tabB2color}{RGB}{255,127,14}
\definecolor{tabB3color}{RGB}{44,160,44}

\tikzset{ 
    table/.style={
        matrix of nodes,
        row sep=-\pgflinewidth,
        column sep=-\pgflinewidth,
        nodes={
            rectangle,
            draw=black,
            align=center,
        },
        minimum height=0.6em,
        text depth=0.2ex,
        text height=0.6ex,
        text width=0.40em,
        align=center,
        font=\tiny,
        nodes in empty cells,
        column 3/.style={
            nodes={fill=gray!10}
        },
        column 4/.style={
            nodes={fill=gray!10}
        },
    }
}

\maketitle

\section{Introduction}

Equilibrium solutions to normal-form games, such as Nash equilibrium (NE) \citep{nash1951_neq}, correlated equilibrium (CE) \citep{aumann1974_ce}, and coarse correlated equilibrium (CCE) \citep{hannan1957_cce,moulin1978_cce,young2004_strategic_cce} are ubiquitously used to model the strategic behaviour of rational utility maximizing players in games. In some classes of games, such as two-player zero-sum games, the Nash equilibrium solution concept is considered fundamental \citep{vonneumann1947_game_theory_book}, because it is unexploitable and interchangeable in this class. CEs and CCEs are important in n-player general-sum games, which may require coordination facilitated by a correlation device. The set of equilibria in a game is invariant to certain transforms of the payoffs \citep{morris2004_best_response_equivalence}. The most well known is the affine transform (offset and positive scale of each player's payoff). We call such transforms \emph{equilibrium-invariant}. In addition, there are symmetries in payoffs which result in equivalent symmetries in the set of equilibria (for example, the order of strategies or players). We call transforms over symmetries \emph{equilibrium-symmetric}. Finally, there is a weaker notion of better-response invariance, where transforms do not change a player's preference over responses to other player's joint strategies. We call such transforms \emph{better-response-invariant}. This work studies transforms to produce metric spaces and embeddings over n-player general-sum normal-form games. To motivate the importance of these metric spaces we then focus on 2×2 games: exploring their properties, visualizing their structure, and rediscovering a set of 15 fundamental games.

2×2 normal-form games have two players, each with two strategies. Players take one of the two strategies (possibly at random) simultaneously. The resulting joint strategy triggers a payoff for each player. The game is played only once. The table of payoffs determines rational behaviour of the players (i.e. the set of equilibria). Popular games are given names based on the payoffs and the resulting behaviour. 2×2 games are utilized so frequently that their names have entered popular culture (Figure~\ref{fig:canonical_2x2_games}), for example: \chickengame~Chicken, \prisonersgame~Prisoner's Dilemma, \huntgame~Stag Hunt \citep{skyrms2004_stag_hunt}, \battlegame~Bach or Stravinsky (battle of the sexes), and \penniesgame~Matching Pennies\footnote{In this work, we accompany game names with a graphical representation which describes either each player's preference over joint payoffs for ordinal games (e.g. \penniesgame~Matching Pennies) or their best-response preferences for best-response-invariant embeddings (e.g. \cyclegame~Cycle). We describe these representations more thoroughly in later sections.}. The study of such games \citep{harold2002_atlas_of_interpersonal_situations} is crucial to understanding cooperation \citep{gauthier1986_morals_by_agreement}, competition, coordination, nature \citep{wilkinson1984_vampire_bat}, incentive structures \citep{sugden2005_economics_of_rights_cooperation_and_welfare}, social dilemmas \citep{bruns2021_archetypal_games}, utilitarian behaviour, rational behaviour \citep{gintis2014_bounds_of_reason}, and seemingly irrational behaviour. Games are used to inform economic policy \citep{ostrom1994_rules}, social structure \citep{skyrms2004_stag_hunt,bicchieri2005_gammar_of_society,binmore1994_social_contract}, foreign policy \citep{schelling1966_arms_and_influence}, pandemic response, and environmental treaties \citep{breton2006_environmental_projects,branzei2021_economies,schosser2022_fairness_pandemic}.

As a result, great effort has been expended in creating parameterizations, taxonomies, topologies, and names for 2×2 games \citep{rapoportandguyer1966_taxonomy_of_2x2_games,rapoport1976_the_2x2_game_book,kilgour1988_taxonomy_of_all_ordinal_2x2_games,robinsonandgoforth2005_topology_of_2x2_games_book,brams1993_theory_of_moves,bruns2015_names_for_games,robinson2007_toward_a_topological_treatment_of_the_nonstrictly_ordered_2x2_games,walliser1988_simplifed_taxonomy_2x2}. The most common of these focus on ordinal games, a type of equivalence that recognises that resulting joint strategies in games are only meaningfully different up to the partial ordering of the elements of each player's payoff. There are 726 such games \citep{fraser1986_non_strict_2x2_games}. Most other work \citep{rapoportandguyer1966_taxonomy_of_2x2_games,goforth2005_periodic_table_of_games,harris1969_geometric_classification_of_symmetric_2x2,huertasrosero2003_cartography_of_symmetric_2x2,boors2022_2x2_game_classification_by_decomposition} only considers subsets (e.g. symmetric or strictly ordinal) of 2×2 games. \cite{borm1987_classification_of_2x2_games} classified all 2×2 games into 15 distinct classes studying their NE best-responses. \cite{fishburn1990_binary_2x2_games} later showed that these classes can be represented with binary games. Neither provide a notion of closeness, distance metric, or satisfying parameterization for these games.

This work uncovers metric spaces and embeddings for general-sum n-player normal-form games. We demonstrate these embeddings in 2×2 games (Figure~\ref{fig:summary}). All nontrivial (Definition~\ref{def:trivial_payoff}) 2×2 games can be transformed to an equilibrium-invariant embedding which can be parameterized using only two variables, a reduction from the eight needed to represent the original payoffs (Table~\ref{tab:num_variables}). Remarkably, this does not change the set of equilibria in the game. Geometrically, each of the two variables describes an angle on a circle and has spatial meaning: similar games are situated near each other. Properties of a game, like equilibrium support, zero-sum-invariance, common-payoff-invariance, whether it is clockwise or anti-clockwise cyclic, whether it is coordination or anti-coordination, and the best-response dynamics, can be easily deduced from this embedding. Using symmetry, the area of equilibrium-invariant embedding can be reduced by a factor of eight to result in the equilibrium-symmetric embedding. Additionally this space can be further reduced to a set of games with a cardinality of 15: the best-response-invariant embedding. Because many equilibrium-invariant embeddings map to the same best-response-invariant embedding, they are also part of equivalence classes. These are the same classes found by \cite{borm1987_classification_of_2x2_games}, but derived through a different but related argument. We improve upon Borm's classification by situating them in the equilibrium-invariant embedding, illuminating the relationship between the games. We also provide names and an elegant graphical visualization of the classes.

Finally, we leverage the tools developed in 2×2 games to study games with more players and strategies, including large normal-form and extensive-form games. Historically, these games have been considered intractable to visualize or summarize. We develop game-theoretic visualizations that fingerprint strategic interactions within these games. The field of machine learning has enjoyed such simplified visualizations of high dimensional complex data. Principled techniques like PCA \citep{pearson1901_pca,hotelling1936_pca} aim to reduce dimensionality, while maintaining the maximum amount of information. Other less principled techniques like t-SNE \citep{vandermaaten2008_tsne,hintonroweis2003_sne} are also very popular. We hope that our visualizations prove useful for game theory practitioners, where we believe such visualization tools are underdeveloped.

\begin{figure}[t]
    \centering
    \footnotesize
    \begin{subfigure}[b]{0.20\textwidth}
        \centering
        \begin{tabular}{c|cc}
            {\large \ordinalgame{1423}{1243}} & C & S \\ \hline
            C & $-9,-9$ & $+1,-1$ \\
            S & $-1,+1$ & $~0,~~0$ \\
        \end{tabular}
        \caption{\centering Chicken}
        \label{fig:chicken}
    \end{subfigure}
    \hfill
    \begin{subfigure}[b]{0.22\textwidth}
        \centering
        \begin{tabular}{c|cc}
            {\large \ordinalgame{3142}{3412}}  & C & D \\ \hline
            C & $-2,-2$ & $-9,+0$ \\
            D & $+0,-9$ & $-5,-5$ \\
        \end{tabular}
        \caption{\centering Prisoner's Dilemma}
        \label{fig:prisoners_dilemma}
    \end{subfigure}
    \hfill
    \begin{subfigure}[b]{0.16\textwidth}
        \centering
        \begin{tabular}{c|cc}
            {\large \ordinalgame{4132}{4312}} & S & H \\ \hline
            S & $4,4$ & $1,3$ \\
            H & $3,1$ & $2,2$ \\
        \end{tabular}
        \caption{\centering Stag Hunt}
        \label{fig:stag_hunt}
    \end{subfigure}
    \hfill
    \begin{subfigure}[b]{0.19\textwidth}
        \centering
        \begin{tabular}{c|cc}
            {\large \ordinalgame{4113}{3114}} & M & F \\ \hline
            M & $3,2$ & $0,0$ \\
            F & $0,0$ & $2,3$ \\
        \end{tabular}
        \caption{\centering  Bach or Stravinsky}
        \label{fig:battle_of_the_sexes}
    \end{subfigure}
    \hfill
    \begin{subfigure}[b]{0.21\textwidth}
        \centering
        \begin{tabular}{c|cc}
            {\large \ordinalgame{3113}{1331}}  &       H &       T \\ \hline
            H & $+1,-1$ & $-1,+1$ \\
            T & $-1,+1$ & $+1,-1$ \\
        \end{tabular}
        \caption{\centering Matching Pennies}
        \label{fig:game_matching_pennies}
    \end{subfigure}
    
    \caption{Payoff tables of common 2×2 normal-form games. Player 1 selects a row strategy and player 2 selects a column strategy. Each player respectively receives one of the payoffs in the tuple. The joint payoff preference ordering is shown in the top-left for each player.}
    \label{fig:canonical_2x2_games}
\end{figure}

\begin{figure}
    \centering
        \centering
        \begin{tikzpicture}[
        squarenode/.style={rectangle, draw=black!60, fill=black!5, very thick, minimum size=2.6cm,text width=2.5cm,align=center}]
        \footnotesize
        
        \node[squarenode] (game_space) { ~ \\Game Space\\ ~ \\$G_p(a_1, a_2) \in \mathbb{R}$\\$\forall p, a_1, a_2$\\(8 variables)};
        \node[squarenode,right=1.65cm of game_space] (equilibrium_invariant_embedding) {Equilibrium\\Invariant\\Embedding\\$\theta_1 \in [-\pi, +\pi]$\\$\theta_2 \in [-\pi, +\pi]$\\(2 variables)};
        \node[squarenode,right=1.65cm of equilibrium_invariant_embedding] (equilibrium_symmetric_embedding) {Equilibrium\\Symmetric\\ Embedding\\$\phi_\text{coord}, \phi_\text{cycle}$\\(eighth of area)\\(2 variables)};
        \node[squarenode,right=2.1cm of equilibrium_symmetric_embedding] (best_response_invariant_embedding) {Best-Response\\Invariant\\Embedding\\(11 nontrivial)\\(3 partially trivial)\\(1 trivial)};
        
        \draw[->] (game_space.east) -- (equilibrium_invariant_embedding.west) node[midway,above] {Theorem~\ref{thm:equilibrium_invariant_embedding}};
        \draw[->] (equilibrium_invariant_embedding.east) -- (equilibrium_symmetric_embedding.west) node[midway,above] {Theorem~\ref{thm:equilibrium_symmetric_embedding}};
        \draw[->] (equilibrium_symmetric_embedding.east) -- (best_response_invariant_embedding.west) node[midway,above] {Theorems~\ref{thm:per_strategy_scale_transform}\&\ref{thm:fundamental_set}};
        
    \end{tikzpicture}
    \caption{Summary of the main contributions of this work showing how all 2×2 games can be transformed to an equilibrium-invariant embedding, an equilibrium-symmetric embedding, and a best-response-invariant embedding.}
    \label{fig:summary}
\end{figure}
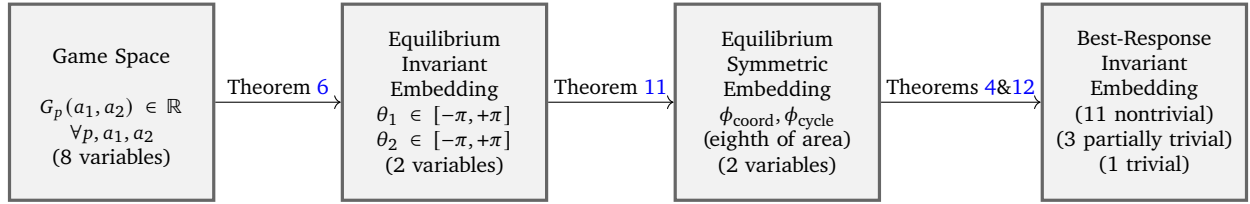

\section{Preliminaries and Related Work}

\paragraph{Game Theory}
Game theory is the study of the interactive behaviour of rational payoff maximizing agents in the presence of other agents. The environment that the agents operate in is called a game. We focus on a particular type of single-shot, simultaneous move game called a \emph{normal-form} game. A normal-form game consists of $N$ players, a set of strategies available to each player $p$, parameterized by $a_p \in \mathcal{A}_p = \{a_p^A, a_p^B, ...\}$, and a payoff for each player under a particular joint action, $G_p(a)$, where $a = (a_1, ..., a_N) = (a_p, a_{-p}) \in \mathcal{A} = \otimes_{p} \mathcal{A}_p$. The subscript notation $-p$ means ``all players apart from player $p$''. The plural, payoffs, is used to describe payoff of all players, $G(a) = (G_1(a), ..., G_N(a))$. When the payoff is not parameterized by a joint strategy, $G_p$, it refers to a full tensor of shape $[|\mathcal{A}_1|,...,|\mathcal{A}_N|]$. Elements of the payoff may be directly described as variables, $G_p(a_p^A, a_p^B) = g_p^{AB}$. Players need not act deterministically (\emph{pure strategy}), they can randomize over their strategies (\emph{mixed strategy}), and their behaviour can be correlated with other players (\emph{joint mixed strategy}). Players' behaviour is described by a joint distribution  $\sigma(a)$. The goal of each player is to maximize their expected payoff, $\sum_{a \in \mathcal{A}} \sigma(a) G_p(a)$. Players could play independently if the joint factorizes, such that $\sigma(a) = \otimes_p \sigma_p(a_p)$, where $\sigma_p(a_p)$ is a marginal mixed strategy. However this is limiting because it does not allow players to coordinate. A mediator called a \emph{correlation device} could be employed to allow players to execute arbitrary joint strategies $\sigma(a)$ that do not necessarily factorize into their marginals. Such a mediator would sample from a publicly known joint $\sigma(a)$ and secretly communicate to each player their recommended strategy. Game theory is most developed in a subset of games: those with two players and a restriction on the payoffs, $G_1(a_1, a_2) = -G_2(a_1, a_2)$, known as zero-sum. Particularly in n-player, general-sum games, there is no definitive solution concept. One approach is to consider joints that are in equilibrium: distributions that no player has incentive to unilaterally deviate from. Games are sometimes referred to by their shape, for example: $|\mathcal{A}_1|\times...\times|\mathcal{A}_N|$. In this work we primarily focus on general-sum 2×2 normal-form games.

\paragraph{Equilibrium Solution Concepts}
Broadly, \emph{solutions} of games are predictions of how self-interested rational players with agency may behave in a game. There are many solutions concepts, however the most popular class are equilibrium solutions that describe stable distributions of player strategies: distributions that no player has incentive to unilaterally deviate from. The most famous is Nash equilibrium (NE) \citep{nash1951_neq} which defines conditions for the set of stable marginal, $\sigma_p(a_p)$, strategies for each player that can be executed independently. Because of the independence property (also known as factorizable joint property), it is commonly deployed in purely competitive (zero-sum) settings. NE is straightforward to compute\footnote{Solvable in polynomial time.} in two-player zero-sum games where it is interchangeable\footnote{If there are any two equilibria $(\sigma_1(a_1), \sigma_2(a_2))$ and $(\sigma'_1(a_1), \sigma'_2(a_2))$, $(\sigma_1(a_1), \sigma'_2(a_2))$ and $(\sigma'_1(a_1), \sigma_2(a_2))$ are also equilibria.} and unexploitable\footnote{When playing the equilibrium there is no response from the opponent that will reduce the payoff.}. NE is considered fundamental in two-player zero-sum games where it is prescriptive. The set of NEs is always nonempty in finite games. However, in general, the set of NEs may be isolated from one another. In general, how to select the best NE is unclear (the \emph{equilibrium selection problem} \citep{harsanyi1988_eq_selection}), and finding a specific NE is NP-complete \citep{austrin2011_inapproximability}. Finding any NE is PPAD complete \citep{chen2006_nash_complexity,daskalakis2009_ne_complexity}. For n-player mixed-motive normal-form games, other mediated equilibria concepts, such as correlated equilibrium (CE) \citep{aumann1974_ce} and coarse correlated equilibrium (CCE) \citep{hannan1957_cce,moulin1978_cce} may be more suitable. These describe conditions on distributions over stable joint strategies, $\sigma(a)$. (C)CEs permit a larger feasible set of distributions and therefore can result in higher payoffs for players. However, they cannot be executed independently: they require a trusted coordination device to recommend an action to each player. Coordination is a sensible property if the setting allows for it. The key advantage of (C)CEs is that they have linear constraints and form a convex polytope of equilibria. This means the set can be computed in polynomial time for n-player general-sum games.

Well-Supported Correlated Equilibria (WSCEs) \citep{papadimitriouandgoldberg2006_well_supported_epsilon_eq,czumaj2014_well_supported_ne} can be defined in terms of linear inequality constraints. The deviation gain, $A^\text{WSCE}_p(a'_p, a''_p, a_{-p})$, of a player is the change in payoff the player achieves when deviating to strategy $a'_p$ after being recommended strategy $a''_p$, when the other players play $a_{-p}$. A joint, $\sigma(a)$, is in $\epsilon$-WSCE if the deviation gain is no more than some constant, $\epsilon_p \leq \epsilon$, for every recommendation, $a''_p$, deviation, $a'_p$, and player, $p$. The scalar $\epsilon_p$ describes how approximate the equilibrium is, and also commonly serves a role as a hyperparameter in algorithms where it is usually chosen to be zero. The deviation gain inequality can be expressed as:
\begin{align}
    \smashoperator{\sum_{a_{-p} \in \mathcal{A}_{-p}}} \sigma(a_{-p}|a''_p)  \left[A^\text{WSCE}_p(a'_p, a''_p, a_{-p}) := G_p(a'_p, a_{-p}) - G_p(a''_p, a_{-p}) \right] &\leq \epsilon_p \qquad \forall p \in [1,N], a''_p \neq a'_p  \in \mathcal{A}_p. \label{eq:wsce_def}
\end{align}
The conditional distribution is cumbersome, so the standard correlated equilibrium \citep{aumann1974_ce} is defined instead with a joint distribution. When $\epsilon_p = 0$, WSCE is equivalent to CE.
\begin{align}
    \smashoperator{\sum_{a \in \mathcal{A}}} \sigma(a) \left[ A^\text{CE}_p(a'_p, a''_p, a) := \begin{cases}
    A^\text{WSCE}_p(a'_p, a''_p, a_{-p}) & a_p = a''_p \\
    0 & \text{otherwise}
    \end{cases} \right] &\leq \epsilon_p \qquad \forall p \in [1,N], a''_p \neq a'_p  \in \mathcal{A}_p  \label{eq:ce_def}
\end{align}
Coarse correlated equilibria \citep{hannan1957_cce,moulin1978_cce,young2004_strategic_cce} are similar to CEs but a player may only consider deviating before receiving a recommended strategy. Therefore the deviation gain for CCEs can be derived from the CE definition by summing over all possible recommended strategies $a''_p$: $A^\text{CCE}_p(a'_p, a) = \sum_{a''_p} A^\text{CE}_p(a'_p, a''_p, a)$.
\begin{align}
    \smashoperator{\sum_{a \in \mathcal{A}}} \sigma(a) \left[A^\text{CCE}_p(a'_p, a) := G_p(a'_p, a_{-p}) - G_p(a) \right] &\leq \epsilon_p  \qquad \forall p \in [1,N], a'_p \in \mathcal{A}_p \label{eq:cce_def}
\end{align}
NEs \citep{nash1951_neq} have similar definitions to (C)CEs but have an extra constraint: the joint distribution factorizes $\otimes_p \sigma(a_p) = \sigma(a)$. Either enforcing a joint factorizes, or optimizing the marginals directly results in nonlinear constraints\footnote{This is why NEs are harder to compute than (C)CEs.}. The definition of the NE can be optionally defined in terms of the CCE or CE deviation gains. Using the CE deviation gains we define the $\epsilon$-NE and $\epsilon$-WSNE.
\begin{align}
    \smashoperator{\sum_{a \in \mathcal{A}}} \otimes_{q} \sigma(a_q) A^\text{CE}_p(a'_p, a''_p, a) &\leq \epsilon_p \qquad \forall p \in [1,N], a''_p \neq a'_p \in \mathcal{A}_p \label{eq:ne_def} \\
    \smashoperator{\sum_{a_{-p} \in \mathcal{A}_{-p}}} \otimes_{q \neq p} \sigma(a_q) A^\text{WSCE}_p(a'_p, a''_p, a_{-p}) &\leq \epsilon_p \qquad \forall p \in [1,N], a''_p \neq a'_p \in \mathcal{A}_p \label{eq:wsne_def}
\end{align}

\paragraph{Equilibrium-Invariant Transforms}
Certain transforms to payoffs, $G_p(a) \to \hat{G}_p(a)$, do not change the set of equilibria, $\sigma^*(a) \to \hat{\sigma}^*(a) = \sigma^*(a)$. These transforms can be used to reduce the degrees of freedom in the game. The most common invariant transform \citep{ostrovski2013_fictitious_play_dynamics,morris2004_best_response_equivalence,moulin1978_cce} is the affine transform (Theorem~\ref{thoerem:affine_transform}) which consists of an offset over the other players' strategies and a positive scale. \cite{harsanyi1988_eq_selection} argues that invariant transforms and symmetric equivalences are important properties in equilibrium selection. \cite{marris2022_turbo_arxiv} leverages both to simplify computing (C)CEs using neural networks. Invariances are also leveraged more generally in machine learning to efficiently optimize over rich topological spaces, e.g., in semidefinite programming and Riemannian optimization~\citep{boumal2014_manopt}. The work in this paper utilizes existing equilibrium-invariant transforms to derive a new equilibrium-invariant embedding for general n-player and 2×2 games, and proposes a novel isomorphism (Theorem~\ref{thm:per_strategy_scale_transform}), which is used to rediscover a fundamental set of 2×2 game classes.

\begin{table}[t]
    \centering
    \begin{subtable}[t]{0.32\textwidth}
        \centering
        \begin{tabular}{l|cc}
            ~ & A & B \\ \hline
            A & $g_1^{AA}$, $g_2^{AA}$ & $g_1^{AB}$, $g_2^{AB}$ \\
            B & $g_1^{BA}$, $g_2^{BA}$ & $g_1^{BB}$, $g_2^{BB}$
        \end{tabular}
        \caption{\centering Full Normal-Form\\(8 variables)}
        \label{tab:full_variables}
    \end{subtable}
    \hfill
    \begin{subtable}[t]{0.32\textwidth}
        \centering
        \begin{tabular}{l|cc}
            ~ & A & B \\ \hline
            A & $g^{AA}$ & $g^{AB}$ \\
            B & $g^{BA}$ & $g^{BB}$
        \end{tabular}
        \caption{\centering Symmetric, Common or Zero-Sum\\(4 variables)}
        \label{tab:symmetric_common_zero_sum_variables}
    \end{subtable}
    \hfill
    \begin{subtable}[t]{0.32\textwidth}
        \centering
        \begin{tabular}{l|cc}
            ~ & A & B \\ \hline
            A & $0$ & $g^{AB}$ \\
            B & $-g^{AB}$ & $0$
        \end{tabular}
        \caption{\centering Symmetric Zero-Sum\\(1 variable)}
        \label{tab:symmetric_zero_sum_variables}
    \end{subtable}
    \caption{Number of variables needed to describe 2×2 normal-form games and sparser simplifications.}
    \label{tab:num_variables}
\end{table}

\paragraph{Two-Player Two-Strategy Games}
2×2 games are the smallest possible, but have remarkable strategic depth and explain many real-world interactions. As a result, a rich literature has accumulated on explaining, visualizing, categorizing, parameterizing and naming 2×2 games. Naively a 2×2 game (Table~\ref{tab:full_variables}) requires 8 variables to define which is too many variables to intuit. Therefore many approaches attempt to reduce this complexity through either using invariances, symmetries and equivalences, or considering a subset of games (for example symmetric, common payoff, or zero-sum games; Tables~\ref{tab:symmetric_common_zero_sum_variables} and \ref{tab:symmetric_zero_sum_variables}). After this reduction in complexity, either a finite set of games or a reduced space of games remains. This set or space may have a structure that describes how close it is to other games. Approaches that simply bin games into a set are \emph{categorical} approaches. Those that also impose a notion of similarity or closeness between games are \emph{topological}. Those that impose hierarchical structure to the categorization are called \emph{taxonomies}. Those that parameterize the game are \emph{parametric}. Approaches may have multiple properties.

Games can be characterized based on their payoffs. \cite{rapoportandguyer1966_taxonomy_of_2x2_games}'s ``taxonomy of games'' (and book \citep{rapoport1976_the_2x2_game_book}) exploits a particular equivalence class where only the order of each player's payoffs matter. Changes in the magnitude of payoffs that do not change the order result in predictable scaling of the equilibrium of the game. Additionally, if we only consider the subset of games that have strict ordering (\emph{strict ordinal games}) we can represent the payoffs with permutations of the set of ordinal numbers $\{1,2,3,4\}$. This results in $4! = 24$ ways to strictly order each player's payoff, which results in $24 \times 24 = 576$ games. Utilizing strategy permutations\footnote{The literature phrases this as ``order graph'' equivalence.} for each player reduces the number of strict ordinal games to $144$. When including player permutations\footnote{Sometimes referred to as ``reflection'' in the literature. This symmetry does not quite halve the space because all symmetric games are retained. Of the 78 strict ordinal games, 66 are non-symmetric and 12 are symmetric. The literature is shier to utilize this symmetry because it reverses the roles of players.} half the non-symmetric games can be removed which reduces the cardinality to 78 strict ordinal games. A drawback of this approach is that it only classifies games with strict payoff orderings. \cite{fraser1986_non_strict_2x2_games} introduced the categorization of partially ordered ordinal games (\emph{partial ordinal games}) where payoffs can take on equal values. This results in a total cardinality of 1413 partially ordered ordinal games with strategy symmetries, or 726 if player symmetries are utilized. Partial ordinal games can be visualized using a graphical representation which shows the ordering of joint preferences (Figure~\ref{fig:canonical_2x2_games}).  \cite{goforth2005_periodic_table_of_games} improved the categorization of strict ordinal games to produce a ``periodic table''. The $144$ games were distributed in a $12 \times 12$ grid such that adjacent games had similar properties. \cite{robinson2007_toward_a_topological_treatment_of_the_nonstrictly_ordered_2x2_games} extended this topology to include partially ordered games. \cite{bruns2015_names_for_games} suggested a formulaic naming scheme for all ordinal games as previously only a small subset (often symmetric games) had established common names.

\cite{harris1969_geometric_classification_of_symmetric_2x2} gives a parameterized classification of symmetric 2×2 games using only two variables which are functions of the payoffs (Table~\ref{tab:symmetric_common_zero_sum_variables}): $r_3 = \frac{g^{BB} - g^{AB}}{g^{BA} - g^{AB}}$ and $r_4 = \frac{g^{BA} - g^{AA}}{g^{BA} - g^{AB}}$, with the constraint that $g^{BA} > g^{AB}$. This defines a plane, with regions that correspond to classes of games with similar properties. \cite{huertasrosero2003_cartography_of_symmetric_2x2} also classifies symmetric 2×2 games into 8 base classes, and $12$ total classes based on their NE. 
\cite{boors2022_2x2_game_classification_by_decomposition} classifies symmetric 2×2 games into $24$ classes based on their decomposition into common-payoff and zero-sum parts. \cite{germano2006_geometry_of_nf_games} classifies games into various equivalent classes based on their Nash equilibrium geometry. \cite{borm1987_classification_of_2x2_games} classifies all 2×2 games into a set of 15 games based on their best-response characteristics. The games can also be parameterized with 4 discrete variables. Borm did not describe any notions of similarity or distance between games in the set. Our work rediscovers \cite{borm1987_classification_of_2x2_games}'s classification through a scale-based payoff transform. In addition, we extend this classification to a metric space, provide a much more efficient two variable embedding of these games, and name these games.

\section{Metric Spaces for General Normal-Form Games}
\label{sec:metric_spaces}

Let $G \in \mathcal{G}=(\mathcal{G}_1, ..., \mathcal{G}_N)$ be the space of games with a particular number of players and strategies. $G_p \in \mathcal{G}_p = \mathbb{R}^{|\mathcal{A}_1| \times ... \times |\mathcal{A}_N|}$ is the space of payoffs for player $p$. We can include the approximation parameter (see Equation~\eqref{eq:wsce_def}) for each player $\epsilon = (\epsilon_1, ..., \epsilon_N) \in \mathbb{R}^N$ in the space of games, such that we have a new tuple space $(\mathcal{G}, \epsilon)$, that we call the approximate game space. We can define a distance metric between two approximate games $d((G^A, \epsilon^A), (G^B, \epsilon^B))$ such that the space of approximate normal-form games is a metric space. Let $\sigma \in \Delta^{|\mathcal{A}| - 1}$ be the probability simplex. Let $\sigma^* \in \text{(WS)(C)CE}(G, \epsilon) \subseteq \Delta^{|\mathcal{A}| - 1}$ be the subset of joints that are in equilibrium according to either Equation~\eqref{eq:wsce_def}, \eqref{eq:ce_def}, \eqref{eq:cce_def} or \eqref{eq:ne_def}. Most commonly we choose $\epsilon=0$ and therefore the additional notation around approximate games can be dropped.

\subsection{Equilibrium-Invariant Embedding}

We are concerned with studying game transforms, $(G, \epsilon) \to (\hat{G}, \hat{\epsilon})$, that do not change the set of approximate equilibria, $\text{(WS)(C)CE}(G, \epsilon) = \text{(WS)(C)CE}(\hat{G}, \hat{\epsilon})$. We call such a transformation an \emph{equilibrium-invariant} transform. The most common such transform \citep{ostrovski2013_fictitious_play_dynamics,morris2004_best_response_equivalence,moulin1978_cce} is the \emph{affine transform} which consists of an offset over the other players' strategies and a positive scale.
\begin{theorem}[Affine Transform] \label{thoerem:affine_transform}
$\epsilon$-NE, $\epsilon$-WSNE, $\epsilon$-CE, $\epsilon$-WSCE, and $\epsilon$-CCE are equilibrium-invariant under affine transformations of each player's payoff. Concretely, when
\begin{align}
    G_p(a) \to \hat{G}_p(a) = s_p G_p(a_p, a_{-p}) + b_p(a_{-p}), \quad \text{and} \quad \epsilon_p \to \hat{\epsilon}_p = s_p \epsilon_p,
\end{align}
an $\epsilon_p$-equilibrium in the original game is an $s_p\epsilon_p$-equilibrium in the transformed game: $\sigma(a) \to \hat{\sigma}(a) = \sigma(a)$, where $b_p(a_{-p})$ is any offset, and $s_p$ is any positive scalar.
\end{theorem}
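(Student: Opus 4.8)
The plan is to verify the claimed invariance directly from the deviation-gain definitions in Equations~\eqref{eq:wsce_def}--\eqref{eq:ne_def}, by tracking how each deviation-gain term transforms under $G_p(a) \to \hat G_p(a) = s_p G_p(a_p, a_{-p}) + b_p(a_{-p})$ and then appealing to nonnegativity of the joint $\sigma$. The key observation is that every deviation-gain quantity in the paper is a \emph{difference} of two payoffs of player $p$ that share the same opponent profile $a_{-p}$, so the offset $b_p(a_{-p})$ always cancels, while the positive scale $s_p$ factors out cleanly. Since $s_p>0$, multiplying an inequality through by $s_p$ preserves its direction, which is exactly why $\epsilon_p$ must scale to $s_p\epsilon_p$.

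Concretely, I would proceed case by case. First, for WSCE: $A^{\text{WSCE}}_p(a'_p,a''_p,a_{-p}) = G_p(a'_p,a_{-p}) - G_p(a''_p,a_{-p}) \to s_p G_p(a'_p,a_{-p}) + b_p(a_{-p}) - s_p G_p(a''_p,a_{-p}) - b_p(a_{-p}) = s_p A^{\text{WSCE}}_p(a'_p,a''_p,a_{-p})$. Substituting into \eqref{eq:wsce_def} and pulling $s_p$ out of the (finite, hence convergent) sum over $a_{-p}$ shows the left-hand side scales by $s_p$, so the constraint $\le \epsilon_p$ becomes $\le s_p\epsilon_p$, with $\sigma$ — and hence $\sigma(a_{-p}\mid a''_p)$ — unchanged. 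Next, for CE: the bracketed quantity in \eqref{eq:ce_def} is either $A^{\text{WSCE}}_p(a'_p,a''_p,a_{-p})$ (when $a_p=a''_p$) or $0$; both scale by $s_p$ (the zero case trivially), so the same argument applies. For CCE: $A^{\text{CCE}}_p(a'_p,a) = \sum_{a''_p} A^{\text{CE}}_p(a'_p,a''_p,a)$ is a finite sum of terms each scaling by $s_p$, hence scales by $s_p$, giving the result for \eqref{eq:cce_def}. Finally, for NE and WSNE, Equations~\eqref{eq:ne_def} and~\eqref{eq:wsne_def} use the \emph{same} deviation gains $A^{\text{CE}}_p$ and $A^{\text{WSCE}}_p$, only with $\sigma$ replaced by its product form $\otimes_q \sigma(a_q)$; since the affine transform does not touch $\sigma$, the product form is likewise unchanged, and the identical scaling-by-$s_p$ computation goes through. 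In every case the transformed $\sigma$ equals the original, establishing $\hat\sigma(a)=\sigma(a)$ and the equality of equilibrium sets.

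I do not anticipate a serious obstacle: the result is essentially a bookkeeping exercise once one notices the ``difference of payoffs at fixed $a_{-p}$'' structure. The only point requiring a little care is the direction of the inequality — this is where the hypothesis $s_p>0$ is essential, and it is worth remarking that a negative scale would flip the inequality and fail to preserve equilibria (it would instead relate to an ``anti-equilibrium''). A secondary subtlety is that the offset is allowed to depend on $a_{-p}$ but \emph{not} on $a_p$; I would note explicitly that an $a_p$-dependent offset would survive the difference and change the deviation gains, so the stated form of $b_p(a_{-p})$ is the most general offset compatible with invariance. No limiting or measure-theoretic issues arise since all sums are finite.
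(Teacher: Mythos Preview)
Your proposal is correct and follows essentially the same route as the paper: compute that each deviation gain $A^{\text{WSCE}}_p$, $A^{\text{CE}}_p$, $A^{\text{CCE}}_p$ transforms to $s_p$ times itself because the shared-$a_{-p}$ offset cancels, then observe that the equilibrium-defining inequalities are preserved (with $\epsilon_p \mapsto s_p\epsilon_p$) since $s_p>0$. Your additional remarks on why $s_p>0$ is needed and why $b_p$ cannot depend on $a_p$ go slightly beyond the paper's terse proof but are consistent with it.
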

\begin{proof}
Consider the effect of the transforms on the deviation gains.
\begin{subequations}
\begin{align}
    A^\text{WSCE}_p(a'_p, a''_p, a_{-p}) &\to s_p G_p(a'_p, a_{-p}) + b_p(a_{-p}) - s_p G_p(a''_p, a_{-p}) - b_p(a_{-p}) & &= s_p A^\text{WSCE}_p(a'_p, a''_p, a_{-p}) \\
    A^\text{CE}_p(a'_p, a''_p, a) &\to \begin{cases}
        s_p A^\text{WSCE}_p(a'_p, a''_p, a_{-p}) & a_p = a''_p \\
        0 & \text{otherwise}
    \end{cases} & &= s_p A^\text{CE}_p(a'_p, a''_p, a) \\
    A^\text{CCE}_p(a'_p, a) &\to s_p G_p(a'_p, a_{-p}) + b_p(a_{-p}) - s_p G_p(a) - b_p(a_{-p}) & &= s_p A^\text{CCE}_p(a'_p, a)
\end{align}
\end{subequations}
Equilibria are entirely defined by their inequality constraints (Equations~\eqref{eq:wsce_def}, \eqref{eq:ce_def}, \eqref{eq:cce_def} and \eqref{eq:ne_def}). The affine transform only results in a $s_p$ scale to the LHS of the inequality. If we apply the same positive scale to the RHS of the definition the inequality will still hold. Therefore an $\epsilon_p$-equilibrium in the untransformed game will be an $s_p\epsilon_p$-equilibrium in the transformed game. If $\epsilon_p=0$ the equilibria will not change.
\end{proof}

The affine transform can be used to reduce the degrees of freedom in each player's payoff by $|\mathcal{A}_{-p}| + 1$ without changing the equilibria if $b_p$ and $s_p$ are defined in terms of the payoffs themselves. Offsetting by the mean, $b_p(a_{-p}) = -\frac{1}{|\mathcal{A}_p|} \sum_{a_p} G_p(a_p, a_{-p})$, and scaling by the Frobenius norm of the payoff tensor, $s_p = 1 / ||G_p||_F$, are sensible choices. We can use this transform to design a distance metric, $d^\text{equil}$, and a distance-preserving equilibrium-invariant embedding. Let $\mathcal{G}^\text{equil}$ be an embedding in $\mathcal{G}$ given by a structure preserving mapping, such that $\mathcal{G}^\text{equil} \subset \mathcal{G}$. $\mathcal{G}^\text{equil}$ is a manifold and a metric space.
\begin{definition}[Equilibrium-Invariant Embedding]
    \begin{align} \label{eq:inv_embedding_function}
        G^\text{equil}_p(a) = \frac{1}{Z} \left(G_p(a) - \frac{1}{|\mathcal{A}_p|} \sum_{a_p} G_p(a_p, a_{-p}) \right) %
        \qquad %
        \epsilon^\text{equil}_p =  \frac{1}{Z}\epsilon_p %
        \qquad %
        Z = \left\| G_p - \frac{1}{|\mathcal{A}_p|} \sum_{a_p} G_p(a_p, a_{-p}) \right\|_F %
    \end{align}
\end{definition}
\begin{definition}[Equilibrium-Invariant Distance Metric]
    \begin{align} \label{eq:distance_metric}
        d^\text{equil} \left((G^A, \epsilon^A), (G^B, \epsilon^B) \right) %
        = \sqrt{ \sum_p \arccos \left(\sum_a G^\text{A,equil}_p(a) G^\text{B,equil}_p(a) \right)^2} +  \sqrt{\sum_p (\epsilon^\text{A,equil}_p - \epsilon^\text{B,equil}_p)^2}
    \end{align}
\end{definition}

This definition poses a problem: it is not well defined when a payoff is \emph{trivial} because the normalization constant $Z$ is zero for a trivial game. All joint distributions are in equilibrium in games where all players have trivial payoffs.
\begin{definition}[Trivial Payoff] \label{def:trivial_payoff}
    \begin{align}
        G_p(a_p, a_{-p}) = b_p(a_{-p}) \qquad \forall a_{-p} \in \mathcal{A}_{-p}
    \end{align}
\end{definition}
\begin{definition}[Nontrivial Payoff] \label{def:nontrivial_payoff}
    \begin{align}
        G_p(a_p, a_{-p}) \neq b_p(a_{-p}) \qquad \exists a_{-p} \in \mathcal{A}_{-p}
    \end{align}
\end{definition}
This can be simply remedied by defining the norm of a zero vector to be unity, $\|\boldsymbol{0}\| = 1$, which occurs when the payoff is trivial. For the embedding this is natural. For distances, the inner product, $\sum_a G^\text{A,equil}_p(a) G^\text{B,equil}_p(a)$, will be zero which is equivalent to the the payoffs being perpendicular. A trivial payoff would be defined to be perpendicular to all other payoffs. This is not an unreasonable definition, but in our work it is most common to only deal with nontrivial games, or only compare games with identical triviality structure, which we will assume going forward.

The equilibrium-invariant embedding is an oblique manifold~\citep{trendafilov2002multimode} (a product manifold of $N$ unit spheres). Each player's payoff embedding, $G^\text{equil}_p$, is a point on the surface of one of these spheres. The distance between two games is the norm of the arc lengths between the two points on each of these spheres. Therefore the maximum distance between two $\epsilon=0$ games is $\sqrt{N}\pi$. The equilibrium-invariant embedding reduces the degrees of freedom in each player's payoff by $|\mathcal{A}_{-p}| + 1$. The linear offset component contributes a $|\mathcal{A}_{-p}|$ portion, while the nonlinear scaling contributes the remaining unit. It turns out that $|\mathcal{A}_{-p}|$ is the largest reduction that can the achieved by a linear function.
\begin{theorem}[Linear Offset Rank Reduction]
    The offset component of the equilibrium-invariant embedding reduces a payoff's degrees of freedom by $|\mathcal{A}_{-p}|$. This is the most degrees of freedom that can be reduced with a linear transform without changing the equilibrium.
\end{theorem}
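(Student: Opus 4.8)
The plan is to establish the two halves of the statement essentially independently: that the offset component removes exactly $|\mathcal{A}_{-p}|$ dimensions, and that no linear transform of a payoff that preserves the equilibrium set can remove more. Fix player $p$ and identify the payoff space $\mathcal{G}_p = \mathbb{R}^{\mathcal{A}}$. Let $V_p \subseteq \mathbb{R}^{\mathcal{A}}$ be the subspace of \emph{column-constant} tensors, i.e.\ those $\Delta$ for which $\Delta(a_p, a_{-p})$ does not depend on $a_p$; equivalently $V_p$ is the set of offsets $b_p(a_{-p})$ appearing in Theorem~\ref{thoerem:affine_transform}, so $\dim V_p = |\mathcal{A}_{-p}|$.

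\emph{First half.} The map $P : G_p \mapsto G_p - \tfrac{1}{|\mathcal{A}_p|}\sum_{a_p} G_p(a_p, \cdot)$ is linear; it is the orthogonal projection onto the codimension-$|\mathcal{A}_{-p}|$ subspace $S_p = \{H : \sum_{a_p} H(a_p, a_{-p}) = 0 \text{ for all } a_{-p}\}$, so $\ker P = V_p$. By Theorem~\ref{thoerem:affine_transform} (with scale $s_p = 1$) $P$ is equilibrium-invariant, so the offset component reduces each payoff's degrees of freedom by exactly $\dim V_p = |\mathcal{A}_{-p}|$.

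\emph{Second half.} First I would record that every deviation gain in Equations~\eqref{eq:wsce_def}--\eqref{eq:ne_def} depends on $G_p$ only through the linear operator $D_p : G_p \mapsto \big(G_p(a'_p, a_{-p}) - G_p(a''_p, a_{-p})\big)_{a'_p, a''_p, a_{-p}}$, whose kernel is precisely $V_p$; hence two payoffs whose difference lies in $V_p$ induce the same equilibria in every game. The crux is the converse: if $\Delta \notin V_p$, some game's equilibrium set changes when $G_p$ is replaced by $G_p + \Delta$. Given such a $\Delta$, pick $a_p^1 \neq a_p^2$ and $a^*_{-p}$ with $\Delta(a_p^1, a^*_{-p}) \neq \Delta(a_p^2, a^*_{-p})$, relabelling so that $\delta := \Delta(a_p^1, a^*_{-p}) - \Delta(a_p^2, a^*_{-p}) > 0$. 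Build the game by giving each player $q \neq p$ the indicator payoff $G_q(a) = 1$ if $a_q = (a^*_{-p})_q$ and $0$ otherwise, which makes $(a^*_{-p})_q$ strictly dominant and therefore forces every NE, WSNE, CE, WSCE and CCE to be supported on $\{a : a_{-p} = a^*_{-p}\}$ (for the correlated concepts this follows from the per-recommendation / coarse deviation inequality applied to the joint $\sigma$); and set $G_p(a_p^1, a^*_{-p}) = G_p(a_p^2, a^*_{-p}) = 0$ with $G_p(a_p, a^*_{-p})$ very negative for all remaining $a_p$ (off-column entries are irrelevant since they carry zero probability in equilibrium). The pure joint concentrated on $(a_p^2, a^*_{-p})$ then satisfies every equilibrium constraint for $G$, but after adding $\Delta$ player $p$'s deviation gain from $a_p^2$ to $a_p^1$ equals $\delta > 0$, violating the constraint in each of Equations~\eqref{eq:wsce_def}--\eqref{eq:ne_def}; hence the equilibrium set changes.

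Finally, if $L$ is any linear transform of $G_p$ that preserves the equilibrium set for every game, then for $\Delta \in \ker L$ the payoffs $G_p$ and $G_p + \Delta$ have the same image under $L$ and hence the same equilibria in every game, so the second half forces $\Delta \in V_p$; thus $\ker L \subseteq V_p$ and $L$ reduces degrees of freedom by at most $|\mathcal{A}_{-p}|$, which the offset $P$ attains. I expect the main obstacle to be the gadget in the second half, specifically verifying that the indicator payoffs really do collapse the support of the \emph{correlated} and \emph{coarse-correlated} solution concepts (not just Nash, where dominance acts on marginals), and dispatching degenerate cases such as $|\mathcal{A}_p| = 1$ (where $V_p$ is the whole space and the claim is vacuous) or other players having a single strategy.
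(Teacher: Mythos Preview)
Your proof is correct, and it takes a genuinely different route from the paper's.

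The paper argues algebraically: it writes the deviation gains as $A_p = T_p G_p$ for an explicit linear operator $T_p$, inspects the block structure of $T_p$ (Lemma~\ref{lemma:deviation_gain_rank}) to show $\text{rank}(T_p) = |\mathcal{A}| - |\mathcal{A}_{-p}|$, and then asserts that any linear transform whose kernel exceeds $\ker T_p$ would collapse payoffs with distinct deviation gains and hence distinct equilibria. Your proof instead characterises directly the set of ``equilibrium-neutral'' perturbations: for any $\Delta \notin V_p$ you build a witness game (strict dominance for the other players, a tie in column $a^*_{-p}$ for player $p$) in which the pure profile $(a_p^2, a^*_{-p})$ is an equilibrium before adding $\Delta$ and not after.

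What each buys: the paper's rank calculation is clean and feeds naturally into Theorem~\ref{theorem:reversible_gains} (reversibility of deviation gains), but its last step---that distinct deviation gains force distinct equilibrium sets---is left implicit. Your gadget closes exactly that gap by exhibiting the difference at the level of equilibria, and your dominance argument for collapsing the support goes through uniformly for NE, WSNE, CE, WSCE and CCE (the CCE constraint with $a'_q = (a^*_{-p})_q$ gives $\Pr[a_q \neq (a^*_{-p})_q] \leq 0$, exactly as you anticipated). The edge case $|\mathcal{A}_p| = 1$ is handled as you say: then $V_p = \mathbb{R}^{\mathcal{A}}$ and both halves are vacuous.
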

\begin{proof}
    The computation of deviation gains, $A$, (Equations~\eqref{eq:wsce_def}, \eqref{eq:ce_def}, and \eqref{eq:cce_def}) is a linear operation and therefore can be expressed as a matrix multiplication of an operator matrix, $T_p$, with a player's payoff, $G_p$. We use flattened forms of the payoffs and gains.
    \begin{subequations}
    \begin{align}
        A^\text{WSCE}_p(a'_p \otimes a''_p, a_{-p}) &= \sum_{a'''} T^\text{WSCE}_p(a'_p \otimes a''_p \otimes a_{-p}, a''') G_p(a''') \\
        A^\text{CE}_p(a'_p \otimes a''_p \otimes a) &= \sum_{a'''} T^\text{CE}_p(a'_p \otimes a''_p \otimes a, a''') G_p(a''') \\
        A^\text{CCE}_p(a'_p \otimes a) &= \sum_{a'''} T^\text{CCE}_p(a'_p \otimes a, a''') G_p(a''')
    \end{align}
    \end{subequations}
    By inspecting the structure of the operator matrices (Section~\ref{sec:gains_as_linear_operators}, Lemma~\ref{lemma:deviation_gain_rank}), we can calculate their rank.
    \begin{align}
        \text{rank}\left(T^\text{WSCE}_p\right) = \text{rank}\left(T^\text{CE}_p\right) = \text{rank}\left(T^\text{CCE}_p\right) = |\mathcal{A}| - |\mathcal{A}_{-p}| \qquad \forall p \in [1, N]
    \end{align}
    In general, the payoff, $G_p$, can be full rank, $\mathbb{R}^{|\mathcal{A}|}$. But after matrix multiplying with the operator matrix, which is only rank $|\mathcal{A}| - |\mathcal{A}_{-p}|$, the resulting deviation gains can be at most rank $|\mathcal{A}| - |\mathcal{A}_{-p}|$. Any linear equilibrium-invariant transform that reduces the space of games by more than $|\mathcal{A}_{-p}|$ would imply an operator matrix $T_p$ with rank less than $|\mathcal{A}| - |\mathcal{A}_{-p}|$, thereby reducing the number of inequality constraints that define the equilibrium. Therefore, any linear equilibrium-invariant transform can only reduce the space of games by at most $|\mathcal{A}_{-p}|$ without changing the set of equilibria for all games. The offset component of the affine transform reduces the degrees of freedom by $|\mathcal{A}_{-p}|$ and, by Theorem~\ref{thoerem:affine_transform}, does not change the deviation gains.
\end{proof}

\paragraph{Reversible Deviation Gains}
In general, the mapping from payoffs to deviation gains is irreversible because it is not a full-rank linear operation: there are many possible games that result in the same deviation gains. However, in the equilibrium-invariant embedding, there is a one-to-one mapping, and therefore it is possible to reverse the procedure and find the invariant embedding from the deviation gains.
\begin{theorem}[Reversible Deviation Gains] \label{theorem:reversible_gains}
    The equilibrium-invariant embedding can be recovered from the deviation gains.
    \begin{align}
        G^\text{equil}_p(a) %
        &= -\frac{1}{|\mathcal{A}_p|} \sum_{a'_p} A^\text{WSCE}_p(a'_p, a_p, a_{-p}) %
        = -\frac{1}{|\mathcal{A}_p|} \sum_{a'_p, a''_p} A^\text{CE}_p(a'_p, a''_p, a) %
        = -\frac{1}{|\mathcal{A}_p|} \sum_{a'_p} A^\text{CCE}_p(a'_p, a)
    \end{align}
\end{theorem}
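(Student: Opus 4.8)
The plan is to reduce all three claimed identities to a single elementary algebraic fact about the WSCE deviation gain and then carry out that one computation. First I would show the three right-hand sides coincide. Writing $a = (a_p, a_{-p})$, the CCE gain of Equation~\eqref{eq:cce_def} satisfies $A^\text{CCE}_p(a'_p, a) = G_p(a'_p, a_{-p}) - G_p(a) = A^\text{WSCE}_p(a'_p, a_p, a_{-p})$, so the CCE sum is literally the WSCE sum. The CE gain of Equation~\eqref{eq:ce_def} vanishes unless $a''_p = a_p$, hence $\sum_{a'_p, a''_p} A^\text{CE}_p(a'_p, a''_p, a) = \sum_{a'_p} A^\text{WSCE}_p(a'_p, a_p, a_{-p})$. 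Thus it suffices to establish the WSCE form of the identity.

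Second, I would expand the WSCE sum directly from Equation~\eqref{eq:wsce_def}: $-\frac{1}{|\mathcal{A}_p|}\sum_{a'_p}\bigl[G_p(a'_p, a_{-p}) - G_p(a_p, a_{-p})\bigr] = G_p(a_p, a_{-p}) - \frac{1}{|\mathcal{A}_p|}\sum_{a'_p} G_p(a'_p, a_{-p})$, where the second summand is constant in $a'_p$ so summing $|\mathcal{A}_p|$ copies cancels the prefactor. This is exactly the mean-centering (offset) step in the definition of the embedding in Equation~\eqref{eq:inv_embedding_function}. It remains only to connect this to $G^\text{equil}_p$: by Theorem~\ref{thoerem:affine_transform} the deviation gains are unchanged by the additive offset $b_p(a_{-p})$ and are scaled by $1/Z$ under the positive rescaling, so the gains appearing in the statement, namely those of the embedded game $G^\text{equil}_p$, equal $1/Z$ times those of the raw payoff $G_p$. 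Equivalently, running the second step with $G^\text{equil}_p$ in place of $G_p$ and using that $G^\text{equil}_p$ already has zero mean over player $p$'s strategies collapses the identity to $G^\text{equil}_p(a) = G^\text{equil}_p(a)$; in the raw-payoff reading the same steps produce the mean-centered payoff, which division by $Z$ maps onto the oblique manifold.

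The only thing requiring care, rather than a genuine obstacle, is bookkeeping: making explicit that $A$ in the statement denotes the deviation gains of the embedded game (equivalently, tracking the $1/Z$ factor throughout), and invoking the convention $\|\boldsymbol{0}\| = 1$ so that $Z$, and hence the recovery map, is well defined in the trivial case. The underlying algebra is a two-line computation once the reduction in the first paragraph is in hand.
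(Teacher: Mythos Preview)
Your proposal is correct and takes essentially the same approach as the paper: both arguments amount to averaging the deviation-gain definition over $a'_p$, observing that the mean-over-$a_p$ term vanishes for the already-centered embedding, and then collapsing the CE and WSCE sums to the CCE form via the indicator structure of $A^\text{CE}$. The only cosmetic difference is that the paper starts from the CCE expression and reduces the other two to it, whereas you start from WSCE; your explicit remark on whether the gains are those of the raw or embedded game is a useful clarification the paper leaves implicit.
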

\begin{proof}
    Recall the definition of the CCE deviation gains (Equation~\eqref{eq:cce_def}). Take the mean over the player's own deviation strategies, $a'_p \in \mathcal{A}_p$, and rearrange.
    \begin{align}
        A^\text{CCE}_p(a'_p, a) &= G_p(a'_p, a_{-p}) - G_p(a) \implies &
        G_p(a) &=  \underbrace{\frac{1}{|\mathcal{A}_p|}\sum_{a'_p} G_p(a'_p, a_{-p})}_{\substack{\text{Zero when zero-mean}}} - \frac{1}{|\mathcal{A}_p|}\sum_{a'_p} A^\text{CCE}_p(a'_p, a)
    \end{align}
    When the payoffs are invariant embeddings, $G^\text{equil}_p(a)$, the mean term is zero, and the proof is concluded for CCEs. Noting that $A^\text{CCE}_p(a'_p, a) = \sum_{a''_p \in \mathcal{A}_p} A^\text{CE}_p(a'_p, a''_p, a)$, we have a similar solution for CEs. Noting that $A^\text{WSCE}_p(a'_p, a''_p, a_{-p}) = \sum_{a_p \in \mathcal{A}_p} A^\text{CE}_p(a'_p, a''_p, a)$, we have a similar solution for WSCEs.
\end{proof}
This result is not directly utilized in this paper but we include it for several reasons. Firstly, it improves intuition about the space of games and how payoffs relate to the definition of equilibria and their constraints. Secondly, it provides further evidence of the fundamental nature of the equilibrium-invariant embedding. Finally, it opens up a path of research in inverse game theory, which studies the recovery of payoffs from observed equilibrium behaviour.

\paragraph{Sampling Equilibrium-Invariant Embedding}
It is easy to uniformly sample over the invariant embedding (Algorithm~\ref{alg:invariant_embedding_sample}) and trivial games (Algorithm~\ref{alg:trivial_sample}), where $\mathcal{N}(0, 1)$ is a zero-mean unit-variance normal distribution. This sampling approach is principled because it samples games that cover all interesting strategic interactions. Less principled ways of sampling (for example a uniform distribution over entries in the payoff, $G_p(a) = \mathcal{U}(0, 1)$), are common in the literature, but may not cover the strategic space of games evenly. We propose that the equilibrium-invariant embedding should be used for evaluating equilibrium solvers and producing training and testing datasets.

\begin{figure}[t]
\noindent\begin{minipage}[t]{\textwidth}
    \vspace{0pt}  
    \centering
    \begin{minipage}[t]{.49\textwidth}
        \vspace{0pt}  
        \begin{algorithm}[H]
            \caption{Equilibrium-Invariant Embedding Sampling}\label{alg:invariant_embedding_sample}
            \KwData{$|\mathcal{A}_1|, ..., |\mathcal{A}_N|$}
            \KwResult{$G^\text{equil}$}
            $N \gets \text{len}(|\mathcal{A}_1|, ..., |\mathcal{A}_N|)$\;
            \For{$p \gets [1,N]$}{
                $\text{assert}~~|\mathcal{A}_p| \geq 2$\;
                $G_p(a) \gets \mathcal{N}(0, 1)$\;
                $G_p(a) \gets G_p(a) - \frac{1}{|\mathcal{A}_p|} \sum_{a_p \in \mathcal{A}_p} G_p(a_p, a_{-p})$\;
                $G_p(a) \gets \frac{G_p(a)}{||G_p(a)||_2}$\;
            }
            $G^\text{equil} \gets \text{concat}(G_1, ..., G_N)$\;
        \end{algorithm}
    \end{minipage} \hfill
    \begin{minipage}[t]{.49\textwidth}
        \vspace{0pt}  
        \begin{algorithm}[H]
            \caption{Trivial Embedding Sampling}\label{alg:trivial_sample}
            \KwData{$|\mathcal{A}_1|, ..., |\mathcal{A}_N|$}
            \KwResult{$G^\text{trivial}$}
            $N \gets \text{len}(|\mathcal{A}_1|, ..., |\mathcal{A}_N|)$\;
            \For{$p \gets [1,N]$}{
                $\text{assert}~~|\mathcal{A}_p| \geq 2$\;
                $b_p(a_{-p}) \gets \mathcal{N}(0, 1)$\;
                $G_p(a) \gets b_p(a_{-p})$\;
                $G_p(a) \gets \frac{G_p(a)}{||G_p(a)||_2}$\;
            }
            $G^\text{tri} \gets \text{concat}(G_1, ..., G_N)$\;
        \end{algorithm}
    \end{minipage}
\end{minipage}
\end{figure}

\subsection{Better-Response Embedding}

Previously we showed that payoff scaling (within the affine game transform) is equilibrium-invariant, here we show that a novel per-strategy-scale transform is better-response-invariant\footnote{Better-response invariance implies best-response invariance.}. This transform results in reciprocal scaled corresponding equilibrium in the transformed game for $\epsilon$-NEs, $\epsilon$-WSNEs, $\epsilon$-CEs, and $\epsilon$-WSCEs.
\begin{definition}[Best-Response-Invariant]
    \begin{align}
        \arg \max_{a_p} G_p(a_p, a_{-p}) = \arg \max_{a_p} \hat{G}_p(a_p, a_{-p}) \quad \forall a_{-p} \in \mathcal{A}_{-p}
    \end{align}
\end{definition}
\begin{definition}[Better-Response-Invariant\footnote{Technically, we require $\arg \sort$ to split ties between payoffs by action index.}]
    \begin{align}
        \arg \sort_{a_p} G_p(a_p, a_{-p}) = \arg \sort_{a_p} \hat{G}_p(a_p, a_{-p}) \quad \forall a_{-p} \in \mathcal{A}_{-p}
    \end{align}
\end{definition}
\begin{theorem}[Per-Strategy-Scale Transform] \label{thm:per_strategy_scale_transform}
$\epsilon$-NE, $\epsilon$-WSNE, $\epsilon$-CE, $\epsilon$-WSCE are better-response-invariant under positive per-strategy-scale of each player's payoff which results in reciprocal per-strategy-scale ($s_p \rightarrow s_p(a_p)$) of the equilibria. Concretely, when
\begin{align}
    G_p(a) \to \hat{G}_p(a) = \left(\otimes_{q \in -p} s_q(a_q) \right) G_p(a), \text{ and } \epsilon^{WSCE}_p \to \hat{\epsilon}^{WSCE}_p(\sigma(a)) = \frac{\epsilon^{WSCE}_p}{Z_{-p}(a''_p)} \text{ or } \epsilon^\text{CE}_p \to \hat{\epsilon}^\text{CE}_p(\sigma(a)) = \frac{\epsilon^\text{CE}_p}{Z s_p(a''_p)},
\end{align}
an equilibrium in the original game has a corresponding equilibrium,
\begin{align}
    \sigma(a) \to \hat{\sigma}(a) = \frac{1}{Z \left(\otimes_{p} s_p(a_p) \right)} \sigma(a),
\end{align}
in the transformed game, where $Z = \sum_{a \in \mathcal{A}}\frac{\sigma(a)}{\otimes_p s_p(a_p)}$ and $Z_{-p}(a''_p) = \sum_{a_{-p}} \frac{\sigma(a''_p,a_{-p})}{\otimes_{-p} s_p(a_p) }$.
\end{theorem}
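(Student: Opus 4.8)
The plan is to establish the theorem's three claims in turn: (i) that the per-strategy-scale transform is better-response-invariant, (ii) that the stated map $\sigma \mapsto \hat\sigma$ sends equilibria of $G$ to equilibria of $\hat G$, and (iii) that the transformed approximation parameter has the stated form. Throughout I would argue at the level of deviation gains, mirroring the proof of Theorem~\ref{thoerem:affine_transform}.

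Claim (i) is immediate. For a fixed opponent profile $a_{-p}$ the multiplier $\prod_{q \in -p} s_q(a_q)$ is a positive constant independent of $a_p$, and scaling a finite tuple of reals by a common positive constant leaves its sorted order unchanged (with the index tie-break built into the definition), so $\arg\sort_{a_p}\hat G_p(a_p,a_{-p}) = \arg\sort_{a_p}G_p(a_p,a_{-p})$ for every $a_{-p}$; best-response invariance is the top entry of this list.

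For (ii) and (iii), first note that because the multiplier does not depend on $a_p$, the WSCE gain scales as $\hat A^\text{WSCE}_p(a'_p,a''_p,a_{-p}) = \big(\prod_{q\in -p}s_q(a_q)\big)A^\text{WSCE}_p(a'_p,a''_p,a_{-p})$, and by the case definitions following Equations~\eqref{eq:ce_def}--\eqref{eq:cce_def} the CE and NE gains inherit the same per-$a_{-p}$ scaling (the CCE gain scales too, but CCE is excluded from the statement because its constraint sums the surviving terms over all recommended actions, so the per-action weight cannot be pulled out as a single constant). Next substitute $\hat\sigma(a) = \sigma(a)/\big(Z\prod_p s_p(a_p)\big)$ into the transformed constraint and use $\prod_p s_p(a_p) = s_p(a''_p)\prod_{q\in -p}s_q(a_q)$ on the surviving terms $a_p = a''_p$. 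The factor $\prod_{q\in -p}s_q(a_q)$ in the denominator of $\hat\sigma$ cancels the same factor multiplying the gain, giving for CE $\sum_a \hat\sigma(a)\hat A^\text{CE}_p(a'_p,a''_p,a) = \tfrac{1}{Z s_p(a''_p)}\sum_a\sigma(a)A^\text{CE}_p(a'_p,a''_p,a)$, and for WSCE the analogous identity with the conditional $\hat\sigma(a_{-p}|a''_p)$ renormalized, which is where the $Z_{-p}(a''_p)$ of the statement enters. Since the original inner sum is bounded by $\epsilon_p$ (or by $0$), so is the transformed sum after this positive rescaling, yielding the claimed $\hat\epsilon_p$; at $\epsilon_p = 0$ the equilibrium sets coincide exactly. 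One must also check $\hat\sigma \in \Delta^{|\mathcal{A}|-1}$ --- nonnegativity is inherited since every $s_q>0$, and $Z$ is precisely the normalizer, positive because $\sigma$ sums to one --- and, for NE/WSNE, that $\hat\sigma(a) = \tfrac1Z\prod_p\big(\sigma_p(a_p)/s_p(a_p)\big)$ still factorizes, with $Z = \prod_p Z_p$ and $Z_p = \sum_{a_p}\sigma_p(a_p)/s_p(a_p)$, so the product structure required by Equations~\eqref{eq:ne_def}--\eqref{eq:wsne_def} is preserved.

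I expect the $\epsilon$ bookkeeping to be the main obstacle. Unlike the affine transform, the scale factor depends on the opponents' actions, so the image game's ``amount of approximation'' is not a single constant but a $\sigma$- and $a''_p$-dependent quantity obtained by averaging the original per-profile scalings against $\sigma$; keeping straight which of $Z$, $Z_{-p}(a''_p)$, and $s_p(a''_p)$ appears, and whether the WSCE conditional needs its own renormalization, is the fiddly part. The cleanest route is to carry everything in the joint form $\sum_a\sigma(a)A_p(\cdot)\le\epsilon_p$ for CE and only pass to conditionals at the end for WSCE; the remaining manipulations are linear and routine.
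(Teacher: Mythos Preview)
Your proposal is correct and follows essentially the same route as the paper: compute how the deviation gains scale under the transform, substitute $\hat\sigma$ into the transformed CE/WSCE constraints, use that $A^\text{CE}_p$ is supported only on $a_p=a''_p$ to cancel the $s_p$ factors, and then observe that factorizability is preserved for NE/WSNE. You are in fact slightly more careful than the paper in explicitly verifying that $\hat\sigma$ is a valid distribution and that the product structure survives (the paper handles these in one sentence), and your parenthetical explanation of why CCE fails anticipates the paper's separate counterexample.
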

\begin{proof}
Consider the effect of the transforms on the deviation gains.
\begin{subequations}
\begin{align}
    A^\text{WSCE}_p(a'_p, a''_p, a_{-p}) &\to & \hat{A}^\text{WSCE}_p(a'_p, a''_p, a_{-p}) &= \left( \otimes_{q \in -p} s_q(a_q) \right) A^\text{WSCE}_p(a'_p, a''_p, a_{-p}) \\
    A^\text{CE}_p(a'_p, a''_p, a) &\to & \hat{A}^\text{CE}_p(a'_p, a''_p, a) &= \left( \otimes_{q \in -p} s_q(a_q) \right) A^\text{CE}_p(a'_p, a''_p, a)
\end{align}
\end{subequations}

Substitute the transformed deviation gains and approximations into the definition of CE (Equation~\ref{eq:ce_def}). This can be shown to be equivalent to a definition consisting of the untransformed game.
\begin{subequations}
\begin{align}
    \smashoperator{\sum_{a \in \mathcal{A}}} \hat{\sigma}(a) \hat{A}^\text{CE}_p(a'_p, a''_p, a) &\leq \hat{\epsilon}_p & &\forall p \in [1,N], a''_p \neq a'_p  \in \mathcal{A}_p \\
    \smashoperator{\sum_{a \in \mathcal{A}}} \left[\frac{1}{Z \left(\otimes_{p} s_p(a_p) \right)} \sigma(a) \right] \left[ \left( \otimes_{q \in -p} s_q(a_q) \right) A^\text{CE}_p(a'_p, a''_p, a) \right] &\leq \frac{1}{Z s_p(a''_p)}\epsilon_p & &\forall p \in [1,N], a''_p \neq a'_p  \in \mathcal{A}_p \displaybreak[1] \\
    \smashoperator{\sum_{a \in \mathcal{A}}} \frac{1}{Z s_p(a_p)} \sigma(a) A^\text{CE}_p(a'_p, a''_p, a) &\leq \frac{1}{Z s_p(a''_p)}\epsilon_p & &\forall p \in [1,N], a''_p \neq a'_p  \in \mathcal{A}_p \\
    \smashoperator{\sum_{a \in \mathcal{A}}} \sigma(a) A^\text{CE}_p(a'_p, a''_p, a) &\leq \epsilon_p & &\forall p \in [1,N], a''_p \neq a'_p  \in \mathcal{A}_p
\end{align}
\end{subequations}
The $s_p(a_p)$ and $s_p(a''_p)$ terms cancel by substituting the $a_p$ variable for $a''_p$ in the LHS, which is permitted by checking the definition of $A^\text{CE}(a'_p,a''_p,a)$ (Equation~\ref{eq:ce_def}). Therefore we have proved that if $\sigma(a)$ is an equilibrium in the untransformed game, $\hat{\sigma}(a)$ is an equilibrium in the transformed game, and $\hat{\sigma}(a)$ can be calculated directly from $\sigma(a)$ and $s_p(a_p)$. Now consider the WSCE definition (Equation~\ref{eq:wsce_def}).
\begin{subequations}
\begin{align}
\smashoperator{\sum_{a_{-p} \in \mathcal{A}_{-p}}} \hat{\sigma}(a_{-p}|a''_p) \hat{A}^\text{WSCE}_p(a'_p, a''_p, a_{-p}) &\leq \hat{\epsilon}_p & &\forall p \in [1,N], a''_p \neq a'_p  \in \mathcal{A}_p \\
\smashoperator{\sum_{a_{-p} \in \mathcal{A}_{-p}}} \hat{\sigma}(a''_p,a_{-p}) \hat{A}^\text{WSCE}_p(a'_p, a''_p, a_{-p}) &\leq \hat{\sigma}(a''_p) \hat{\epsilon}_p & &\forall p \in [1,N], a''_p \neq a'_p  \in \mathcal{A}_p \displaybreak[1] \\
\smashoperator{\sum_{a_{-p} \in \mathcal{A}_{-p}}} \frac{1}{Z s_p(a''_p)} \sigma(a''_p, a_{-p}) A^\text{WSCE}_p(a'_p, a''_p, a_{-p}) &\leq \left( \sum_{a_{-p}} \frac{\sigma(a''_p,a_{-p})}{Z s_p(a''_p) \left(\otimes_{-p} s_p(a_p) \right)} \right) \hat{\epsilon}_p & &\forall p \in [1,N], a''_p \neq a'_p  \in \mathcal{A}_p \\
\smashoperator{\sum_{a_{-p} \in \mathcal{A}_{-p}}} \sigma(a''_p, a_{-p}) A^\text{WSCE}_p(a'_p, a''_p, a_{-p}) &\leq \epsilon_p & &\forall p \in [1,N], a''_p \neq a'_p  \in \mathcal{A}_p
\end{align}
\end{subequations}
The definition of NE and WSNE is the same as CE and WSCE, but with the additional constraint that the joint factorizes: $\sigma(a) = \otimes_p \sigma_p(a_p)$. This additional constraint does not affect the proof above, so the result also holds for NE and WSNE. The payoffs are only scaled over other players' strategies, and these scales are positive, which implies better-response-invariance.
\end{proof}

Notably, Theorem~\ref{thm:per_strategy_scale_transform} does not hold for CCEs.
\begin{theorem}[Per-strategy-scaling CCE counterexample] \label{thm:per_strategy_scale_transform_cce_counterexample}
    Per-strategy-scaling of each player's payoff does not result in reciprocal per-strategy-scaling of CCEs.
\end{theorem}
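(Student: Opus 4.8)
This statement should be read as: the reciprocal-scaling rule of Theorem~\ref{thm:per_strategy_scale_transform}, $\hat{\sigma}(a) = \sigma(a)/(Z \otimes_p s_p(a_p))$, which there sends $\text{CE}(G)$ bijectively onto $\text{CE}(\hat{G})$, fails in general to send $\text{CCE}(G)$ into $\text{CCE}(\hat{G})$. I would prove this by an explicit counterexample, but first isolate why the CE argument does not carry over. Substituting $\hat{\sigma}$ and $\hat{A}^\text{CCE}_p(a'_p, a) = (\otimes_{q \in -p} s_q(a_q)) A^\text{CCE}_p(a'_p, a)$ into the CCE definition (Equation~\eqref{eq:cce_def}) at $\epsilon = 0$, the transformed constraint collapses to $\sum_a \frac{\sigma(a)}{s_p(a_p)} A^\text{CCE}_p(a'_p, a) \le 0$. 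In the CE proof the weight $1/s_p(a_p)$ factored out because $A^\text{CE}_p(a'_p, a''_p, a)$ vanishes unless $a_p = a''_p$, pinning $a_p$ to a single value on the support; for CCEs, $A^\text{CCE}_p(a'_p, a)$ vanishes only when $a_p = a'_p$, so $1/s_p(a_p)$ is a genuinely non-constant reweighting of the $\sigma$-average of deviation gains, which can flip the sign of an otherwise-satisfied inequality --- but only if some player has at least three strategies (when $|\mathcal{A}_p| = 2$ there is a unique ``other'' action, $1/s_p(a_p)$ is constant on the support, and the CE argument survives verbatim).

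This tells me where to look: a game with a three-strategy player carrying a large scale on the strategies where its deviation gain is negative. Concretely I would take a $3 \times 2$ game, player $1$ with strategies $\{A, B, C\}$ and player $2$ with $\{L, R\}$, $G_1(A, \cdot) = (0,0)$, $G_1(B, \cdot) = (1,-1)$, $G_1(C, \cdot) = (-1,1)$, and any $G_2$ for which the $\sigma$ below remains a CCE (e.g.\ $G_2 \equiv 0$); and the scaling $s_1(A) = s_1(C) = 1$, $s_1(B) = 10$, $s_2 \equiv 1$, so that $\hat{G}_1 = G_1$ and only $\hat{G}_2$ changes. Writing out player $1$'s three CCE inequalities for $G$, the joint $\sigma$ with $(\sigma(A,L), \sigma(A,R), \sigma(B,L), \sigma(B,R), \sigma(C,L), \sigma(C,R)) = (0.1, 0.3, 0.3, 0, 0.2, 0.1)$ lies in $\text{CCE}(G)$ (the deviate-to-$B$ inequality binds, the deviate-to-$A$ and deviate-to-$C$ inequalities are slack). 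Forming $\hat{\sigma}(a) = \sigma(a)/(Z s_1(a_1))$ and evaluating the deviate-to-$A$ inequality of player $1$ in $\hat{G} = (G_1, \hat{G}_2)$, the $1/s_1(a_1)$ reweighting divides the helpful $(B,L)$ contribution by $10$, leaving the residual $+0.07 > 0$, so $\hat{\sigma} \notin \text{CCE}(\hat{G})$. Since the scaling is positive and acts per player-strategy it is still better-response-invariant, so this exhibits a best-response-invariant transform that is not CCE-invariant, which proves the theorem.

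The main obstacle is conceptual, not computational: it is tempting to believe the proof of Theorem~\ref{thm:per_strategy_scale_transform} transfers verbatim, because $A^\text{CCE}$ and $A^\text{CE}$ pick up exactly the same multiplicative factor under the transform. The real work is recognizing that it does not --- that the CCE deviation-gain sum is not concentrated on a single value of $a_p$ --- and reading off the minimal setting ($|\mathcal{A}_1| \ge 3$, plus scales that down-weight precisely the actions on which the deviation gain is negative) in which a slack CCE inequality is tipped over. Once that is in hand, fixing the payoffs, the scales, and the CCE point is a short check; the only care needed is to keep $\sigma$ inside $\text{CCE}(G)$ while the reweighted deviate-to-$A$ gain changes sign.
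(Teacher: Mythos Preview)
Your proof is correct and the verification checks out: $\sigma$ is a CCE of $G$ (player~1's deviate-to-$B$ constraint binds at $0$, the others are slack at $-0.2$ and $-0.4$; player~2's constraints are vacuous), and the reweighted deviate-to-$A$ sum for player~1 is $-0.03 + 0.2 - 0.1 = 0.07 > 0$, so $\hat{\sigma} \notin \text{CCE}(\hat{G})$.

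Your route differs from the paper's. The paper exhibits a $3 \times 3$ scaled rock--paper--scissors pair with a concrete CCE and scales $s_2 = [\tfrac{1}{2}, \tfrac{1}{4}, 1]$ to land on standard RPS, then observes that the reciprocally-scaled joint is not a CCE there. You instead use a $3 \times 2$ game with player~2 trivial and scale $s_1$, and you front-load a diagnosis of \emph{why} the CE argument breaks: the CCE deviation gain is not supported on a single $a_p$, so $1/s_p(a_p)$ is a genuine reweighting that can flip a slack inequality --- and this can only happen when $|\mathcal{A}_p| \ge 3$. That observation is exactly right and is something the paper does not spell out; it also correctly anticipates Remark~\ref{remark:two_strategy_cce}. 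Your example is smaller and requires checking only one player's constraints, which is a mild gain in economy; the paper's example is more recognizable and has both players nontrivial. One small slip: with $G_2 \equiv 0$ and $s_2 \equiv 1$ you actually have $\hat{G}_2 = s_1(a_1) \cdot 0 = 0 = G_2$ as well, so $\hat{G} = G$ exactly, not ``only $\hat{G}_2$ changes.'' This does not hurt the argument --- if anything it sharpens it, since the per-strategy-scale transform leaves the game literally unchanged yet the reciprocal rule still maps a CCE to a non-CCE.
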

\begin{proof}
    Consider the two-player three-strategy game, scaled-rock-paper-scissors, $(G_1, G_2)$, which has a CCE (amongst others) at $\sigma$ defined below.
    \begin{align}
        G_1 = \begin{bmatrix}
            0 & 4 & -1 \\
            -2 & 0 & 1 \\
            2 & -4 & 0
        \end{bmatrix} \quad %
        G_2 = \begin{bmatrix}
            0 & -1 & 1 \\
            1 & 0 & -1 \\
            -1 & 1 & 0
        \end{bmatrix} \quad %
        \sigma = \begin{bmatrix}
            0 & \frac{1}{4} & \frac{1}{12} \\
            \frac{1}{4} & 0 & \frac{1}{12} \\
            0 & 0 & \frac{1}{3}
        \end{bmatrix}
    \end{align}
    Now lets scale this game by $s_1 = [1, 1, 1]$ and $s_2 = [\frac{1}{2}, \frac{1}{4}, 1]$ to arrive at the familiar rock-paper-scissors game, $(\hat{G}_1, G_2)$.
    \begin{align}
        \hat{G}_1 = \begin{bmatrix}
            0 & 1 & -1 \\
            -1 & 0 & 1 \\
            1 & -1 & 0
        \end{bmatrix} \quad %
        G_2 = \begin{bmatrix}
            0 & -1 & 1 \\
            1 & 0 & -1 \\
            -1 & 1 & 0
        \end{bmatrix} \quad %
        \hat{\sigma} = \frac{1}{Z} \begin{bmatrix}
            0 & 1 & \frac{1}{12} \\
            \frac{1}{2} & 0 & \frac{1}{12} \\
            0 & 0 & \frac{1}{3}
        \end{bmatrix}
    \end{align}
    Note that the scaled version of $\sigma$, $\hat{\sigma}$, which is calculated according to Theorem~\ref{thm:per_strategy_scale_transform}, is not an equilibrium in the scaled game. Therefore, by counterexample, Theorem~\ref{thm:per_strategy_scale_transform} does not hold for CCEs in general.
\end{proof}

Fortunately, for two-strategy games, CEs and CCEs are equivalent \citep{monnot2017limits}. We make use of this property when deriving embeddings for 2×2 games.
\begin{remark}[Two-Strategy (C)CE Equivalence] \label{remark:two_strategy_cce}
    For n-player games with two strategies per player, $\epsilon$-CCEs and $\epsilon$-CEs are equivalent.
\end{remark}

Embedding and distance metrics for n-player games can be readily defined using the per-strategy-scale transform. However for two-player games, using zero approximation $\epsilon=0$, there is a natural embedding definition.
\begin{definition}[Two-Player Better-Response-Invariant Embedding] \label{def:two_player_better_response_invariant_embedding}
    \begin{subequations}
    \begin{align} \label{eq:two_player_better_response_invariant_embedding}
        G^\text{res}_1(a_1, a_2) = \frac{1}{Z(a_2)} \left( G_1(a_1, a_2) - \frac{1}{|\mathcal{A}_1|} \sum_{a_1} G_p(a_1, a_2) \right) \qquad Z(a_2) = \left\| G_1(:, a_2) - \frac{1}{|\mathcal{A}_1|} \sum_{a_1} G_1(a_1, a_2) \right\|_F \\
        G^\text{res}_2(a_1, a_2) = \frac{1}{Z(a_1)} \left( G_1(a_1, a_2) - \frac{1}{|\mathcal{A}_2|} \sum_{a_2} G_p(a_1, a_2) \right) \qquad Z(a_1) = \left\| G_1(a_1, :) - \frac{1}{|\mathcal{A}_2|} \sum_{a_2} G_1(a_1, a_2) \right\|_F
    \end{align}
    \end{subequations}
\end{definition}
\begin{definition}[Two-Player Better-Response-Invariant Distance Metric] \label{def:two_player_better_response_invariant_distance_metric}
    \begin{align} \label{eq:two_player_better_response_invariant_distance_metric}
        d^\text{res} \left(G^A, G^B \right) = \sqrt{\sum_{a_2} \arccos \left(\sum_{a_1} G^\text{A,res}_1(a_1, a_2) G^\text{B,res}_p(a_1, a_2)) \right)^2 %
        + %
        \sum_{a_1} \arccos \left(\sum_{a_2} G^\text{A,res}_1(a_1, a_2) G^\text{B,res}_p(a_1, a_2)) \right)^2}
    \end{align}
\end{definition}
Therefore, the better-response-invariant game embedding is an oblique manifold (a product manifold of $|\mathcal{A}_1| + |\mathcal{A}_2|$ unit spheres). For each other player's strategy for each player's payoff, the slice of the embedding is a point on the surface of one of these spheres. The distance between two games is the norm of the arc lengths between the two points on each of these spheres. Therefore the maximum distance between two games is $\sqrt{|\mathcal{A}_1| + |\mathcal{A}_2|}\pi$.

\subsection{Symmetric Embedding}

The order of the strategies in a normal-form game is arbitrary, therefore games identical up to strategy permutations could be considered equal. Furthermore, if the role of the player is not important\footnote{Role may be important if each player has access to different strategies or different numbers of strategies.}, identity up to player permutation can also be considered. If we define a canonical ordering of the strategies, then we reduce the area of the equilibrium-invariant embedding that we need to consider. One such ordering could be defined as follows. Firstly, for each player, $p$, independently sort the elements over all other players strategies, $a_{-p}$, and then lexicographically sort over the player's own strategies, $a_p$, to obtain an order permutation, $\tau^*_p(a_p)$. Secondly, to get the order of players, sort each player's whole payoff and then lexicographically sort over players to get a player order permutation, $\omega^*(p)$.
\begin{subequations}
\begin{align}
    G'_p(a_p, a_{-p}) &= \sort_{a_{-p}} G_p(a_p, a_{-p}) \quad \forall p & \tau^*_p(a_p) &= \arg \lexsort_{a_p} G'_p(a_p, a_{-p}) \quad \forall p \\
    G''_p(a) &= \sort_{a} G_p(a) \quad \forall p & \omega^*(p) &= \arg \lexsort_p G''_p(a)
\end{align}
\end{subequations}
Partial orderings occur when strategies have equal payoff, therefore even if a permutation is only partially ordered, the resulting payoffs will be unique. These permutations can be used to define another embedding: the \emph{symmetric game embedding}.
\begin{definition}[Equilibrium-Symmetric Game Embedding]
    \begin{align} \label{eq:sym_embedding_function}
        G^\text{sym}_p(a) = G_{\omega^*(p)}(\tau^*_{\omega^*(p)}(a_{\omega^*(p)}), ...)%
        \qquad %
        \epsilon^\text{sym}_p = \hat{\epsilon}_{\omega^*(p)}
    \end{align}
\end{definition}

Symmetries do not reduce the number of degrees of freedom, but do reduce the volume of games by exploiting symmetry in their definitions. There are $\prod_p \left(|\mathcal{A}_p|! \right)$ such strategy permutation symmetries and $N!$ player permutation symmetries in a normal form game. These symmetries should be used in conjunction with either equilibrium-invariant or better-response-invariant embeddings.
\begin{definition}[Equilibrium-Symmetric Distance Metric]
    \begin{align} \label{eq:symmetric_distance_metric}
        d^\text{sym} \left((G^A, \epsilon^A), (G^B, \epsilon^B) \right) = \min_{\tau_p(a_p), \omega(p)} \sqrt{ \sum_p \arccos \left(\sum_a G^\text{A,equil}_{\omega(p)}(\tau_{\omega(p)}(a_{\omega(p)}), ...) G^\text{B,equil}_p(a)\right)^2} +  \sqrt{\sum_p \left(\epsilon^\text{A,equil}_{\omega(p)} - \epsilon^\text{B,equil}_p \right)^2}
    \end{align}
\end{definition}

\section{2×2 Equilibrium-Invariant Embedding}

We now focus on the metric spaces of 2×2 games.

\subsection{Deriving the 2×2 Equilibrium-Invariant Embedding}

2×2 games have a particularly efficient equilibrium-invariant embedding parameterized by only two variables. This embedding has a distance metric equivalent to the one defined above, so is also a metric space over 2×2 games.

\begin{theorem}[2×2 Equilibrium-Invariant Embedding] \label{thm:equilibrium_invariant_embedding}
All nontrivial 2×2 game payoff matrices, $G_p$, can be mapped to payoff matrices, $G^\text{equil}_p$, parameterized by only two variables, without altering the equilibria of the game. The mapping is given by Equations~\eqref{eq:inv_embedding_p1} and \eqref{eq:inv_embedding_p2}, where $\theta_1 + \frac{\pi}{4} = \text{arctan2}(g_1^{AA}-g_1^{BA}, g_1^{AB} - g_1^{BB})$ and $\theta_2 + \frac{\pi}{4} = \text{arctan2}(g_2^{AA}-g_2^{AB}, g_2^{BA} - g_2^{BB})$.
\begin{subequations}
\begin{align}
    G_1 =& \begin{bmatrix} \label{eq:inv_embedding_p1}
        g_1^{AA} & g_1^{AB} \\
        g_1^{BA} & g_1^{BB} \\
    \end{bmatrix}
    \to G^\text{equil}_1(\theta_1) = \begin{bmatrix}
        \phantom{+} \frac{1}{\sqrt{2}} \sin(\theta_1 + \frac{\pi}{4}) & \phantom{+} \frac{1}{\sqrt{2}} \cos(\theta_1 + \frac{\pi}{4}) \\
        - \frac{1}{\sqrt{2}} \sin(\theta_1 + \frac{\pi}{4}) & - \frac{1}{\sqrt{2}} \cos(\theta_1 + \frac{\pi}{4}) \\
    \end{bmatrix} \\
    G_2 =& \begin{bmatrix} \label{eq:inv_embedding_p2}
        g_2^{AA} & g_2^{AB} \\
        g_2^{BA} & g_2^{BB} \\
    \end{bmatrix}
    \to G^\text{equil}_2(\theta_2) = \begin{bmatrix}
        \phantom{+} \frac{1}{\sqrt{2}} \sin(\theta_2 + \frac{\pi}{4}) & - \frac{1}{\sqrt{2}} \sin(\theta_2 + \frac{\pi}{4}) \\
        \phantom{+} \frac{1}{\sqrt{2}} \cos(\theta_2 + \frac{\pi}{4}) & - \frac{1}{\sqrt{2}} \cos(\theta_2 + \frac{\pi}{4}) \\
    \end{bmatrix}
\end{align}
\end{subequations}
\end{theorem}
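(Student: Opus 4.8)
The plan is to recognise Theorem~\ref{thm:equilibrium_invariant_embedding} as nothing more than the equilibrium-invariant embedding of Equation~\eqref{eq:inv_embedding_function} written out explicitly for a generic 2×2 game. Because that embedding is an instance of the affine transform of Theorem~\ref{thoerem:affine_transform} — subtract, for each opponent strategy, the mean over the player's own strategies ($b_p$), then rescale by the positive constant $1/Z$ ($s_p$) — equilibrium-invariance is inherited for free, and the only work is the bookkeeping that turns the four centered, normalised entries of each player's payoff into the one-parameter trigonometric forms, together with checking that "nontrivial" (Definition~\ref{def:nontrivial_payoff}) is precisely the condition under which $Z\neq 0$.

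First I would treat player~1, whose own strategies index the rows, so the mean is taken down each column. Subtracting $\tfrac12(g_1^{Aa_2}+g_1^{Ba_2})$ from column $a_2$ collapses the centered matrix to $\tfrac12\begin{bmatrix} u & v \\ -u & -v\end{bmatrix}$ with $u=g_1^{AA}-g_1^{BA}$ and $v=g_1^{AB}-g_1^{BB}$, whose Frobenius norm is $Z=\tfrac{1}{\sqrt2}\sqrt{u^2+v^2}$. Dividing by $Z$ and then writing $u=\sqrt{u^2+v^2}\,\sin(\theta_1+\tfrac\pi4)$, $v=\sqrt{u^2+v^2}\,\cos(\theta_1+\tfrac\pi4)$ — which is exactly the relation $\theta_1+\tfrac\pi4=\text{arctan2}(g_1^{AA}-g_1^{BA},\,g_1^{AB}-g_1^{BB})$ claimed in the statement — reproduces Equation~\eqref{eq:inv_embedding_p1}. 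The computation for player~2 is identical after swapping the roles of the two indices: player~2's own strategies index the columns, so the mean is taken along each row, giving the centered matrix $\tfrac12\begin{bmatrix} w & -w \\ x & -x\end{bmatrix}$ with $w=g_2^{AA}-g_2^{AB}$, $x=g_2^{BA}-g_2^{BB}$, norm $\tfrac{1}{\sqrt2}\sqrt{w^2+x^2}$, and the analogous polar substitution $\theta_2+\tfrac\pi4=\text{arctan2}(w,x)$ yields Equation~\eqref{eq:inv_embedding_p2}.

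To close, I would observe that $Z=0$ for player~1 iff $u=v=0$, i.e. iff $g_1^{AA}=g_1^{BA}$ and $g_1^{AB}=g_1^{BB}$, which says $G_1(a_1,a_{-1})$ does not depend on $a_1$ — exactly the trivial-payoff condition of Definition~\ref{def:trivial_payoff} — and symmetrically for player~2; hence under the nontriviality hypothesis every division above is legitimate and $\text{arctan2}$ is well defined. Applying Theorem~\ref{thoerem:affine_transform} with $b_p$ the (scaled) negative per-opponent-strategy mean and $s_p=1/Z>0$ then certifies that $G^\text{equil}_p$ has the same set of ($\epsilon=0$) equilibria as $G_p$, which finishes the proof.

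There is no real obstacle here beyond care with conventions: one has to keep straight that "average over a player's own strategies" means averaging down columns for player~1 but along rows for player~2 (the two payoff indices play asymmetric roles), and that the constant $+\tfrac\pi4$ phase appearing in the definitions of $\theta_1$ and $\theta_2$ is simply absorbed into the polar-coordinate substitution and carries no analytic content. No inequalities or estimates are needed.
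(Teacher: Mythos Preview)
Your proposal is correct and follows essentially the same route as the paper: subtract the per-opponent-strategy mean, normalise by the Frobenius norm, and parametrise the resulting unit-norm, zero-column-sum (resp.\ zero-row-sum) matrix by a single angle via $\text{arctan2}$, invoking Theorem~\ref{thoerem:affine_transform} for equilibrium invariance. The paper's proof differs only cosmetically (slightly different intermediate notation and a less explicit treatment of the nontriviality/$Z=0$ case), so there is nothing to add.
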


\begin{proof}
Consider the payoff for player 1, $G_1$. First apply the offset of the affine invariant transformation, $G_1(a_1, a_2) \to \bar{G}_1(a_1, a_2) = G_1(a_1, a_2) + b_1(a_2)$, where the columns (other player strategies) of player 1's payoff matrix are normalized to zero-mean offset, $b_1(a_2) = - \frac{1}{|\mathcal{A}_1|} \sum_{a_1 \in \mathcal{A}_1} G_1(a_1, a_2)$, such that $b_1 = [-\frac{1}{2} g_1^{AA} - \frac{1}{2} g_1^{BA}, -\frac{1}{2} g_1^{AB} -\frac{1}{2} g_1^{BB}]$. Then, make a variable substitution with $g_1^{A-B,A} = g_1^{AA} - g_1^{BA}$ and $g_1^{A-B,B} = g_1^{AB} - g_1^{BB}$, which reduces the number of parameters needed to describe the payoff from $4 \to 2$.
\begin{align}
    \begin{bmatrix}
        g_1^{AA} & g_1^{AB} \\
        g_1^{BA} & g_1^{BB} \\
    \end{bmatrix}
    \to \bar{G}_1 = %
    \begin{bmatrix}
        \frac{1}{2} (g_1^{AA} - g_1^{BA}) & \frac{1}{2} (g_1^{AB} - g_1^{BB}) \\
        \frac{1}{2} (g_1^{BA} - g_1^{AA})  & \frac{1}{2} (g_1^{BB} - g_1^{BA}) \\
    \end{bmatrix} %
    = \begin{bmatrix}
        \phantom{+}\frac{1}{2} g_1^{A-B,A} & \phantom{+}\frac{1}{2} g_1^{A-B,B} \\
        -\frac{1}{2} g_1^{A-B,A} & -\frac{1}{2} g_1^{A-B,B} \\
    \end{bmatrix}
\end{align}

Now apply the scale of the affine invariant transform, a unit $L_2$ normalization, $\bar{G}_1(a_1, a_2) \to G^\text{equil}_1(a_1, a_2) = s_1 \bar{G}_1(a_1, a_2)$ where $s_1 = \frac{1}{||\bar{G}_1||_2}$, which ensures that the norm over all the elements in the payoff are equal to one. This is a valid transform as long as the payoff is nontrivial (nonzero after zero-mean offset).
\begin{align}
    \begin{bmatrix}
        \phantom{+}\frac{1}{2} g_1^{A-B,A} & \phantom{+}\frac{1}{2} g_1^{A-B,B} \\
        -\frac{1}{2} g_1^{A-B,A} & -\frac{1}{2} g_1^{A-B,B} \\
    \end{bmatrix} \to \begin{bmatrix}
        \phantom{+} \frac{1}{\sqrt{2}} \frac{g_1^{A-B,A}}{\sqrt{{g_1^{A-B,A}}^2 + {g_1^{A-B,B}}^2}} & \phantom{+} \frac{1}{\sqrt{2}} \frac{g_1^{A-B,B}}{\sqrt{{g_1^{A-B,A}}^2 + {g_1^{A-B,B}}^2}} \\
        - \frac{1}{\sqrt{2}} \frac{g_1^{A-B,A}}{\sqrt{{g_1^{A-B,A}}^2 + {g_1^{A-B,B}}^2}} & - \frac{1}{\sqrt{2}} \frac{g_1^{A-B,B}}{\sqrt{{g_1^{A-B,A}}^2 + {g_1^{A-B,B}}^2}} \\
    \end{bmatrix}
\end{align}

Note that the $L_2$ norm of the elements of this payoff now equal unity, so we now have the property that:
\begin{align}
    \left( \frac{g_1^{A-B,A}}{\sqrt{{g_1^{A-B,A}}^2 + g_1^{A-B,B}}} \right)^2
    + \left( \frac{g_1^{A-B,B}}{\sqrt{{g_1^{A-B,A}}^2 + {g_1^{A-B,B}}^2}} \right)^2 = 1
\end{align}

This is the equation of a unit circle, and the opposite and adjacent parameters can be represented as a single angle parameter $\theta_1$, where $\frac{\pi}{4}$ is an arbitrary offset chosen for visualization purposes. This further reduces the number of parameters needed to describe player 1's payoff from $2 \to 1$.
\begin{align}
    \tan\left(\theta_1 + \frac{\pi}{4}\right) = \frac{g_1^{A-B,A}}{g_1^{A-B,B}} \implies \theta_1 + \frac{\pi}{4} = \text{arctan2}(g_1^{A-B,A}, g_1^{A-B,B})
\end{align}
The function $\text{arctan2}$ is defined by \cite{organick1988_atan2}. Since the radius of the unit circle is 1, we can recover the elements of the payoff directly from $\theta_1$.
\begin{align}
    G^\text{equil}_1(\theta_1) = \begin{bmatrix}
        \phantom{+} \frac{1}{\sqrt{2}} \sin(\theta_1 + \frac{\pi}{4}) & \phantom{+} \frac{1}{\sqrt{2}} \cos(\theta_1 + \frac{\pi}{4}) \\
        - \frac{1}{\sqrt{2}} \sin(\theta_1 + \frac{\pi}{4}) & - \frac{1}{\sqrt{2}} \cos(\theta_1 + \frac{\pi}{4}) \\
    \end{bmatrix}
\end{align}

An equivalent set of reductions can be used for player 2, where $\theta_2 + \frac{\pi}{4} = \text{arctan2}(g_2^{A,A-B}, g_2^{B,A-B})$.
\begin{align}
    G_2 =& \begin{bmatrix}
        g_2^{AA} & g_2^{AB} \\
        g_2^{BA} & g_2^{BB} \\
    \end{bmatrix}
    \to G^\text{equil}_2(\theta_2) = \begin{bmatrix}
        \phantom{+} \frac{1}{\sqrt{2}} \sin(\theta_2 + \frac{\pi}{4}) & - \frac{1}{\sqrt{2}} \sin(\theta_2 + \frac{\pi}{4}) \\
        \phantom{+} \frac{1}{\sqrt{2}} \cos(\theta_2 + \frac{\pi}{4}) & - \frac{1}{\sqrt{2}} \cos(\theta_2 + \frac{\pi}{4}) \\
    \end{bmatrix}
\end{align}
\end{proof}

Therefore all nontrivial 2×2 games can be parameterized using two variables $(\theta_1, \theta_2)$ which geometrically describe points via angles on two circles \footnote{Defined as the product manifold of two circles. Topologically this is a torus. Other classifications \citep{robinsonandgoforth2005_topology_of_2x2_games_book} also have a similar torus topology.}. Several interesting structural properties emerge (Figure~\ref{fig:representation_summary}), which will be explained in the following sections. Similarly, the partially trivial invariant embedding can be visualized in a single dimension. Partially trivial games, where one of the players does not have any influence in the game, could be described by a single parameter, either $\theta_1$ or $\theta_2$, depending on which player contributes to the strategic dynamics. Partially trivial games can therefore be mapped to the \emph{partially trivial equilibrium-invariant embedding manifold}. Trivial games where no player participates are mapped to a singleton set consisting of the \nullgame~Null game.

\begin{figure}[p!]
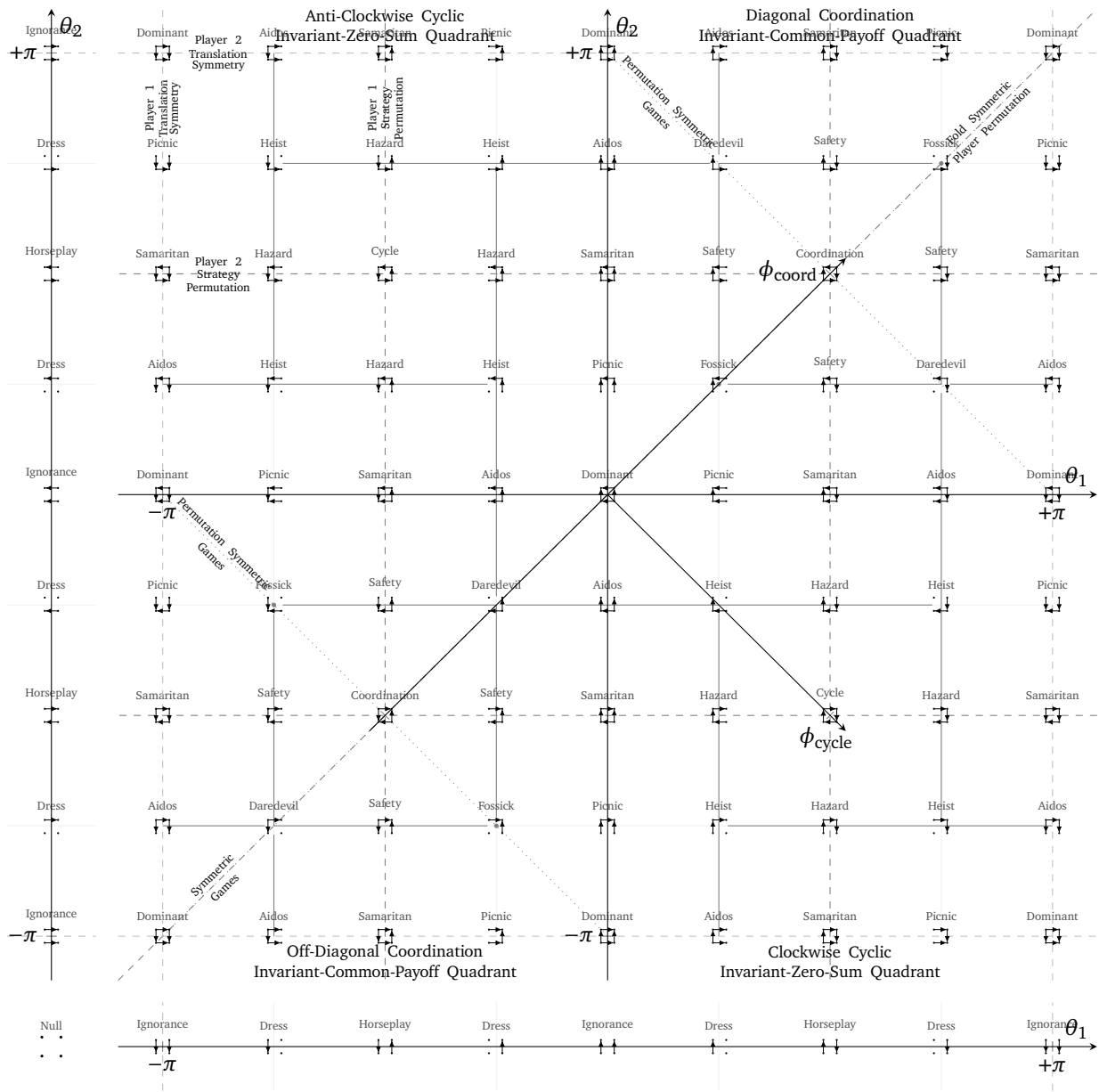

\centering
    \invarplot{}
    \caption{The 2×2 equilibrium-invariant embedding. Games can either be parameterized per-player, $\theta_1$ and $\theta_2$, or over game types, $\phi_{cycle}$ and $\phi_{coord}$. Symmetries (Table~\ref{tab:2x2_symmetries}) are indicated by dashed lines and allow reduction in area by a factor of up to $8$, depending on which symmetries are utilized. The strategy permutation equilibrium-symmetric embedding is north west patterned (\northwestpattern). The strategy and player permutation equilibrium-symmetric embedding is north east patterned (\northeastpattern). Permutation symmetry with zero-sum assumption is vertical patterned (\verticalpattern). The quadrants separate the space into cyclic and coordination regions. The top-left and bottom-right are cyclic (\cyclegame~Cycle), invariant-zero-sum regions. The top-right and bottom-left are coordination (\coordinationgame~Coordination), invariant-common-payoff regions. The dotted lines indicate symmetric games or symmetric games with permuted strategies. The light solid lines indicate equivalence class boundaries. The solid gray lines, points, and space between them indicate boundaries in the support of the equilibrium (see also Figure~\ref{fig:equilibrium_support}). The \dominantgame~Dominant, \picnicgame~Picnic, and \samaritangame~Samaritan have pure equilibria, the square regions are full-support (either \cyclegame~cycle or \coordinationgame~Coordination), the gray line boundaries have either two (\aidosgame~Aidos, \heistgame~Heist, \hazardgame~Hazard) or three (\safetygame~Safety, \daredevilgame~Daredevil) joint strategy support, and the gray points are diagonal or off-diagonal support (\fossickgame~Fossick). The 11 nontrivial best-response-invariant set of games is annotated with names suffixed with the equilibrium support. \dominantgame~Dominant, the only game that is both invariant-zero-sum and invariant-common-payoff, is at the origin. [Note: some PDF viewers incompletely render this figure.]}
    \label{fig:representation_summary}
\end{figure}

\subsection{Invariant-Zero-Sum and Invariant-Common-Payoff Quadrants}
\label{sec:pseudo_quadrants}

Two-player zero-sum games are well studied in the literature because they are easier to solve than their mixed-motive cousins. In particular for the NE, the problem can be expressed as a min-max optimization. If multiple equilibria exist, all equilibria have the same payoff, and they are known to be interchangeable (it does not matter which equilibrium the opponent chooses). But such games represent a small subset of general-sum games. If we could map a larger set of games onto the space of zero-sum games using invariant transformations, we could prove that they would share these interesting properties.

Two such approaches have already been explored in the literature. Firstly, \emph{constant-sum} games are scalar-offset transformations of zero-sum games, $G_p(a) \to \hat{G}_p(a) = G_p(a) + b_p$, which are well known to share zero-sum game's properties. Secondly, \emph{strictly-competitive} games \citep{adler2009_strict_comp_zero_sum} are scalar-offset and player-scale transformations of zero-sum games, $G_p(a) \to \hat{G}_p(a) = s_p G_p(a) + b_p$. However, there exists a larger space of games that can be mapped to zero-sum games.
\begin{definition}[Invariant-Zero-Sum Games]
    Invariant-zero-sum games are those that can be mapped to zero-sum games using equilibrium-invariant transformations, $G_p(a) \to \hat{G}_p(a) = s_p G_p(a) + b_p(a_{-p})$ such that $\sum_p \hat{G}_p(a) = 0 \,\, \forall a \in \mathcal{A}$.
\end{definition}
By definition, the equilibria of invariant-zero-sum games are identical to their zero-sum counterpart. Furthermore, the equilibria are still interchangeable in the transformed game, however not all equilibria will necessarily have the same payoffs. Other generalizations of zero-sum games including \emph{strategically zero-sum} \citep{moulin1978_cce}, \emph{best-response zero-sum} \citep{rosenthal1974_best_response_zero_sum}, and \emph{order zero-sum} \citep{shapley1963_order_zero_sum}, are all supersets of invariant zero-sum games. The diagonal quadrants (top-left and bottom-right) in the visualization (Figure~\ref{fig:representation_summary}) are \emph{invariant-zero-sum}.
\begin{theorem}[Invariant-Zero-Sum Quadrants]
    When $\sin(\theta_1) \sin(\theta_2) < 0$ in Equations~\eqref{eq:inv_embedding_p1} and \eqref{eq:inv_embedding_p2}, the respective game $(\hat{G}_1(\theta_1), \hat{G}_2(\theta_2))$ is invariant-zero-sum.
\end{theorem}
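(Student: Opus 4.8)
The plan is to write down an explicit equilibrium-invariant transform — positive per-player scales $s_1,s_2$ together with offsets $b_1(a_2)$ and $b_2(a_1)$ — whose image is a genuine zero-sum game, which is exactly what ``invariant-zero-sum'' demands. Set $M := s_1 G^\text{equil}_1(\theta_1) + s_2 G^\text{equil}_2(\theta_2)$. After the transform the pointwise sum $\hat G_1 + \hat G_2$ equals $M$ plus a matrix that is constant down each column (the contribution of $b_1$, which depends only on $a_2$) plus a matrix that is constant along each row (the contribution of $b_2$, which depends only on $a_1$). So the first reduction I would make is purely linear-algebraic: a $2\times2$ matrix $M$ can be cancelled by a row-offset-plus-column-offset (i.e.\ written as $M_{ij} = u_i + v_j$) if and only if its interaction term $\Delta(M) := M_{11} - M_{12} - M_{21} + M_{22}$ vanishes. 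The ``only if'' is trivial, and for ``if'' one takes $v = (M_{11}, M_{12})$, $u = (0,\, M_{21}-M_{11})$ and checks the last entry using $\Delta(M)=0$. Hence the game is invariant-zero-sum as soon as $s_1,s_2>0$ can be chosen so that $\Delta(M)=0$; the offsets are then $b_1(a_2) = -v_{a_2}$ and $b_2(a_1) = -u_{a_1}$.

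Next I would simplify $\Delta(M)$. Writing $c_p := \tfrac1{\sqrt2}\sin(\theta_p+\tfrac\pi4)$ and $d_p := \tfrac1{\sqrt2}\cos(\theta_p+\tfrac\pi4)$, Equations~\eqref{eq:inv_embedding_p1}--\eqref{eq:inv_embedding_p2} give
\begin{align*}
M = \begin{bmatrix} s_1 c_1 + s_2 c_2 & s_1 d_1 - s_2 c_2 \\ -s_1 c_1 + s_2 d_2 & -s_1 d_1 - s_2 d_2 \end{bmatrix},
\end{align*}
so a one-line expansion yields $\Delta(M) = 2 s_1 (c_1 - d_1) + 2 s_2 (c_2 - d_2)$. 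The half-angle identity $\sin x - \cos x = \sqrt2\,\sin(x - \tfrac\pi4)$ collapses $c_p - d_p$ to exactly $\sin\theta_p$, giving the clean form $\Delta(M) = 2\,(s_1 \sin\theta_1 + s_2 \sin\theta_2)$.

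Finally I would solve $s_1 \sin\theta_1 + s_2 \sin\theta_2 = 0$ with $s_1,s_2>0$: under the hypothesis $\sin\theta_1 \sin\theta_2 < 0$ the two sines are nonzero and of opposite sign, so $s_1 = |\sin\theta_2|$, $s_2 = |\sin\theta_1|$ works. Substituting these scales into the offsets constructed in the first step produces an affine (hence equilibrium-invariant, by Theorem~\ref{thoerem:affine_transform}) transform of $(G^\text{equil}_1(\theta_1), G^\text{equil}_2(\theta_2))$ whose image sums to zero in every cell — i.e.\ the original game is invariant-zero-sum.

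The individual steps are all short, so the only real content — and the place an error is most likely to slip in — is the bookkeeping in the first step: $G^\text{equil}_1$ is constant down columns while $G^\text{equil}_2$ is constant along rows, so $b_1$ and $b_2$ act on \emph{different} axes, and $\Delta$ is precisely the component of $M$ that neither offset can reach. Getting the signs and orientations of $u$ and $v$ right against the exact shapes in \eqref{eq:inv_embedding_p1}--\eqref{eq:inv_embedding_p2} is the main thing to watch; an alternative that sidesteps some of that is to invoke the known characterization of strategically zero-sum games~\citep{moulin1978_cce} and only verify the $2\times2$ specialization.
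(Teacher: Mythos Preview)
Your proof is correct and takes a somewhat different route from the paper's. The paper simply writes down explicit transforms --- keeping $s_1=1$, setting $s_2=-\sin\theta_1/\sin\theta_2$, and giving closed-form $b_1,b_2$ --- and then displays the resulting matrix to check it is zero-sum; the condition $\sin\theta_1\sin\theta_2<0$ drops out as the positivity requirement on $s_2$. You instead reduce the question to a clean linear-algebraic obstruction: a $2\times2$ matrix decomposes as row-offset plus column-offset iff its interaction term $\Delta$ vanishes, and for $M=s_1G^\text{equil}_1+s_2G^\text{equil}_2$ you compute $\Delta(M)=2(s_1\sin\theta_1+s_2\sin\theta_2)$. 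This buys you two things the paper's direct verification does not: it isolates $\sin\theta_p$ as precisely the ``non-potential'' component of each player's embedded payoff, and it makes the converse (games on the quadrant boundaries or in the other quadrants are \emph{not} invariant-zero-sum when both sines share a sign) immediate from the same computation, since $\Delta$ is an invariant of the affine-transform orbit. The paper's approach is shorter to state once the formulas are known; yours explains where those formulas come from.
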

\begin{proof}
    For a game to be invariant-zero-sum there have to exist invariant transforms $s_2 > 0$, $b_1(a_2)$, and $b_2(a_1)$ such that $G_1(a_1, a_2) + b_1(a_2) = -s_2 G_2(a_1, a_2) - b_2(a_1) \,\, \forall a_1, a_2$. Consider the transforms $s_2 = -\frac{\sin(\theta_1)}{\sin(\theta_2)}$, $b_1 = [-\frac{1}{2} \frac{\sin(\theta_1)}{\tan(\theta_2)},\frac{1}{2} \frac{\sin(\theta_1)}{\tan(\theta_2)}]$, and $b_2 = [-\frac{1}{2}\cos(\theta_1),\frac{1}{2}\cos(\theta_1)]$, which result in payoffs:
    \begin{align}
        \hat{G}_1(\theta_1,\theta_2) = -\hat{G}_2(\theta_1,\theta_2) = \begin{bmatrix}
            \phantom{-} \frac{1}{2} \frac{\sin(\theta_1)}{\tan(\theta_2)} + \frac{1}{\sqrt{2}} \sin(\theta_1 + \frac{\pi}{4}) & -\frac{1}{2} \frac{\sin(\theta_1)}{\tan(\theta_2)} + \frac{1}{\sqrt{2}} \sin(\theta_1 + \frac{\pi}{4}) \\
            \phantom{-} \frac{1}{2} \frac{\sin(\theta_1)}{\tan(\theta_2)} - \frac{1}{\sqrt{2}} \sin(\theta_1 + \frac{\pi}{4}) & - \frac{1}{2} \frac{\sin(\theta_1)}{\tan(\theta_2)} - \frac{1}{\sqrt{2}} \sin(\theta_1 + \frac{\pi}{4}) \\
        \end{bmatrix}
    \end{align}
    This is only a valid transformation when $s_2 > 0 \implies \frac{\sin(\theta_1)}{\sin(\theta_2)} < 0 \implies \sin(\theta_1) \sin(\theta_2) < 0$.
\end{proof}

A similar larger set of cooperative common-payoff games, $G_p(a) = G_q(a)~\forall p, q \in [1, N]$, can also be derived using invariant transforms.
\begin{definition}[Invariant-Common-Payoff Games]
    Invariant-common-payoff games are those that can be mapped to common-payoff games using equilibrium-invariant transformations, $G_p(a) \to \hat{G}_p(a) = s_p G_p(a) + b_p(a_{-p})$.
\end{definition}
These games correspond exactly to the set of 2×2 weighted potential games~\citep{monderer1996potential}, where the common payoff acts as the potential function. The off-diagonal quadrants (top-right and bottom-left) in the visualization are \emph{invariant-common-payoff}).

\begin{theorem}[Invariant-Common-Payoff Quadrants]
    When $\sin(\theta_1) \sin(\theta_2) > 0$ in Equations~\eqref{eq:inv_embedding_p1} and \eqref{eq:inv_embedding_p2}, the respective game $(\hat{G}_1(\theta_1), \hat{G}_2(\theta_2))$ is invariant-common-payoff.
\end{theorem}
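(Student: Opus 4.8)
The plan is to mirror the proof of the Invariant-Zero-Sum Quadrants theorem, replacing its target relation $\hat G_1 = -\hat G_2$ by $\hat G_1 = \hat G_2$. By the definition of invariant-common-payoff together with Theorem~\ref{thoerem:affine_transform} (which ensures that an offset over the other player's strategies plus a positive scale is equilibrium-invariant), it suffices to produce an offset $b_1(a_2)$ over player~2's strategies, an offset $b_2(a_1)$ over player~1's strategies, and a positive scale $s_2 > 0$ such that $G^\text{equil}_1(a_1,a_2) + b_1(a_2) = s_2\,G^\text{equil}_2(a_1,a_2) + b_2(a_1)$ for every $(a_1,a_2)$. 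I take $s_1 = 1$ without loss of generality, since dividing all transform parameters by $s_1$ preserves the identity and the positivity of the scales.

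First I would pin down the scale. Writing the desired identity entry by entry and forming the combination $(\mathrm{AA}-\mathrm{AB}) - (\mathrm{BA}-\mathrm{BB})$, every offset cancels and, using $\sin(\theta+\tfrac\pi4) - \cos(\theta+\tfrac\pi4) = \sqrt2\,\sin\theta$, one is left with $\sqrt2\,\sin\theta_1 = s_2\,\sqrt2\,\sin\theta_2$, i.e. $s_2 = \sin\theta_1/\sin\theta_2$. The hypothesis $\sin\theta_1\sin\theta_2 > 0$ makes this well-defined ($\sin\theta_2 \neq 0$) and positive, which is exactly when the construction is legal.

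Next I would exhibit the matching offsets — the sign-flipped analogue of the explicit $b_1,b_2$ of the zero-sum proof — and verify them by substitution:
\begin{align*}
    s_2 = \frac{\sin\theta_1}{\sin\theta_2}, \qquad b_1 = \left[\tfrac12\tfrac{\sin\theta_1}{\tan\theta_2},\ -\tfrac12\tfrac{\sin\theta_1}{\tan\theta_2}\right], \qquad b_2 = \left[\tfrac12\cos\theta_1,\ -\tfrac12\cos\theta_1\right].
\end{align*}
Plugging these into $G^\text{equil}_1 + b_1$ and into $s_2 G^\text{equil}_2 + b_2$ and simplifying with $\sin(\theta+\tfrac\pi4) = \tfrac1{\sqrt2}(\sin\theta+\cos\theta)$ and $\cos(\theta+\tfrac\pi4) = \tfrac1{\sqrt2}(\cos\theta-\sin\theta)$, both matrices reduce to the same common payoff
\begin{align*}
    \hat G_1 = \hat G_2 = \begin{bmatrix}
        \tfrac1{\sqrt2}\sin(\theta_1+\tfrac\pi4) + \tfrac12\tfrac{\sin\theta_1}{\tan\theta_2} & \tfrac1{\sqrt2}\cos(\theta_1+\tfrac\pi4) - \tfrac12\tfrac{\sin\theta_1}{\tan\theta_2} \\
        -\tfrac1{\sqrt2}\sin(\theta_1+\tfrac\pi4) + \tfrac12\tfrac{\sin\theta_1}{\tan\theta_2} & -\tfrac1{\sqrt2}\cos(\theta_1+\tfrac\pi4) - \tfrac12\tfrac{\sin\theta_1}{\tan\theta_2}
    \end{bmatrix},
\end{align*}
which amounts to four one-line trigonometric checks.

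Since the transform used is an affine (offset-plus-positive-scale) transform, it is equilibrium-invariant by Theorem~\ref{thoerem:affine_transform}, and it maps the game to one with equal player payoffs; hence $(G^\text{equil}_1(\theta_1), G^\text{equil}_2(\theta_2))$ is invariant-common-payoff whenever $\sin\theta_1\sin\theta_2 > 0$. I do not expect a genuine obstacle: the argument is the exact mirror of the already-established invariant-zero-sum case. The only point deserving a sentence is that the strict inequality in the hypothesis excludes the degenerate locus $\sin\theta_2 = 0$ where $s_2$ would be undefined — and this locus is precisely the $\sin\theta_1\sin\theta_2 = 0$ boundary between the two quadrant types, consistent with Figure~\ref{fig:representation_summary}.
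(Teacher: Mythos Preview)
Your proposal is correct and follows essentially the same approach as the paper: exhibit the explicit transforms $s_2 = \sin\theta_1/\sin\theta_2$, $b_1 = [\tfrac12\tfrac{\sin\theta_1}{\tan\theta_2},\,-\tfrac12\tfrac{\sin\theta_1}{\tan\theta_2}]$, $b_2 = [\tfrac12\cos\theta_1,\,-\tfrac12\cos\theta_1]$, verify the resulting common payoff matrix, and note that $s_2>0$ exactly when $\sin\theta_1\sin\theta_2>0$. Your version adds a helpful derivation of $s_2$ via the combination $(\mathrm{AA}-\mathrm{AB})-(\mathrm{BA}-\mathrm{BB})$ (the constant there is actually $2$, not $\sqrt2$, but it cancels on both sides so the conclusion is unaffected).
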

\begin{proof}
    For a game to be invariant-common-payoff there has to exist invariant transforms $s_2 > 0$, $b_1(a_2)$, and $b_2(a_1)$ such that $G_1(a_1, a_2) + b_1(a_2) = s_2 G_2(a_1, a_2) + b_2(a_1)$. Consider the transforms $s_2 = \frac{\sin(\theta_1)}{\sin(\theta_2)}$, $b_1 = [\frac{1}{2} \frac{\sin(\theta_1)}{\tan(\theta_2)},-\frac{1}{2} \frac{\sin(\theta_1)}{\tan(\theta_2)}]$, and $b_2 = [\frac{1}{2}\cos(\theta_1),-\frac{1}{2}\cos(\theta_1)]$, which result in payoffs:
    \begin{align}
        \hat{G}_1(\theta_1,\theta_2) = \hat{G}_2(\theta_1,\theta_2) = \begin{bmatrix}
            \phantom{-} \frac{1}{2} \frac{\sin(\theta_1)}{\tan(\theta_2)} + \frac{1}{\sqrt{2}} \sin(\theta_1 + \frac{\pi}{4}) & -\frac{1}{2} \frac{\sin(\theta_1)}{\tan(\theta_2)} + \frac{1}{\sqrt{2}} \sin(\theta_1 + \frac{\pi}{4}) \\
            \phantom{-} \frac{1}{2} \frac{\sin(\theta_1)}{\tan(\theta_2)} - \frac{1}{\sqrt{2}} \sin(\theta_1 + \frac{\pi}{4}) & - \frac{1}{2} \frac{\sin(\theta_1)}{\tan(\theta_2)} - \frac{1}{\sqrt{2}} \sin(\theta_1 + \frac{\pi}{4}) \\
        \end{bmatrix}
    \end{align}
    This is only a valid transformation when $s_2 > 0 \implies \frac{\sin(\theta_1)}{\sin(\theta_2)} > 0 \implies \sin(\theta_1) \sin(\theta_2) > 0$.
\end{proof}

The game at the origin (that we name \dominantgame~Dominant) is both invariant-zero-sum and invariant-common-payoff. Along with the permuted versions of this game, this is the only game with this property.
\begin{theorem}[Dominant Special Case]
    When $\sin(\theta_1) = 0$ and $\sin(\theta_2) = 0$ in Equations~\eqref{eq:inv_embedding_p1} and \eqref{eq:inv_embedding_p2}, the respective game $(\hat{G}_1(\theta_1), \hat{G}_2(\theta_2))$ is both invariant-zero-sum and invariant-common-payoff.
\end{theorem}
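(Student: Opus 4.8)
The plan is to treat this as a degenerate boundary case that the two preceding quadrant theorems do not cover. Those theorems build their transforms out of $s_2 = \pm \sin(\theta_1)/\sin(\theta_2)$, which is the indeterminate form $0/0$ precisely when $\sin(\theta_1)=\sin(\theta_2)=0$, so a separate, direct construction is needed at this point (and, since the angles only enter through $\sin(\theta_i)$, the same construction will cover the strategy/player-permuted copies of the origin game at once).

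First I would evaluate the embedding at these angles. When $\sin(\theta_1)=0$ we have $\sin(\theta_1+\tfrac{\pi}{4})=\cos(\theta_1+\tfrac{\pi}{4})=\tfrac{1}{\sqrt{2}}\cos(\theta_1)$, so Equation~\eqref{eq:inv_embedding_p1} collapses to a matrix with two identical columns: $G^\text{equil}_1$ depends only on player~1's own strategy. Symmetrically, when $\sin(\theta_2)=0$, Equation~\eqref{eq:inv_embedding_p2} has two identical rows and $G^\text{equil}_2$ depends only on player~2's own strategy. Write the resulting one-argument functions as $G^\text{equil}_1(a_1,a_2)=u(a_1)$ and $G^\text{equil}_2(a_1,a_2)=v(a_2)$; this decoupling is the structural heart of the statement (and explains the name \emph{Dominant}).

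Second, I would exhibit the required equilibrium-invariant transforms explicitly, taking all player-scales equal to $1$ (a valid positive scale). For the invariant-zero-sum claim, set $b_1(a_2)=-v(a_2)$ and $b_2(a_1)=-u(a_1)$; then $\hat{G}_1(a)+\hat{G}_2(a)=u(a_1)-v(a_2)+v(a_2)-u(a_1)=0$ for every $a$, so the transformed game is zero-sum. For the invariant-common-payoff claim, set $b_1(a_2)=v(a_2)$ and $b_2(a_1)=u(a_1)$; then $\hat{G}_1(a)=u(a_1)+v(a_2)=\hat{G}_2(a)$ for every $a$. In both cases $b_p$ depends only on $a_{-p}$, which is exactly the offset permitted by the affine transform of Theorem~\ref{thoerem:affine_transform}, so the equilibria are unchanged and both properties hold.

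I do not expect a genuine obstacle; the only thing to be careful about is the bookkeeping that an offset on player~1's payoff may legitimately depend on player~2's action (and vice versa), which is just the form of the affine offset. If one also wants the ``only such game'' remark from the surrounding text, it follows from the two quadrant theorems: any game with $\sin(\theta_1)\sin(\theta_2)\neq 0$ falls into exactly one of the two categories, which are mutually exclusive away from the origin (a genuinely competitive game carries a best-response cycle and hence admits no potential function), leaving the origin and its permutations as the unique overlap.
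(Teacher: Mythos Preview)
Your proof is correct and follows essentially the same approach as the paper: exhibit explicit affine offsets $b_p(a_{-p})$ (with unit scales) that turn the game into a zero-sum game, and separately into a common-payoff game. The paper simply instantiates this at $\theta_1=\theta_2=0$ with concrete numerical offsets, whereas you first abstract the embedding to the decoupled form $G_1^{\text{equil}}(a)=u(a_1)$, $G_2^{\text{equil}}(a)=v(a_2)$ and then read off the offsets $b_1=\mp v$, $b_2=\mp u$; this handles all four origin-permuted cases at once and makes the structural reason transparent, but the underlying argument is the same.
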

\begin{proof}
    Consider when $\theta_1 = \theta_2 = 0$. When using transforms $b_1 = [-\frac{1}{2}, \frac{1}{2}]$ and $b_2 = [-\frac{1}{2}, \frac{1}{2}]$ the game becomes zero-sum (Equation~\eqref{eq:pzs}). Additionally, when using transforms $b_1 = [\frac{1}{2}, -\frac{1}{2}]$ and $b_2 = [\frac{1}{2}, -\frac{1}{2}]$ the game becomes common-payoff (Equation~\eqref{eq:pcp}).
    
    \begin{subequations}
        \noindent\begin{minipage}[b]{0.48\textwidth}
            \begin{align} \label{eq:pzs}
                \hat{G}_1 = -\hat{G}_2 = \begin{bmatrix}
                    \phantom{+} 0 & \phantom{+} \frac{1}{2} \\
                    - \frac{1}{2} & \phantom{+} 1 \\
                \end{bmatrix}
            \end{align}
        \end{minipage}
        \hfill
        \begin{minipage}[b]{0.48\textwidth}
            \begin{align} \label{eq:pcp}
                \hat{G}_1 = \hat{G}_2 = \begin{bmatrix}
                    \phantom{+} 1 & \phantom{+}0 \\
                    \phantom{+} 0 & - 1 \\
                \end{bmatrix}
            \end{align}
        \end{minipage}
    \end{subequations}
\end{proof}

Therefore all symmetric 2×2 games are either invariant-zero-sum, invariant-common-payoff, or both. The borders between the quadrants are neither invariant-zero-sum or invariant-common-payoff. \dominantgame~Dominant is the only game which is both invariant-zero-sum and invariant-common-payoff.

\subsection{2×2 Equilibrium-Invariant Distance Metric}

\begin{theorem}[2×2 Equilibrium-Invariant Distance Metric]
    The distance metric between two 2×2 games is given by:
    \begin{align} \label{eq:2x2_distance_metric}
        d(\Theta^A, \Theta^B) = ||[\min(|\theta_1^A - \theta_1^B|, 2\pi - |\theta_1^A - \theta_1^B|), \min(|\theta_2^A - \theta_2^B|, 2\pi - |\theta_2^A - \theta_2^B|)]||_p
    \end{align}
\end{theorem}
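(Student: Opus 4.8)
The plan is to specialize the general equilibrium-invariant distance metric of Equation~\eqref{eq:distance_metric} to the two-variable parameterization of Theorem~\ref{thm:equilibrium_invariant_embedding}. Since the statement is phrased purely in terms of the angle parameters $\Theta^A=(\theta_1^A,\theta_2^A)$ and $\Theta^B=(\theta_1^B,\theta_2^B)$, we are implicitly comparing games with $\epsilon=0$ (and, as assumed throughout, identical triviality structure), so the $\epsilon$-dependent summand of Equation~\eqref{eq:distance_metric} vanishes and it suffices to evaluate $\sqrt{\sum_p \arccos\!\big(\sum_a G^\text{A,equil}_p(a)\, G^\text{B,equil}_p(a)\big)^2}$ on the embedded payoffs $G^\text{equil}_p(\theta_p^A)$ and $G^\text{equil}_p(\theta_p^B)$ given by Equations~\eqref{eq:inv_embedding_p1} and \eqref{eq:inv_embedding_p2}.

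First I would compute the per-player inner product. Substituting the embedding matrices, each of the four entries contributes a term of the form $\tfrac12\sin(\theta_p^A+\tfrac\pi4)\sin(\theta_p^B+\tfrac\pi4)$ or $\tfrac12\cos(\theta_p^A+\tfrac\pi4)\cos(\theta_p^B+\tfrac\pi4)$: the sign flips across the two rows (player~1) or the two columns (player~2) cancel pairwise, and the two $\tfrac1{\sqrt2}$ prefactors combine into $\tfrac12$. Collecting the two surviving copies of each term and applying the cosine angle-subtraction identity, the $\tfrac\pi4$ visualization offsets cancel and one obtains $\sum_a G^\text{A,equil}_p(a)\, G^\text{B,equil}_p(a)=\cos(\theta_p^A-\theta_p^B)$ for both $p=1$ and $p=2$. (Setting $\Theta^A=\Theta^B$ returns $1$, which re-confirms that each embedded payoff has unit Frobenius norm, consistent with the oblique-manifold interpretation.)

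Next I would simplify $\arccos\!\big(\cos(\theta_p^A-\theta_p^B)\big)$. Because $\theta_p^A,\theta_p^B\in[-\pi,\pi]$, the difference lies in $[-2\pi,2\pi]$, and on that interval $\arccos\circ\cos$ equals $|x|$ when $|x|\le\pi$ and $2\pi-|x|$ when $\pi\le|x|\le2\pi$ --- i.e.\ exactly $\min(|\theta_p^A-\theta_p^B|,\,2\pi-|\theta_p^A-\theta_p^B|)$, the arc-length (geodesic) distance between the two points on the circle $S^1$. Substituting this into the outer norm over the two players recovers the stated formula with $p=2$, which is precisely the specialization of Equation~\eqref{eq:distance_metric}; the more general $p$-norm version is then justified by recalling that the embedding is the product manifold $S^1\times S^1$, and that any $p$-norm of the vector of per-factor geodesic distances is again a metric on a finite product of metric spaces (triangle inequality from the per-circle triangle inequality together with Minkowski's inequality), with all such choices bi-Lipschitz equivalent.

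I expect the main obstacle to be the bookkeeping in the second step: carefully tracking the sign pattern of the two embedding matrices so that the cross terms collapse to a single cosine, and then pinning down the branch behaviour of $\arccos\circ\cos$ on $[-2\pi,2\pi]$ so that the wrap-around minimum --- rather than the naive $|\theta_p^A-\theta_p^B|$ --- appears. The remaining items (checking the metric axioms and noting independence from the arbitrary $\tfrac\pi4$ offset) are routine.
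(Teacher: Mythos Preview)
Your proposal is correct and follows essentially the same approach as the paper: substitute the parameterized embeddings into the inner product of Equation~\eqref{eq:distance_metric}, collapse to $\cos(\theta_p^A-\theta_p^B)$ via the cosine subtraction identity, then reduce $\arccos\circ\cos$ to the circular arc-length $\min(|\cdot|,2\pi-|\cdot|)$. You supply more detail on the sign bookkeeping, the branch analysis of $\arccos\circ\cos$, and the justification of the general $p$-norm, but the underlying argument is the same.
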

\begin{proof}
Consider player 1's pre-norm component of the distance metric (Equation~\eqref{eq:distance_metric}) between two parameterized games $\hat{G}_1^A(\theta_1^A)$ and $\hat{G}_1^B(\theta_1^B)$ (Equation~\eqref{eq:inv_embedding_p1}).
\begin{subequations}
\begin{align}
    v_1 &= \text{arccos} \left( \sin(\theta^A_1 + \frac{\pi}{4}) \sin(\theta^B_1 + \frac{\pi}{4}) + \cos(\theta^A_1 + \frac{\pi}{4}) \cos(\theta^B_1 + \frac{\pi}{4}) \right) \\
    &= \text{arccos} \left( \cos(\theta^A_1 - \theta^B_1) \right) \\
    &= \min(|\theta_1^A - \theta_1^B|, 2\pi - |\theta_1^A - \theta_1^B|)
\end{align}
\end{subequations}
A similar calculation can be made for player 2. Together, this results in Equation~\eqref{eq:2x2_distance_metric}.
\end{proof}

This definition of a distance metric on the 2×2 equilibrium-invariant embedding of games is natural. As established in Theorem~\ref{thm:equilibrium_invariant_embedding}, games with the same equilibria can be embeddings concisely represented by points on two independent unit-circles or equivalently by their angles, i.e., $\Theta = (\theta_1, \theta_2)$. Distance along a unit-circle is measured by arc length. Therefore, the natural way to measure distance between two games is to sum the arc lengths between their embeddings on both circles. More generally, we can consider measuring distances between games' representations using any $p$-norm. In summary, the distance, $d$, between two games can be found by a) converting the game to the equilibrium-invariant embedding, b) calculating the arc length between each component of the two embeddings, and c) finding a distance using any $p$-norm, where $\Theta^A = (\theta^A_1, \theta^A_2)$ and $\Theta^B = (\theta^B_1, \theta^B_2)$.

For all norms with $p<\infty$, there is a unique game which maximizes the distance to any other game. We call such a game the \emph{opposite game}, which can be calculated by translating the representation by a constant of $\pi$ around the circle.
\begin{align}
    \theta_1 \to \text{mod}(\theta_1 + 2\pi, 2\pi) - \pi \qquad \theta_2 \to \text{mod}(\theta_2 + 2\pi, 2\pi) - \pi
\end{align}
For example, the opposite of a  \cyclegame~clockwise Cycle game is an \equigame{-1}{1}{1}{-1}~anti-clockwise Cycle game, and the opposite of a \coordinationgame~diagonal Coordination game is an \equigame{-1}{1}{-1}{1}~off-diagonal (``anti'') Coordination game.

\section{2×2 Equilibrium-Symmetric Embedding}

Two common symmetries are strategy permutation and player permutation. The order of strategies and players in normal-form games is arbitrary, and permuting along these dimensions leads to strategically identical games. The area of the equilibrium-invariant embedding can be reduced by considering these symmetries (Table~\ref{tab:2x2_symmetries}). This reduced space, called the \emph{equilibirum-symmetric embedding}, is shaded in the visualization (Figure~\ref{fig:representation_summary}). It is easy to convert from the equilibrium-invariant embedding to the equilibrium-symmetric embedding (Algorithm~\ref{alg:equilibrium_symmetric_embedding}). If the roles of the players are important, only strategy permutation may be used, and the equilibrium-symmetric embedding will only reduce by a factor of four. The symmetries being utilized are usually clear from the context.

\subsection{Deriving the 2×2 Equilibrium-Symmetric Embedding}

\begin{theorem}[Equilibrium-Symmetric Embedding] \label{thm:equilibrium_symmetric_embedding}
    Strategy permutation and player permutation can reduce the area of the equilibrium-invariant embedding by a factor of eight when only considering a canonical ordering.
\end{theorem}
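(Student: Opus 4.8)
The plan is to realize the two relevant symmetries — strategy permutation and player permutation — as an explicit finite group acting on the torus of embeddings $(\theta_1,\theta_2)\in(\mathbb{R}/2\pi\mathbb{Z})^2$ from Theorem~\ref{thm:equilibrium_invariant_embedding}, show that this group has order eight, show it acts freely away from a measure-zero set, and conclude that any set of orbit representatives — in particular the one picked out by the canonical ordering — has one-eighth the area of the full embedding.

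First I would translate each generating symmetry into its action on $(\theta_1,\theta_2)$ using the characterization $\theta_1+\tfrac{\pi}{4}=\text{arctan2}(g_1^{A-B,A},g_1^{A-B,B})$ and $\theta_2+\tfrac{\pi}{4}=\text{arctan2}(g_2^{A,A-B},g_2^{B,A-B})$, noting first that the zero-mean-offset-and-normalize embedding commutes with permutations, so the embedding is permutation-equivariant. Permuting player 1's two strategies negates both $g_1^{A-B,A}$ and $g_1^{A-B,B}$ (rotating that arctan2 by $\pi$) while swapping $g_2^{A,A-B}\leftrightarrow g_2^{B,A-B}$ (a reflection, via $\text{arctan2}(x,y)=\tfrac{\pi}{2}-\text{arctan2}(y,x)$), so it acts as $\rho_1:(\theta_1,\theta_2)\mapsto(\theta_1+\pi,\,-\theta_2)$. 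Symmetrically, permuting player 2's strategies acts as $\rho_2:(\theta_1,\theta_2)\mapsto(-\theta_1,\,\theta_2+\pi)$. Permuting the two players transposes-and-relabels the payoff matrices, which via the same arctan2 formulas simply exchanges the two embedding circles, $\tau:(\theta_1,\theta_2)\mapsto(\theta_2,\theta_1)$.

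Next I would check the group structure: $\rho_1^2=\rho_2^2=\tau^2=\mathrm{id}$, the commutator $\rho_1\rho_2=\rho_2\rho_1$ is the point reflection $(\theta_1,\theta_2)\mapsto(\pi-\theta_1,\pi-\theta_2)$, and $\tau\rho_1\tau^{-1}=\rho_2$, so $\Gamma=\langle\rho_1,\rho_2,\tau\rangle\cong(\mathbb{Z}_2\times\mathbb{Z}_2)\rtimes\mathbb{Z}_2$ has order exactly eight — matching the count $\prod_p(|\mathcal{A}_p|!)\cdot N!=2\cdot2\cdot2$ of strategy- and player-permutation symmetries. Then I would verify the action is free on a full-measure set: each non-identity element of $\Gamma$ fixes only a finite union of circles and points on the torus (precisely the dashed loci in Figure~\ref{fig:representation_summary}), a null set, so generic orbits have eight distinct points; hence a set of orbit representatives tiles the torus with multiplicity one up to measure zero and has area $\tfrac{1}{8}(2\pi)^2$.

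Finally I would connect this to the canonical ordering: the $\sort$/$\lexsort$ construction preceding the statement selects, for each game, the strategy and player permutation putting the payoff tensors into a fixed canonical form, which is exactly a choice of one representative per $\Gamma$-orbit (ties, which occur only on the null fixed loci, are broken deterministically by index and do not affect area). Therefore the equilibrium-symmetric embedding is, up to that measure-zero boundary, a fundamental domain for $\Gamma$, so its area is one-eighth that of the equilibrium-invariant embedding; if only strategy permutations are permitted one drops $\tau$ and works with the order-four subgroup $\langle\rho_1,\rho_2\rangle$, giving the factor-of-four case. I expect the main obstacle to be the bookkeeping in the first step — tracking the arctan2 arguments through each permutation modulo $2\pi$ and nailing down the reflection identity and sign conventions — with a secondary subtlety in arguing that the fixed-point loci are genuinely null so the $|\Gamma|^{-1}$ area count is exact rather than merely an upper bound.
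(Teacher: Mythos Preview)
Your proposal is correct and follows essentially the same approach as the paper: the paper's proof is only a sketch that lists the three transformations $\rho_1,\rho_2,\tau$ in Table~\ref{tab:2x2_symmetries} (identical to yours) and asserts that each halves the area, composing to a factor of eight. Your version is considerably more rigorous --- deriving the maps from the $\text{arctan2}$ formulas, identifying the group structure, and handling the measure-zero fixed loci --- but the underlying idea is the same.
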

\begin{proof}[Proof Sketch]
    Symmetries (Table~\ref{tab:2x2_symmetries}), including player permutation and strategy permutation for each player, can be leveraged to reduce the representation space. Each of these reductions reduces the representation area by a factor of two, and when composed, results in a factor of eight reduction.
\end{proof}

\begin{table}[t]
    \centering
    \begin{tabular}{lll}
        Symmetry & Transformation Description & Transformation \\ \hline
        Player 1 Strategy Permutation & Translate on $\theta_1$, fold over $\theta_1 = 0$ & $(\theta_1, \theta_2) \to (\theta_1 + \pi, -\theta_2)$ \\
        Player 2 Strategy Permutation & Fold over $\theta_2 = 0$, translate on $\theta_2$ & $(\theta_1, \theta_2) \to (-\theta_1, \theta_2 + \pi)$ \\
        Player Permutation & Fold over $\theta_1 = \theta_2$ & $(\theta_1, \theta_2) \to (\theta_2, \theta_1)$ \\
    \end{tabular}
    \caption{Symmetries in 2×2 normal-form games used to derive the equilibrium-symmetric embedding.}
    \label{tab:2x2_symmetries}
\end{table}

\begin{algorithm}[t]
    \caption{Equilibrium-Symmetric Embedding}\label{alg:equilibrium_symmetric_embedding}
    \KwData{$\theta_1$, $\theta_2$}
    \KwResult{$\theta_1^*$, $\theta_2^*$}
    $\theta_1^*, \theta_2^* \gets \theta_1, \theta_2$\;
    \If{$\theta_1^* \leq -\frac{\pi}{2}$ or $\frac{\pi}{2} < \theta_1^*$}{
        \tcp{Permute Player 1's Strategies}
        $(\theta_1^*, \theta_2^*) \gets (\theta_1^* + \pi, -\theta_2^*)$\;
    }
    \If{$\theta_2^* \leq -\frac{\pi}{2}$ or $\frac{\pi}{2} < \theta_2^*$}{
        \tcp{Permute Player 2's Strategies}
        $(\theta_1^*, \theta_2^*) \gets (-\theta_1^*, \theta_2^* + \pi)$\;
    }
    \If{$\theta_2^* > \theta_1^*$}{
        \tcp{Permute Players}
        $(\theta_1^*, \theta_2^*) \gets (\theta_2^*, \theta_1^*)$\;
    }
\end{algorithm}

\subsection{2×2 Equilibrium-Symmetric Player-Agnostic Embedding}
\label{sec:game_embedding}

We can define a new parameterization of the equilibrium-invariant embedding, $(\phi_\text{cycle}, \phi_\text{coord})$, which is simply a change in basis from $(\theta_1, \theta_2)$. In the equilibrium-symmetric embedding, $0 \leq \phi_\text{cycle} \leq 1$ and $-1 < \phi_\text{coord} \leq 1$, with $\phi_\text{cycle} \pm \phi_\text{coord} \leq 1$. This basis describes properties of the game, rather than the payoff of each player. The conversion from the per-player basis and the game-property basis is simple.

\noindent\begin{minipage}[b]{0.4\linewidth}
\begin{subequations}
\begin{align}
    \phi_\text{coord} &= \frac{1}{\pi} \left( \theta_1 + \theta_2 \right) \\
    \phi_\text{cycle} &= \frac{1}{\pi} \left( \theta_1 - \theta_2 \right)
\end{align}
\end{subequations}
\end{minipage} ~
\begin{minipage}[b]{0.4\linewidth}
\begin{subequations}
\begin{align}
    \theta_1 &= \frac{\pi}{2}\left( \phi_\text{coord} + \phi_\text{cycle} \right) \\
    \theta_2 &= \frac{\pi}{2}\left( \phi_\text{coord} - \phi_\text{cycle} \right)
\end{align}
\end{subequations}
\end{minipage}

\noindent Therefore games can be described by their coordination (``common-payoff-ness'') and ``cyclicness'' (``zero-sum-ness'') respectively. The game at the origin, with zero coordination and zero cyclicness, is transitive (the \dominantgame~Dominant game).
\begin{subequations}
\begin{align}
    \hat{G}_1(\phi_\text{coord}, \phi_\text{cycle}) =& \begin{bmatrix}
        \phantom{+} \frac{1}{\sqrt{2}} \sin \left(\frac{\pi}{2}\left( \phi_\text{coord} + \phi_\text{cycle} \right) + \frac{\pi}{4}\right) & \phantom{+} \frac{1}{\sqrt{2}} \cos \left(\frac{\pi}{2}\left( \phi_\text{coord} + \phi_\text{cycle} \right) + \frac{\pi}{4}\right) \\
        - \frac{1}{\sqrt{2}} \sin \left(\frac{\pi}{2}\left( \phi_\text{coord} + \phi_\text{cycle} \right) + \frac{\pi}{4}\right) & - \frac{1}{\sqrt{2}} \cos \left(\frac{\pi}{2}\left( \phi_\text{coord} + \phi_\text{cycle} \right) + \frac{\pi}{4}\right) \\
    \end{bmatrix} \\\
    \hat{G}_2(\phi_\text{coord}, \phi_\text{cycle}) =& \begin{bmatrix}
        \phantom{+} \frac{1}{\sqrt{2}} \sin \left(\frac{\pi}{2}\left( \phi_\text{coord} - \phi_\text{cycle} \right) + \frac{\pi}{4}\right) & - \frac{1}{\sqrt{2}} \sin \left(\frac{\pi}{2}\left( \phi_\text{coord} - \phi_\text{cycle} \right) + \frac{\pi}{4}\right) \\
        \phantom{+} \frac{1}{\sqrt{2}} \cos \left(\frac{\pi}{2}\left( \phi_\text{coord} - \phi_\text{cycle} \right) + \frac{\pi}{4}\right) & - \frac{1}{\sqrt{2}} \cos \left(\frac{\pi}{2}\left( \phi_\text{coord} - \phi_\text{cycle} \right) + \frac{\pi}{4}\right) \\
    \end{bmatrix}
\end{align}
\end{subequations}

\subsection{2×2 Equilibrium-Symmetric Distance Metric}

A distance metric between games in the symmetric embedding can also be defined by enumerating all (up to eight) equivalent isomorphic games and finding the minimum distance between the closest pair. For example, some of the distances between points in the symmetric embedding are shown in Table~\ref{tab:equi_distance}.

\section{2×2 Best-Response-Invariant Embedding}

Using the better-response-invariant embedding (Definition~\ref{def:two_player_better_response_invariant_embedding}) we can derive a set of 15 fundamental games. For two-strategy games, the concepts of better-response and best-response are synonymous. Therefore we call this set of games the 2×2 best-response-invariant embedding. Accompanying this embedding are 15 equivalence classes: many equilibrium-invariant embeddings map to each best-response-invariant embedding. This distinction is subtle but important. Sometimes we refer to best-response-invariant embeddings, and sometimes we refer to the equivalence classes they belong to. The best-response-invariant embeddings can be thought of as canonical (fundamental) examples of games within each equivalence class.  The equilibria (WSNE, NE, WSCE, CE, and CCE) of all games within each equivalence class can be calculated through simple scaling of the equilibria of these fundamental games. Of the 15 equivalent game classes, 11 are nontrivial, 3 are partially trivial, and 1 is trivial. These 15 classes are the same as those proposed by \cite{borm1987_classification_of_2x2_games} and correspond to games with the same best-response dynamics. Our work improves upon Borm's classification because in our derivation, the best-response-invariant embedding inherits the distance metric from the equilibrium-invariant embedding. This allows us to situate these game classes within a metric space, which greatly improves the clarity of the embeddings.

\subsection{Deriving the Best-Response Embedding}

\begin{theorem}[2×2 Best-Response Embedding] \label{thm:fundamental_set}
All 2×2 game payoffs can be mapped to a fundamental set consisting of 81 games. After symmetry this reduces to 11 nontrivial games, 3 partially trivial games, and 1 trivial game, resulting in 15 fundamental games. One such mapping is:
\begin{subequations}
\begin{align}
    G_1 &= \begin{bmatrix}
        g_1^{AA} & g_1^{AB} \\
        g_1^{BA} & g_1^{BB} \\
    \end{bmatrix} \to \scriptstyle \left \{ %
        \begin{bmatrix}
            \scriptstyle +1 & \scriptstyle +1 \\
            \scriptstyle -1 & \scriptstyle -1 \\
        \end{bmatrix}, %
        \begin{bmatrix}
            \scriptstyle +1 & \scriptstyle \phantom{+}0 \\
            \scriptstyle -1 & \scriptstyle \phantom{+}0 \\
        \end{bmatrix}, %
        \begin{bmatrix}
            \scriptstyle +1 & \scriptstyle -1 \\
            \scriptstyle -1 & \scriptstyle +1 \\
        \end{bmatrix}, %
        \begin{bmatrix}
            \scriptstyle \phantom{+}0 & \scriptstyle +1 \\
            \scriptstyle \phantom{-}0 & \scriptstyle -1 \\
        \end{bmatrix}, %
        \begin{bmatrix}
            \scriptstyle \phantom{+}0 & \scriptstyle \phantom{+}0 \\
            \scriptstyle \phantom{-}0 & \scriptstyle \phantom{-}0 \\
        \end{bmatrix}, %
        \begin{bmatrix}
            \scriptstyle \phantom{+}0 & \scriptstyle -1 \\
            \scriptstyle \phantom{-}0 & \scriptstyle +1 \\
        \end{bmatrix}, %
        \begin{bmatrix}
            \scriptstyle -1 & \scriptstyle +1 \\
            \scriptstyle +1 & \scriptstyle -1 \\
        \end{bmatrix}, %
        \begin{bmatrix}
            \scriptstyle -1 & \scriptstyle \phantom{+}0 \\
            \scriptstyle +1 & \scriptstyle \phantom{-}0 \\
        \end{bmatrix}, %
        \begin{bmatrix}
            \scriptstyle -1 & \scriptstyle -1 \\
            \scriptstyle +1 & \scriptstyle +1 \\
        \end{bmatrix} %
    \right \} \\
    G_2 &= \begin{bmatrix}
        g_2^{AA} & g_2^{AB} \\
        g_2^{BA} & g_2^{BB} \\
    \end{bmatrix} \to \scriptstyle \left \{ %
        \begin{bmatrix}
            \scriptstyle +1 & \scriptstyle -1 \\
            \scriptstyle +1 & \scriptstyle -1 \\
        \end{bmatrix}, %
        \begin{bmatrix}
            \scriptstyle +1 & \scriptstyle -1 \\
            \scriptstyle \phantom{+}0 & \scriptstyle \phantom{+}0 \\
        \end{bmatrix}, %
        \begin{bmatrix}
            \scriptstyle +1 & \scriptstyle -1 \\
            \scriptstyle -1 & \scriptstyle +1 \\
        \end{bmatrix}, %
        \begin{bmatrix}
            \scriptstyle \phantom{+}0 & \scriptstyle \phantom{-}0 \\
            \scriptstyle +1 & \scriptstyle -1 \\
        \end{bmatrix}, %
        \begin{bmatrix}
            \scriptstyle \phantom{+}0 & \scriptstyle \phantom{+}0 \\
            \scriptstyle \phantom{-}0 & \scriptstyle \phantom{-}0 \\
        \end{bmatrix}, %
        \begin{bmatrix}
            \scriptstyle \phantom{+}0 & \scriptstyle \phantom{-}0 \\
            \scriptstyle -1 & \scriptstyle +1 \\
        \end{bmatrix}, %
        \begin{bmatrix}
            \scriptstyle -1 & \scriptstyle +1 \\
            \scriptstyle +1 & \scriptstyle -1 \\
        \end{bmatrix}, %
        \begin{bmatrix}
            \scriptstyle -1 & \scriptstyle +1 \\
            \scriptstyle \phantom{+}0 & \scriptstyle \phantom{-}0 \\
        \end{bmatrix}, %
        \begin{bmatrix}
            \scriptstyle -1 & \scriptstyle +1 \\
            \scriptstyle -1 & \scriptstyle +1 \\
        \end{bmatrix} %
    \right \}.
\end{align}
\end{subequations}
\end{theorem}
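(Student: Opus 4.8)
\emph{Proof plan.} The statement has two halves: first, exhibit a better-response-invariant transform that carries an arbitrary $2\times2$ game into one of $81$ canonical payoff pairs; second, show that the symmetries of Table~\ref{tab:2x2_symmetries} collapse these $81$ into $15$ classes with the stated triviality breakdown. For the first half I would argue column-by-column on $G_1$ and row-by-row on $G_2$. For a fixed $a_2$, the column $(g_1^{Aa_2},g_1^{Ba_2})$ is acted on by the affine offset $b_1(a_2)$ of Theorem~\ref{thoerem:affine_transform}, which makes it zero-mean of the form $(v,-v)$ with $v=\tfrac12(g_1^{Aa_2}-g_1^{Ba_2})$, and by a strictly positive scale, which is exactly the per-strategy scale $s_2(a_2)$ furnished by Theorem~\ref{thm:per_strategy_scale_transform} (with the leftover uniform factor absorbed by the player scale $s_1$). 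A positive scale cannot change $\operatorname{sign}(v)$, so each column normalizes to precisely one of $(+1,-1)$, $(0,0)$, $(-1,+1)$; this is the column-wise instance of the embedding in Definition~\ref{def:two_player_better_response_invariant_embedding}. Two columns give $3^2=9$ canonical $G_1$, the mirror construction on the rows of $G_2$ (driven by $s_1(a_1)$ and offsets $b_2(a_1)$) gives $9$ canonical $G_2$, hence $81$ games; since every transform used lies among those of Theorems~\ref{thoerem:affine_transform} and~\ref{thm:per_strategy_scale_transform}, and for two strategies better-response equals best-response, each game shares the best-response structure of its canonical representative.

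For the second half I would coordinatize the $81$ games by the sign pattern $t=(t_1^A,t_1^B,t_2^A,t_2^B)\in\{+,0,-\}^4$, where $t_1^{a_2}$ is the sign of $g_1^{Aa_2}-g_1^{Ba_2}$ and symmetrically for $t_2$. Pushing the three generators of Table~\ref{tab:2x2_symmetries} through this coordinatization yields player-$1$ strategy permutation $t\mapsto(-t_1^A,-t_1^B,t_2^B,t_2^A)$, player-$2$ strategy permutation $t\mapsto(t_1^B,t_1^A,-t_2^A,-t_2^B)$, and player permutation $t\mapsto(t_2^A,t_2^B,t_1^A,t_1^B)$, which together generate a group of order $8$ (dihedral, the player permutation conjugating the two strategy-permutation involutions into one another). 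I would then invoke the orbit-counting (Burnside) lemma: the identity fixes all $81$ patterns and a short case check gives fixed-point counts $3,3,9,9,3,3,9$ for the remaining seven elements, so the number of orbits is $(81+39)/8=15$. To finish, I would read off triviality (Definition~\ref{def:trivial_payoff}): $G_1$ is trivial iff $t_1^A=t_1^B=0$, and likewise for $G_2$, so the all-zero pattern is the unique trivial orbit, the $16$ patterns with exactly one player trivial fall into $3$ orbits, and the remaining $64$ patterns fall into $11$ orbits, giving the claimed $11+3+1$ split; the explicit lists in the statement are then just one chosen representative per orbit.

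The main obstacle is careful bookkeeping rather than any conceptual difficulty: one must confirm that the per-strategy scale of Theorem~\ref{thm:per_strategy_scale_transform} genuinely acts as an \emph{independent} positive scale on each column of $G_1$ and each row of $G_2$, so that all $3^2\cdot 3^2$ sign-pattern combinations are actually attained, and that the player-permutation generator — which transposes and swaps the two payoff matrices — induces exactly the stated map on sign patterns (a sign error here would corrupt the Burnside count). Once the order-$8$ action is pinned down, the orbit count and the nontrivial/partially-trivial/trivial classification are routine.
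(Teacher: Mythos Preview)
Your first half matches the paper's argument essentially verbatim: apply the zero-mean offset from Theorem~\ref{thoerem:affine_transform} to each column of $G_1$ (each row of $G_2$), then the per-strategy scale $s_2(a_2)$ (resp.\ $s_1(a_1)$) from Theorem~\ref{thm:per_strategy_scale_transform} to normalize each column (row) to its sign, with the convention $0/|0|=0$. The paper writes this as $G_1 \to [\operatorname{sign}(g_1^{A-B,A}),\operatorname{sign}(g_1^{A-B,B})]$-type matrices and gets the same $3^2\times 3^2=81$ count.

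Your second half is actually \emph{more} than the paper does. The paper's proof simply asserts ``the permutation symmetries discussed earlier facilitate the observation that only 15 of the games are unique up to symmetric equivalence,'' without carrying out any count. Your Burnside argument is a genuine strengthening: you coordinatize by the sign quadruple, verify the induced action of the three generators (which you have correctly computed), identify the group as dihedral of order $8$, and obtain $(81+3+3+9+9+3+3+9)/8=15$. Likewise the paper states the $64/16/1$ split into nontrivial/partially-trivial/trivial without computing the orbit counts $11/3/1$; your argument supplies that as well. So the approaches coincide on the reduction to $81$, but where the paper is content to assert the symmetry reduction, you give a self-contained enumeration that does not rely on having the explicit list in Table~\ref{tab:naming_games} to hand.
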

\begin{proof}
Using the per-strategy scale transform, with $b_1$ defined as the usual zero mean offset and $s_2=[|g_1^{A-B,A}|, |g_1^{A-B,B}|]$, we can derive the player embedding for player 1.
\begin{subequations}
\begin{align}
    G_1 &= \begin{bmatrix}
        g_2^{AA} & g_2^{AB} \\
        g_2^{BA} & g_2^{BB} \\
    \end{bmatrix} \to \begin{bmatrix}
        \phantom{+}\frac{1}{2} g_1^{A-B,A} & \phantom{+}\frac{1}{2} g_1^{A-B,B} \\
        -\frac{1}{2} g_1^{A-B,A} & -\frac{1}{2} g_1^{A-B,B} \\
    \end{bmatrix} \to \begin{bmatrix}
        \phantom{-}\frac{g_1^{A-B,A}}{|g_1^{A-B,A}|} & \phantom{-} \frac{g_1^{A-B,B}}{|g_1^{A-B,B}|} \\
        - \frac{g_1^{A-B,A}}{|g_1^{A-B,A}|} & - \frac{g_1^{A-B,B}}{|g_1^{A-B,B}|} \\
    \end{bmatrix} \\
    &= \begin{bmatrix}
        \{ +1, +1, +1, 0, 0, 0, -1, -1, -1 \} & \{ +1, 0, -1, +1, 0, -1, +1, 0, -1 \} \\
        \{ -1, -1, -1, 0, 0, 0, +1, +1, +1 \} & \{ -1, 0, +1, -1, 0, +1, -1, 0, +1 \} \\
    \end{bmatrix}
\end{align}
\end{subequations}
Dividing a number by its absolute value results in its sign, and adopting the convention (without consequence) that $\frac{0}{|0|} = 0$, we arrive at 9 payoff possibilities, one of which is the trivial zero payoff. We can follow a similar derivation for player 2. As a result, $9 \times 9 = 81$ games are possible in total. Of these 64 are nontrivial, 16 are partially trivial and 1 is trivial. The permutation symmetries discussed earlier facilitate the observation that only 15 of the games are unique up to symmetric equivalence.
\end{proof}

The best-response-invariant embedding set could have been defined using binary payoffs \citep{fishburn1990_binary_2x2_games} because we only need to establish preferences over the strategies. However we opt for ternary payoffs ($-1, 0, +1$) because they more clearly differentiate the three possibilities: preferring strategy $A$, preferring strategy $B$, or being indifferent. This preference ordering is equivalent to the best-response dynamics of each player. For two-strategy games, better-response and best-response dynamics are identical. Better-response invariance is closely related to the affine transform used to derive the equilibrium-invariant embedding \citep{morris2004_best_response_equivalence,ostrovski2013_fictitious_play_dynamics}. Equilibrium-invariance implies better-response-invariance, which in turn implies best-response-invariance.

\subsection{Equivalence Classes, Graph Representation, and Measure}

The equivalence classes of the best-response-invariant embedding are equivalent to considering orderings of each column in player 1's payoff, and each row in player 2's payoff (the best-response dynamics). By convention let the top-left joint strategy be $AA$ and the bottom right be $BB$, let player 1 be the row player and player 2 be the column player. This ordering can be simply visualized as a directed graph: player 1's column ordering can be indicated with vertical arrows (for example \equigame{-+}{+-}{00}{00} indicates player 1 prefers strategy $B$ when player 2 plays $A$, and $A$ when player 2 players $B$), and player 2's row ordering can be indicated with horizontal arrows (for example \equigame{00}{00}{-+}{-+}). When there is no preference, no edge need be drawn, \equigame{00}{00}{00}{00}. Taken together, we can visualize any 2×2 game. For example, Prisoner's Dilemma (Figure~\ref{fig:prisoners_dilemma}) would be denoted \equigame{-+}{-+}{-+}{-+}, Chicken (Figure~\ref{fig:chicken}), would be denoted \equigame{-+}{+-}{-+}{+-}, and Matching Pennies (Figure~\ref{fig:game_matching_pennies}), would be denoted \equigame{+-}{-+}{-+}{+-}. The set of directed graphs have 15 equivalence classes with respect to graph isomorphism.

Most of the games in the best-response-invariant embedding have measure-zero equivalence classes within the equilibrium-invariant embedding, meaning the probability they are randomly sampled is zero. There is a simple rule of thumb to determine the measure of an equivalence class: a) if no players are indifferent when best-responding, the game is positive measure, b) if one player is indifferent when best responding to one action, the class is zero-measure and is a ``boundary class'' appearing between two positive measure classes, and c) if both players are indifferent to one of the opponent's actions, the class is a ``point class'' appearing between four boundary classes. It is easy to visually inspect (Figure~\ref{fig:equilibrium_support}) that only 4 of the games have positive probability of being sampled: \samaritangame~Samaritan, \dominantgame~Dominant, \coordinationgame~Coordination, and \cyclegame~Cycle (Table~\ref{tab:naming_games}). This provides intuition to why the measure-zero games rarely appear in the literature. Our observation matches similar analysis made by \cite{borm1987_classification_of_2x2_games}.

\subsection{Naming the Best-Response-Invariant Embedding}

\cite{borm1987_classification_of_2x2_games} provided numerical classification for the 15 fundamental classes of games. We have already specified parameterized embeddings, $(\theta_1,\theta_2)$ or $(\phi_\text{coord}, \phi_\text{cycle})$, and a graphical representation. In addition, the set is small enough to benefit from a standardised naming scheme. We attempt to, as far as possible, follow established naming conventions and draw inspiration from previous work \citep{bruns2015_names_for_games}. Symmetric games (those that lie on $\theta_1 = \theta_2$ or $\phi_\text{cycle}=0$) are the most well-studied and have established common names. Only some non-symmetric games such as \penniesgame~Matching Pennies and Samaritan's Dilemma have been studied. Care was taken not to name classes after games that are similar but subtly different (e.g. \huntgame~Stag Hunt and \chickengame~Chicken were deliberately avoided). The majority of the games have strategies which players are indifferent between. These classes of games are little studied and therefore do not have common names. Table~\ref{tab:naming_games} shows the names chosen for the 15 fundamental games proposed in this work.

\begin{table}[t]
    \centering
    {\setlength{\tabcolsep}{4pt}
    \begin{tabular}{llllllrcclllll}
        G & Name & N & PN & MN & C & B & $\theta$ & $\phi$ & S & M & T & I &  Other Names \\ \hline
        \dominantgame & Dominant & Do & \eqsupport{1}{0}{0}{0} & & \dominantsupport & 5 & $\phantom{+}0,\phantom{+}0$ & $\phantom{+}0,\phantom{+}0$ & \checkmark & $\frac{1}{4}$ & N & 0 & Prisoner's Dilemma \\
        \coordinationgame & Coordination & Co &  \eqsupport{1}{0}{0}{0}~\eqsupport{0}{0}{0}{1} & \eqsupport{1}{1}{1}{1} & \coordinationsupport & 14 & $\phantom{+}\frac{\pi}{2},\phantom{+}\frac{\pi}{2}$  & $\phantom{+}1,\phantom{+}0$ & \checkmark & $\frac{1}{8}$ & N & 0 & Battle, Hunt, Chicken \\
        \cyclegame & Cycle & Cy & & \eqsupport{1}{1}{1}{1} & \cyclesupport & 15 & $\phantom{+}\frac{\pi}{2},-\frac{\pi}{2}$ & $\phantom{+}0,\phantom{+}1$ & & $\frac{1}{8}$ & N & 0 & Matching Pennies \\
        \samaritangame & Samaritan & Sm & \eqsupport{1}{0}{0}{0} & & \samaritansupport & 8 & $\phantom{+}\frac{\pi}{2},\phantom{+}0$ & $\phantom{+}\frac{1}{2},\phantom{+}\frac{1}{2}$ & & $\frac{1}{2}$ & N & 0 & \\
        
        \hazardgame & Hazard & Hz & \eqsupport{1}{0}{0}{0} & \eqsupport{1}{1}{0}{0} & \hazardsupport & 13 & $\phantom{+}\frac{\pi}{2},-\frac{\pi}{4}$ & $\phantom{+}\frac{1}{4},\phantom{+}\frac{3}{4}$ & & $-$ & N & 1 & \\
        \safetygame & Safety & Sf & \eqsupport{1}{0}{0}{0}~\eqsupport{0}{0}{0}{1} & \eqsupport{1}{0}{1}{0} & \safetysupport & 12 & $\phantom{+}\frac{\pi}{2},\phantom{+}\frac{\pi}{4}$ & $\phantom{+}\frac{3}{4},\phantom{+}\frac{1}{4}$ & & $-$ & N & 1 & \\
        \aidosgame & Aidos & Ad & \eqsupport{1}{0}{0}{0}~\eqsupport{0}{1}{0}{0} & \eqsupport{1}{1}{0}{0} & \aidossupport & 7 & $\phantom{+}0,-\frac{\pi}{4}$ & $-\frac{1}{4},-\frac{1}{4}$ & & $-$ & N & 1 & \\
        \picnicgame & Picnic & Pn & \eqsupport{1}{0}{0}{0} & & \picnicsupport & 6 & $\phantom{+}\frac{\pi}{2},\phantom{+}0$ & $\phantom{+}\frac{1}{4},\phantom{+}\frac{1}{4}$ & & $-$ & N & 1 & \\
        
        \daredevilgame & Daredevil & Dd & \eqsupport{1}{0}{0}{0}~\eqsupport{0}{1}{0}{0}~\eqsupport{0}{0}{1}{0} & \eqsupport{1}{0}{1}{0}~\eqsupport{1}{1}{0}{0} & \daredevilsupport & 11 & $-\frac{\pi}{4},-\frac{\pi}{4}$ & $-\frac{1}{2},\phantom{+}0$ & \checkmark & $\cdot$ & N & 2 & \\
        \fossickgame & Fossick & Fo & \eqsupport{1}{0}{0}{1} & & \fossicksupport & 9 & $\phantom{+}\frac{\pi}{4},\phantom{+}\frac{\pi}{4}$ & $\phantom{+}\frac{1}{2},\phantom{+}0$ & \checkmark & $\cdot$ & N & 2 & \\
        \heistgame & Heist & Hs & \eqsupport{1}{0}{0}{0}~\eqsupport{0}{1}{0}{0} & \eqsupport{1}{1}{0}{0} & \heistsupport & 10 & $\phantom{+}\frac{\pi}{4},-\frac{\pi}{4}$ & $\phantom{+}0,\phantom{+}\frac{1}{2}$ & & $\cdot$ & N & 2 & \\
        
        \hline
        
        \ignorancegame & Ignorance & Ig & \eqsupport{1}{0}{0}{0}~\eqsupport{0}{1}{0}{0} & \eqsupport{1}{1}{0}{0} & \ignorancesupport & 2 &  & & & $\frac{1}{2}$ & P & 2 &  \\
        \horseplaygame & Horseplay & Hp & \eqsupport{1}{0}{0}{0}~\eqsupport{1}{0}{0}{1} & \eqsupport{1}{1}{0}{0}~\eqsupport{0}{0}{1}{1}  & \horseplaysupport & 4 &  & & & $\frac{1}{2}$ & P & 2 &  \\
        \dressgame & Dress & Dr & \eqsupport{1}{0}{0}{0}~\eqsupport{0}{1}{0}{0}~\eqsupport{0}{0}{0}{1} & \eqsupport{1}{1}{0}{0}~\eqsupport{0}{1}{0}{1} & \dresssupport & 3 &  & & & $-$ & P & 3 & Red Dress \\
        
        \hline
        
        \nullgame & Null & Nu & \eqsupport{1}{0}{0}{0}~\eqsupport{0}{1}{0}{0}~\eqsupport{0}{0}{0}{1}~\eqsupport{0}{0}{1}{0} & \eqsupport{1}{1}{1}{1} & \nullsupport & 1 &  & & \checkmark & 1 & T & 4 & Trivial, Zero \\
    \end{tabular}
    }
    \caption{Naming scheme and properties of the set of 2×2 best-response-invariant embeddings. Only \dominantgame~Dominant, \coordinationgame~Coordination, \cyclegame~Cycle, \samaritangame~Samaritan, and \nullgame~Null have been well studied and have established names. The remaining games (which have indifferences) either do not appear in the literature at all or do so seldomly. Key: graph representation (G), short name (N), pure Nash (PN), mixed Nash (MN), correlated equilibrium support (C), \cite{borm1987_classification_of_2x2_games}'s classes (B), symmetric (S), measure (M), triviality (T), and indifferences (I).}
    \label{tab:naming_games}
\end{table}

\begin{figure}[t]
    \centering
    \footnotesize

    \begin{subfigure}[b]{0.19\textwidth}
        \centering
        \begin{tabular}{c|cc}
              & S & W \\ \hline
            S & $+1,+1$ & $+1,-1$ \\
            W & $-1,+1$ & $-1,-1$ \\
        \end{tabular}
        \caption{\centering \dominantgame~Dominant}
        \label{fig:equi_dominant}
    \end{subfigure}
    \hfill
    \begin{subfigure}[b]{0.19\textwidth}
        \centering
        \begin{tabular}{c|cc}
              & A & B \\ \hline
            A & $+1,+1$ & $-1,-1$ \\
            B & $-1,-1$ & $+1,+1$ \\
        \end{tabular}
        \caption{\centering \coordinationgame~Coordination}
        \label{fig:equi_coordination}
    \end{subfigure}
    \hfill
    \begin{subfigure}[b]{0.19\textwidth}
        \centering
        \begin{tabular}{c|cc}
              &       H &       T \\ \hline
            H & $+1,-1$ & $-1,+1$ \\
            T & $-1,+1$ & $+1,-1$ \\
        \end{tabular}
        \caption{\centering \cyclegame~Cycle}
        \label{fig:equi_cycle}
    \end{subfigure}
    \hfill
    \begin{subfigure}[b]{0.19\textwidth}
        \centering
        \begin{tabular}{c|cc}
              & H & N \\ \hline
            R & $4,3$ & $1,1$ \\
            W & $3,4$ & $2,2$ \\
        \end{tabular}
        \caption{\centering \samaritangame~Samaritan}
        \label{fig:equi_samaritan}
    \end{subfigure}
    \hfill
    \begin{subfigure}[b]{0.19\textwidth}
        \centering
        \begin{tabular}{c|cc}
              &       R &       C \\ \hline
            N & $~0,+1$ & $~0,+1$ \\
            I & $-2,+3$ & $+2,+2$ \\
        \end{tabular}
        \caption{\centering \hazardgame~Hazard}
        \label{fig:equi_hazard}
    \end{subfigure}

    \begin{subfigure}[b]{0.19\textwidth}
        \centering
        \begin{tabular}{c|cc}
              &       C &       R \\ \hline
            I & $+1,+4$ & $-2,+3$ \\
            N & $~0,+1$ & $~0,+1$ \\
        \end{tabular}
        \caption{\centering \safetygame~Safety}
        \label{fig:equi_safety}
    \end{subfigure}
    \hfill
    \begin{subfigure}[b]{0.19\textwidth}
        \centering
        \begin{tabular}{c|cc}
              &       B &       N \\ \hline
            N & $+1,~0$ & $+1,~0$ \\
            R & $-1,+1$ & $-1,-1$ \\
        \end{tabular}
        \caption{\centering \aidosgame~Aidos}
        \label{fig:equi_aidos}
    \end{subfigure}
    \hfill
    \begin{subfigure}[b]{0.19\textwidth}
        \centering
        \begin{tabular}{c|cc}
              &       P &       T \\ \hline
            A & $+1,+1$ & $~0,-1$ \\
            D & $-1,+1$ & $~0,-1$ \\
        \end{tabular}
        \caption{\centering \picnicgame~Picnic}
        \label{fig:equi_picnic}
    \end{subfigure}
    \hfill
    \begin{subfigure}[b]{0.19\textwidth}
        \centering
        \begin{tabular}{c|cc}
              &       B &       N \\ \hline
            N & $~0,~0$ & $+1,~0$ \\
            R & $~0,+1$ & $-1,-1$ \\
        \end{tabular}
        \caption{\centering \daredevilgame~Daredevil}
        \label{fig:equi_daredevil}
    \end{subfigure}
    \hfill
    \begin{subfigure}[b]{0.19\textwidth}
        \centering
        \begin{tabular}{c|cc}
              &       W &       G \\ \hline
            W & $+1,~0$ & $+1,~0$ \\
            G & $-1,+1$ & $-1,-1$ \\
        \end{tabular}
        \caption{\centering \fossickgame~Fossick}
        \label{fig:equi_fossick}
    \end{subfigure}

    \begin{subfigure}[b]{0.19\textwidth}
        \centering
        \begin{tabular}{c|cc}
              &       P &       R \\ \hline
            N & $+1,~0$ & $~0,~0$ \\
            H & $-1,+1$ & $~0,-1$ \\
        \end{tabular}
        \caption{\centering \heistgame~Heist}
        \label{fig:equi_heist}
    \end{subfigure}
    \hfill
    \begin{subfigure}[b]{0.19\textwidth}
        \centering
        \begin{tabular}{c|cc}
              &       A &       B \\ \hline
            W & $+1,~0$ & $+1,~0$ \\
            L & $-1,~0$ & $-1,~0$ \\
        \end{tabular}
        \caption{\centering \ignorancegame~Ignorance}
        \label{fig:equi_ignorance}
    \end{subfigure}
    \hfill
    \begin{subfigure}[b]{0.19\textwidth}
        \centering
        \begin{tabular}{c|cc}
              &       B &       N \\ \hline
            N & $+1,~0$ & $+1,~0$ \\
            R & $-1,+1$ & $-1,-1$ \\
        \end{tabular}
        \caption{\centering \horseplaygame~Horseplay}
        \label{fig:equi_horseplay}
    \end{subfigure}
    \hfill
    \begin{subfigure}[b]{0.19\textwidth}
        \centering
        \begin{tabular}{c|cc}
              &       C &       F \\ \hline
            C & $-2,~0$ & $+1,~0$ \\
            F & $-2,+1$ & $-1,+1$ \\
        \end{tabular}
        \caption{\centering \dressgame~Dress}
        \label{fig:equi_dress}
    \end{subfigure}
    \hfill
    \begin{subfigure}[b]{0.19\textwidth}
        \centering
        \begin{tabular}{c|cc}
              &       A &       B \\ \hline
            A & $~0,~0$ & $~0,~0$ \\
            B & $~0,~0$ & $~0,~0$ \\
        \end{tabular}
        \caption{\centering \nullgame~Null}
        \label{fig:equi_null}
    \end{subfigure}
    
    \caption{Narrative payoff tables of the 15 best-response-invariant 2×2 games.}
    \label{fig:best_response_invariant_2x2_games}
\end{figure}

\paragraph{\dominantgame~Dominant} In this symmetric game (Figure~\ref{fig:equi_dominant}) each player has a strictly dominant strategy. Therefore, this game has a single pure NE and (C)CE \dominantsupport. The Dominant embedding is at the origin of the equilibrium-invariant embedding, like other topologies \citep{rapoport1976_the_2x2_game_book}, and is the only 2×2 game that is both invariant-zero-sum and invariant-common-payoff. Dominant has many common games within its equivalence class including Prisoner's Dilemma, Peace, Deadlock, Total Conflict, Concord, and Compromise. In particular, Prisoner's Dilemma is one of the most studied games in economics. The dominant equivalence class occurs with probability $\frac{1}{4}$ when sampling uniformly over the equilibrium-invariant embedding. Furthermore, this class is the most diverse: games within the class can be anti-clockwise invariant-zero-sum, clockwise invariant-zero-sum, diagonal invariant-common-payoff, off-diagonal invariant-common-payoff, or none.

\paragraph{\coordinationgame~Coordination} Coordination (Figure~\ref{fig:equi_coordination}) is a symmetric common-payoff coordination game where there are two equally desirable outcomes. Players simply have to coordinate to ensure they achieve one of them. Coordination has two pure NEs (\eqsupport{1}{0}{0}{0} and \eqsupport{0}{0}{0}{1}) with equal payoff and a low payoff mixed NE \coordinationsupport. It is the only nontrivial game with a positive volume of (C)CE equilibria. Coordination has many games within its equivalence class including \huntgame~Stag Hunt, \chickengame~Chicken, Assurance, and \battlegame~Bach or Stravinsky. The Coordination equivalence class occurs with probability $\frac{1}{8}$ when sampling uniformly over the equilibrium-invariant embedding. All games in the coordination equivalence class are invariant-common-payoff.

\paragraph{\cyclegame~Cycle} This game is an anti-symmetric zero-sum game (Figure~\ref{fig:equi_cycle}), also commonly called \penniesgame~Matching Pennies. In this game, the first player prefers both players to play the same strategy (for example, both heads or both tails), and the second player prefers each player to play a different strategy. The best-responses of the pure strategies result in cyclic dynamics. A similar three strategy variant of this game with the same property, Rock Paper Scissors, is a popular children's game. Permuting the strategies of a player will result in changing the game from a clockwise cycle to an anti-clockwise cycle. Cycle has no pure NEs, a single completely mixed NE (with uniform probability) and (C)CE \cyclesupport. Cycle is the only game where all equilibria are strictly in the interior of the simplex. All games within \cyclegame~Cycle's best-response-invariant equivalence class are also cyclic, but with different biases on the strategies they prefer. This class occurs with probability $\frac{1}{8}$ when sampling uniformly over the equilibrium-invariant embedding.

\paragraph{\samaritangame~Samaritan} This game (Figure~\ref{fig:equi_samaritan}) has a worker player and a Samaritan player. The worker has two strategies: rest (R) or work (W). The Samaritan has two strategies: help (H) or do not help (N). The Samaritan always prefers to help but also prefers it when the worker also works (and so does not free-load). The worker prefers not to work if they receive help but prefers to work to sustain themselves if they do not receive help. Samaritan has a single pure NE and (C)CE \samaritansupport. It is named after Samaritan's Dilemma \citep{schmidtchen2002_samaritan_revisted}, a game proposed by \cite{buchanan1975_samaritans_dilemma}. Samson \citep{brams1993_theory_of_moves}, Alibi \citep{robinsonandgoforth2005_topology_of_2x2_games_book}, Anticipation, Bully, Hamlet, Asymmetric Dilemma, and Called Bluff are all in this game's better-response-invariant equivalence class. The \samaritangame~Samaritan embedding is neither zero-sum-invariant nor common-payoff-invariant. However other games within its equivalence class can be either zero-sum-invariant or common-payoff-invariant, or neither. Samaritan equivalence class is special because it occurs with the highest probability $\frac{1}{2}$ when sampling uniformly over the equilibrium-invariant embedding. As well as being the most common class, it is also the class most closely connected to all other games (Section~\ref{sec:equi_distance}).

\paragraph{\hazardgame~Hazard} In Hazard (Figure~\ref{fig:equi_hazard}), player 1 is an insurance company that can either insure (I) or decline (D) player 2's health insurance. Player 2 enjoys drinking at parties, which is risky behaviour (R), but also appreciates the health benefits of not drinking, which is careful behaviour (C). Overall, in the absence of health insurance, player 2 is indifferent to a party lifestyle or a healthy one. However, if player 1 issues medical insurance, player 2 will have a greater risk tolerance and will adopt a party lifestyle. Player 1 does not make any money if it does not issue insurance, makes money if it does and player 2 opts for a healthy lifestyle, and loses money if it issues insurance to an unhealthy lifestyle. \cite{bruns2015_names_for_games} briefly defines a game that he describes as a moral hazard, with the formulaic name ``Middle Hunt × Low Dilemma''. Hazard has a single pure NE \eqsupport{1}{0}{0}{0}, a mixed NE \eqsupport{1}{1}{0}{0}, and (C)CEs with support \hazardsupport. The hazard equivalence class has an indifference and so is zero-measure in the equilibrium-invariant embedding. However it is a boundary class which borders the Cycle equivalence class. Of the boundary classes it occurs with probability $\frac{1}{4}$ when uniformly sampling over the equilibrium-invariant embedding. All games within this equivalence class are invariant-zero-sum.

\paragraph{\safetygame~Safety} This game (Figure~\ref{fig:equi_safety}) is also hinted at by \cite{bruns2015_names_for_games} in which it was called ``Middle Hunt × Low Concord''. It is similar to Hazard, except the insurer now incentivizes player 1 into a healthy lifestyle with free gym membership. This fixes the moral hazard and now player 2 is incentivized to have a healthy lifestyle with insurance. Safety has two pure NEs (\eqsupport{1}{0}{0}{0} and \eqsupport{0}{0}{0}{1}), a mixed NE \eqsupport{1}{0}{1}{0} and (C)CEs \safetysupport. The Safety equivalence class has an indifference and so is zero-measure in the equilibrium-invariant embedding. However it is a boundary class which borders the Coordination equivalence class. Of the boundary classes it occurs with probability $\frac{1}{4}$ when uniformly sampling over the equilibrium-invariant embedding. All games within this equivalence class are invariant-common-payoff.

\paragraph{\aidosgame~Aidos} In Aidos (Figure~\ref{fig:equi_aidos}), player 1 is a deity and can either reveal themselves (R) or not (N), however they are shy and strictly prefer not to reveal themselves. Player 2 is a human and wants to believe what is true. With lack of evidence either way, the human is indifferent to the existence of the deity, however if the deity reveals themselves they prefer to believe (B) rather than not believe (N). There are two pure NEs (\eqsupport{1}{0}{0}{0} and \eqsupport{0}{1}{0}{0}), and any mixture of these is also a mixed NE \eqsupport{1}{1}{0}{0} and (C)CE. The theme of this game is inspired by Revelation \citep{brams1993_theory_of_moves}, a game with similar dynamics. The Aidos equivalence class has an indifference and so is zero-measure in the equilibrium-invariant embedding. However it is a boundary class which borders the Dominant and Samaritan equivalence classes in the region where the pure equilibrium strategy changes (in contrast to Picnic, described next). Of the boundary classes it occurs with probability $\frac{1}{4}$ when uniformly sampling over the equilibrium-invariant embedding. The games in Aidos' equivalence class can be either zero-sum-invariant, common-payoff-invariant, or neither. The only game in the set that is neither is the Aidos embedding.

\paragraph{\picnicgame~Picnic} In this game (Figure~\ref{fig:equi_picnic}) a host can either organise a picnic (P) or order takeaway (T) and they strictly prefer organising a picnic. A guest can either attend (A) or decline (D). They are indifferent about attending when takeaway is ordered but would prefer to attend if a picnic is organised. Picnic has a single pure NE \picnicsupport. The Picnic equivalence class has an indifference and so is zero-measure in the equilibrium-invariant embedding. However it is a boundary class which borders the Dominant and Samaritan equivalence classes in the region where the pure equilibrium strategy does not change (in contrast to Aidos). Of the boundary classes it occurs with probability $\frac{1}{4}$ when uniformly sampling over the equilibrium-invariant embedding. The Picnic embedding is neither zero-sum-invariant nor common-payoff-invariant, however games in its equivalence class can be either zero-sum-invariant or common-payoff-invariant.

\paragraph{\daredevilgame~Daredevil} In the literature, \chickengame~Chicken (Figure~\ref{fig:chicken}) has two pure anti-coordination NEs (\eqsupport{0}{0}{1}{0} and \eqsupport{0}{1}{0}{0}) and a single mixed NE \eqsupport{1}{1}{1}{1} which places the majority of the mass on the worst joint outcome. The presence of a mixed NE means that Chicken has a full-support equilibrium and therefore falls into the equivalence class of \equigame{-1}{1}{-1}{1}~Coordination. However, there is another interesting symmetric, invariant-common-payoff, measure-zero equivalence class game, Daredevil (Figure~\ref{fig:equi_daredevil}), with properties similar to Chicken. This game has the same two pure anti-coordination NEs (\eqsupport{0}{0}{1}{0} and \eqsupport{0}{1}{0}{0}), and an additional pure NE \eqsupport{1}{0}{0}{0}: crash. There is no completely mixed NE, but there are are also two pure-mixed NEs (\eqsupport{1}{0}{1}{0} and \eqsupport{1}{1}{0}{0}). It also has (C)CEs \daredevilsupport~ which can mix arbitrarily over any of the pure NEs. The game differs from Chicken in the fact that if the other player chooses not to swerve, the player is indifferent to whether they swerve and avoid damage or continue and crash. It has similar dynamics to, and can be thought of as, an extreme edge-case of Chicken. To our knowledge, this game has not been studied before in the literature - perhaps unsurprising because it has a measure-zero equivalence class. Daredevil is a point class because the Daredevil embedding is the only game within its class. It is invariant-common-payoff and borders the Aidos and Safety equivalence classes. Of the three point classes, it occurs with probability $\frac{1}{4}$.

\paragraph{\fossickgame~Fossick} In the literature, \huntgame~Stag Hunt (Figure~\ref{fig:stag_hunt}) is a game with two pure coordination NEs (\eqsupport{1}{0}{0}{0} and \eqsupport{0}{0}{0}{1}), where one is preferred over the other, and a completely mixed NE \eqsupport{1}{1}{1}{1}. Stag Hunt is also in the equivalence class of \coordinationgame~Coordination. Fossick (Figure~\ref{fig:equi_fossick}) is an extreme version of Stag Hunt which has two pure NEs(\eqsupport{1}{0}{0}{0} and \eqsupport{0}{0}{0}{1}), but no mixed NE. It also has (C)CEs that can mix over only the coordination strategies \fossicksupport. In this game players are in a gold rush and can either search for water to sustain themselves in the wilderness or fossick for gold. If the other player searches for water, the player is indifferent to what they search for. However if the other player fossicks for gold, the player would feel left out and would prefer to also fossick for gold. To our knowledge, this game has not been studied before in the literature. Fossick is part of a zero-measure equivalence class, and is also a point class (the Fossick embedding is the only game in the equivalence class). Of the three point classes, it occurs with probability $\frac{1}{4}$. Fossick is invariant-common-payoff.

\paragraph{\heistgame~Heist} In Heist (Figure~\ref{fig:equi_heist}), player 1 is a nervous robber and can either do nothing (N) or stage a heist (H). Player 2 is a security guard and can either go on patrol (P) or rest (R). If the robber does not stage a heist, the security guard is indifferent to whether they are on patrol or at rest. However if a heist does occur the guard prefers to be on patrol. If the security guard is at rest, the robber is unsure if the heist is worth the risk and is indifferent. However if the security guard is on patrol, the robber prefers not to stage a heist. This game is most similar to Pursuit \citep{brams1993_theory_of_moves}. Heist has two pure NEs (\eqsupport{1}{0}{0}{0} and \eqsupport{0}{1}{0}{0}), and (C)CEs that mix between them \eqsupport{1}{1}{0}{0}. Heist is part of a zero-measure equivalence class, and is also a point class. Of the three point classes, Heist is the most common and occurs with probability $\frac{1}{2}$. Heist is invariant-zero-sum.

\paragraph{\ignorancegame~Ignorance} Ignorance (Figure~\ref{fig:equi_ignorance}) is a partially trivial game where one of the players has no preferences at all. In Ignorance, an informed player wants to win (W) and avoid losing (L) and an ignorant player is indifferent between their strategies and outcomes. Ignorance has two pure NEs (\eqsupport{1}{0}{0}{0} and \eqsupport{0}{1}{0}{0}), a mixed NE \eqsupport{1}{1}{0}{0} and (C)CE \eqsupport{1}{1}{0}{0}. The equivalence class of Ignorance occurs with probability $\frac{1}{2}$ when sampling uniformly over the partially-trivial equilibrium-invariant embedding.

\paragraph{\horseplaygame~Horseplay} Horseplay (Figure~\ref{fig:equi_horseplay}) is a partially trivial game where one of the players has no preferences at all. Horseplay is a game between an adult and a child. The child can either lift up their arms (A) or curl into a ball (B), they are indifferent to which they do or what the outcome is, they are just happy to play. For convenience, the adult will prefer to throw (T) the child into the air if they have their arms up, or lift them up by their feet (L) if they are curled into a ball. Horseplay has two pure NEs (\eqsupport{1}{0}{0}{0} and \eqsupport{0}{0}{0}{1}), three mixed NEs (\eqsupport{1}{1}{0}{0}, \eqsupport{0}{0}{1}{1}, and \eqsupport{1}{1}{1}{1}), and full support (C)CEs \eqsupport{1}{1}{1}{1}. The equivalence class of Horseplay occurs with probability $\frac{1}{2}$ when sampling uniformly over the partially-trivial equilibrium-invariant embedding.

\paragraph{\dressgame~Dress} Described by \cite{simpson2010_red_dress}, Dress (Figure~\ref{fig:equi_dress}) is a game where two people are going on a date. Player 1 selfishly wishes their partner to dress formally (F), and only if they do so, also lazily prefers to dress for comfort (C). Player 2 also wants their partner to dress formally, but is indifferent to what they wear themselves. Because player 2's preferences are solely based on the other player's strategies, their payoffs are trivial, so Dress is a partially trivial game. This game has three pure NEs, and any mixture between those three NEs (\eqsupport{1}{0}{0}{0}, \eqsupport{0}{1}{0}{0}, and \eqsupport{0}{0}{0}{1}), two mixed NEs (\eqsupport{1}{1}{0}{0} and \eqsupport{0}{1}{0}{1}), and (C)CEs \dresssupport. The equivalence class of Dress is a boundary class between Ignorance and Horseplay. It is measure-zero when sampling uniformly over the partially-trivial equilibrium-invariant embedding.

\paragraph{\nullgame~Null} In the Null game (Figure~\ref{fig:equi_null}, players have no preferences over strategies which means any pure or mixed strategy is an NE and any joint distribution is a (C)CE \nullsupport. Games of the form $G_1(a_1, a_2) = b_1(a_2)$, $G_2(a_1, a_2) = b_2(a_1)$ are in the Null equivalence class. Null is a point game and is the only class of trivial games. This game is sometimes called the Zero game or Trivial game.

\subsection{Distance Metric}
\label{sec:equi_distance}

Distances between the best-response-invariant embeddings can be computed (Table~\ref{tab:equi_distance}) using the equilibrium-symmetric distance metric. \cyclegame~Cycle is the most isolated game with the greatest average distance to other games. \samaritangame~Samaritan is the least isolated game with the smallest average distance to other games. No game is more than two steps\footnote{Steps or ``hops'' are measured over the planar grid visualized in Figure~\ref{fig:representation_summary}.} away from \samaritangame~Samaritan. Out of the partially trivial games, \dressgame~Dress is a boundary class between \ignorancegame~Ignorance and \horseplaygame~Horseplay. Small perturbations in payoffs can change the game and cause step changes in the equilibrium. The distance metrics (and the topology) show the possible adjacent games that small perturbations could result in. Such an analysis could be useful in selecting equilibria that are robust to permutations or dealing with uncertainty in payoffs. We expand on this more in the discussion section.

\begin{table}[t]
\centering
{\setlength{\tabcolsep}{3pt}
\begin{tabular}{r|ccccccccccc|ccc|c}
     &
     \rotatebox{90}{\dominantgame~Dominant} &
     \rotatebox{90}{\coordinationgame~Coordination} &
     \rotatebox{90}{\cyclegame~Cycle} &
     \rotatebox{90}{\samaritangame~Samaritan} &
     \rotatebox{90}{\hazardgame~Hazard} &
     \rotatebox{90}{\safetygame~Safety} &
     \rotatebox{90}{\aidosgame~Aidos} &
     \rotatebox{90}{\picnicgame~Picnic} &
     \rotatebox{90}{\daredevilgame~Daredevil} &
     \rotatebox{90}{\fossickgame~Fossick} &
     \rotatebox{90}{\heistgame~Heist} &
     \rotatebox{90}{\ignorancegame~Ignorance} &
     \rotatebox{90}{\horseplaygame~Horseplay} &
     \rotatebox{90}{\dressgame~Dress} &
     \rotatebox{90}{\dressgame~Null} \\ \hline
    Dominant \dominantgame & 0 & 4 & 4 & 2 & 3 & 3 & 1 & 1 & 2 & 2 & 2 & & & & \\
    Coordination \coordinationgame & 4 & 0 & 4 & 2 & 3 & 1 & 3 & 3 & 2 & 2 & 4 & & & & \\
    Cycle \cyclegame & 4 & 4 & 0 & 2 & 1 & 3 & 3 & 3 &  4 & 4 & 2 & & & & \\
    Samaritan \samaritangame & 2 & 2 & 2 & 0 & 1 & 1 & 1 & 1 & 2 & 2 & 2 & & & & \\
    Hazard \hazardgame & 3 & 3 & 1 & 1 & 0 & 2 & 2 & 2 & 3 & 3 & 1 & & & & \\
    Safety \safetygame & 3 & 1 & 3 & 1 & 2 & 0 & 2 & 2 & 1 & 1 & 3 & & & & \\
    Aidos \aidosgame &  1 & 3 & 3 & 1 & 2 & 2 & 0 & 2 & 1 & 3 & 1 & & & & \\
    Picnic \picnicgame & 1 & 3 & 3 & 1 & 2 & 2 & 2 & 0 & 3 & 1 & 1 & & & &  \\
    Daredeveil \daredevilgame & 2 & 2 & 4 & 2 & 3 & 1 & 1 & 3 & 0 & 2 & 2 & & & & \\
    Fossick \fossickgame & 2 & 2 & 4 & 2 & 3 & 1 & 3 & 1 & 2 & 0 & 2  & & & & \\
    Heist \heistgame & 2 & 4 & 2 & 2 & 1 & 3 & 1 & 1 & 2 & 2 & 0  & & & & \\ \hline
    Ignorance \ignorancegame & & & & & & & & & & & & 0 & 2 & 1 & \\
    Horseplay \horseplaygame &  & & & & & & & &  & & & 2 & 0 & 1 & \\
    Dress \dressgame & & & & & & & & & & & & 1 & 1 & 0 & \\ \hline
    Zero \nullgame & & & & & & & & & & & & & & & 0 \\
\end{tabular}
}
    \caption{Table showing the $L_1$ distance between games in the best-response-invariant embedding. Distances between nontrivial and trivial games are left undefined.}
    \label{tab:equi_distance}
\end{table}




\section{Game Visualization}

2×2 embeddings can be utilized to produce visualizations of 2×2, $|\mathcal{A}_1| \times |\mathcal{A}_2|$, two-player extensive-form, n-player polymatrix, and n-player extensive-form games (Figure~\ref{fig:game_visualization_examples}). Equilibrium and payoff properties can be directly deduced from these visualizations.

\begin{figure}[t!]
    \centering
    \begin{subfigure}[t]{0.12\textwidth}
        \includegraphics[scale=0.9]{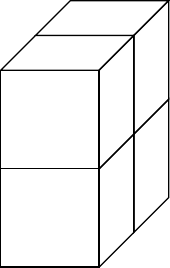}
        \caption{}
    \end{subfigure} \hfill
    \begin{subfigure}[t]{0.16\textwidth}
        \includegraphics[scale=0.9]{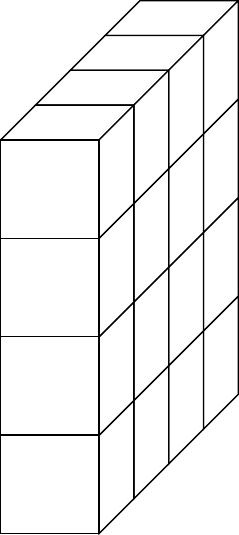}
        \caption{}
    \end{subfigure} \hfill
    \begin{subfigure}[t]{0.34\textwidth}
        \includegraphics[scale=0.9]{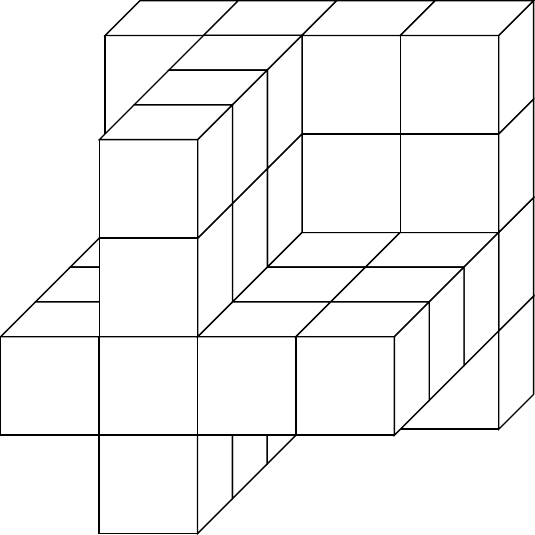}
        \caption{}
    \end{subfigure} \hfill
    \begin{subfigure}[t]{0.34\textwidth}
        \includegraphics[scale=0.9]{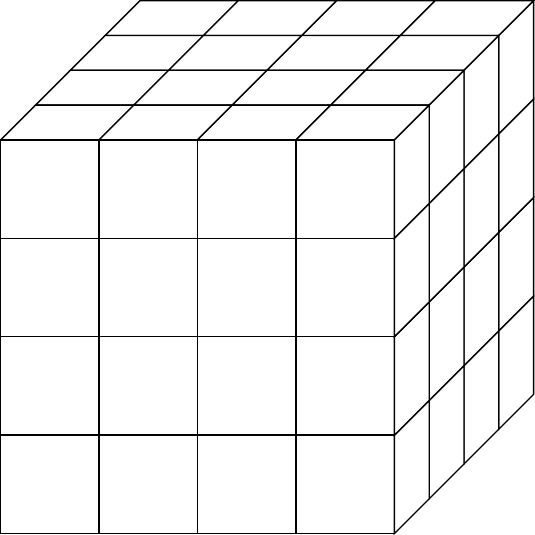}
        \caption{}
    \end{subfigure}
    \caption{Examples of different games that can be visualized using the techniques in this work: (a) 2×2 games, (b) two-player games, and (c) polymatrix games. Each cube represents a joint strategy in the game. Normal-form games with more that two-players (d) cannot be directly visualized with the techniques described and have to be approximated around a joint strategy using a local polymatrix approximation.}
    \label{fig:game_visualization_examples}
\end{figure}

\subsection{2×2 Game Visualization}
\label{sec:2x2vis}

First, consider the space of joint strategies, $\sigma(a)$, which can be denoted with a flat vector $\boldsymbol{\sigma} = [\sigma(a^{AA}), \sigma(a^{AB}), \allowbreak\sigma(a^{BA}), \sigma(a^{BB}))]$, where $a^{IJ} = (a_1^I, a_2^J)$. The standard constraints on a probability distribution apply: probabilities are nonnegative, $\sigma(a) \geq 0~ \forall a \in \mathcal{A}$, and sum to unity, $\sum_{a \in \mathcal{A}} \sigma(a) = 1$. The unity sum constraint means that a distribution over the four strategies of a 2×2 game, can be expressed with only three variables because one is redundant given the rest (e.g. $\sigma(a^{BB}) = 1 - \sigma(a^{AA}) - \sigma(a^{AB}) - \sigma(a^{BA})$). This means it is possible to visualize a joint distribution with four components in only three dimensions, by ignoring the space of distributions that do not sum to unity. Typically this is accomplished by specifying four vertices of a tetrahedron (a three dimensional object). Points in the simplex are then described in terms of mixtures of these four vertices (known as a barycentric coordinate system \cite{mobius1827_barycentric_coordinates}). Barycentric coordinates, $\boldsymbol{\sigma}$, can be converted to Cartesian coordinates, $\boldsymbol{x}$, via a linear transform, $\boldsymbol{x} = T \boldsymbol{\sigma}$. The columns of $T$ are points of a regular tetrahedron. There are many ways to choose points of a tetrahedron, one is given in Equation~\eqref{eq:tetrahedron}.
\begin{align} \label{eq:tetrahedron}
    \boldsymbol{x} = T \boldsymbol{\sigma} \qquad\qquad T = \begin{bmatrix}
        -\frac{1}{2} & \frac{1}{2} & 0 & 0 \\
        -\frac{\sqrt{3}}{6} & -\frac{\sqrt{3}}{6} & \frac{\sqrt{3}}{6} & 0 \\
        -\frac{\sqrt{6}}{12} & -\frac{\sqrt{6}}{12} & -\frac{\sqrt{6}}{12} & \frac{\sqrt{6}}{4} 
    \end{bmatrix}
\end{align}

Each nonnegative inequality constraint geometrically corresponds to a normal vector in the equation of a plane (e.g. $[1, 0, 0, 0] \boldsymbol{\sigma}^T \geq \boldsymbol{0}$). These planes split the space into two halves: those that are feasible distributions and those that are infeasible distributions. Together the four nonnegative probability inequality constraints result in a convex polytope with four faces, specifically a regular tetrahedron, when visualized in three dimensions (Figure~\ref{fig:joint_tetrahedron_simplex}). $\sigma(a_1, a_2)$ corresponds to a full joint distribution.  A subset of joints that factorize into their marginals, $\sigma(a_1, a_2) = \sigma(a_1)\sigma(a_2)$, is worth highlighting because of its relationship to NEs which by definition have to factorize. Factorizable joints result in a manifold within the tetrahedron (Figure~\ref{fig:joint_tetrahedron_manifold}).

\begin{figure}[t]
    \centering
    \begin{subfigure}[b]{0.32\linewidth}
        \centering
        \includegraphics[width=1.0\textwidth]{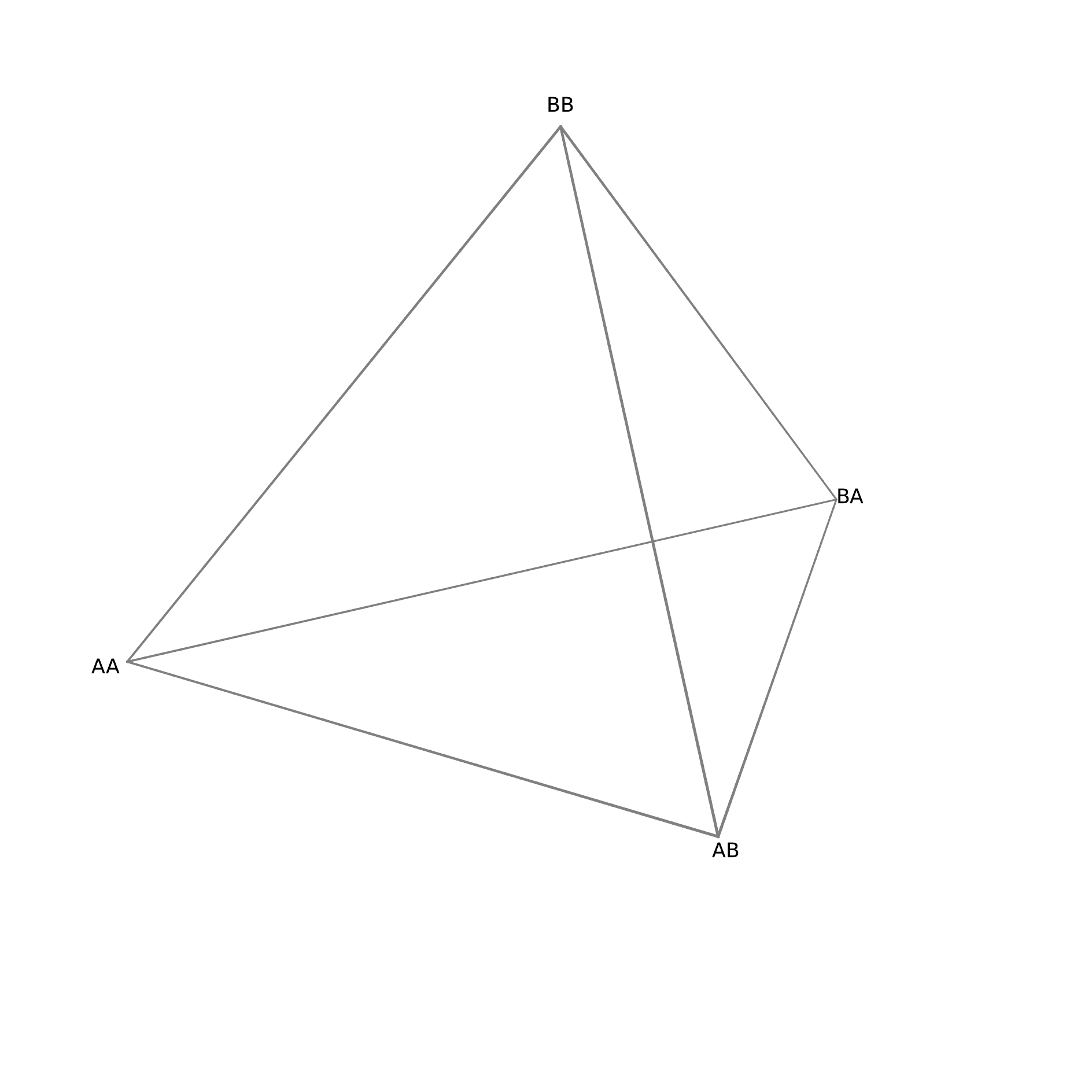}
        \vspace{-4em}
        \caption{Probability Simplex}
        \label{fig:joint_tetrahedron_simplex}
    \end{subfigure} ~~~
    \begin{subfigure}[b]{0.32\linewidth}
        \centering
        \includegraphics[width=1.0\textwidth]{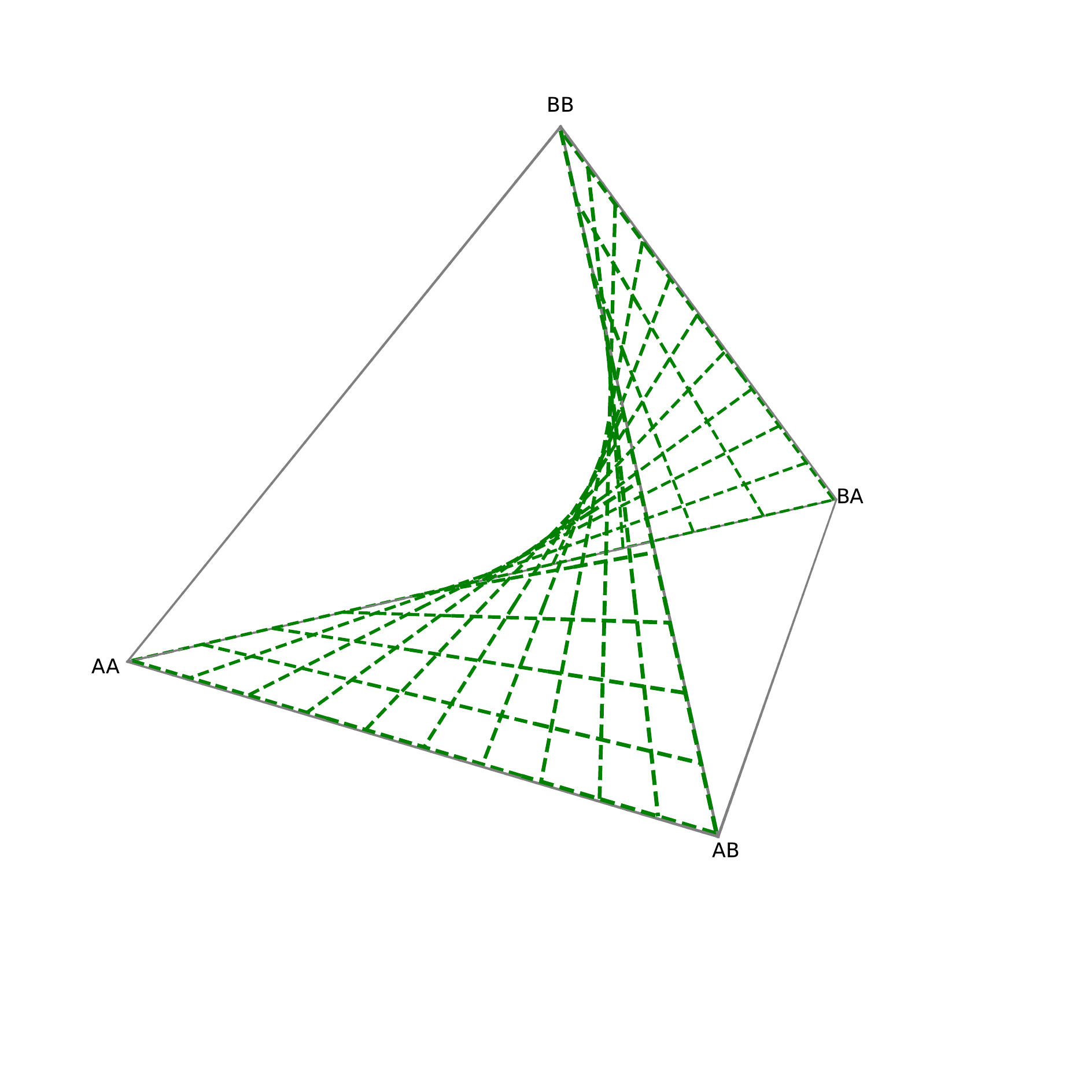}
        \vspace{-4em}
        \caption{NE Manifold}
        \label{fig:joint_tetrahedron_manifold}
    \end{subfigure}
    \caption[Probability Simplex and NE Manifold]{Visualization of the space of valid joint distribution, $\sigma(a_1, a_2)$, (tetrahedron) and valid factorizable joint distributions, $\sigma(a_1, a_2) = \sigma(a_1) \sigma(a_2)$, (manifold). The vertices of the tetrahedron correspond to pure joint strategies. The interior of the tetrahedron corresponds to mixed joint strategies.}
    \label{fig:joint_tetrahedron}
\end{figure}

Now consider a player's payoff, $G_p(a_1, a_2)$. We know that each player's 2×2 equilibrium-invariant embedding is parameterized by an angle, $G^\text{equil}_p(\theta_p)$, on a circle. This circle can be meaningfully traced in three-dimensions: $x(\theta_p) = T \boldsymbol{g}^\text{equil}_p(\theta_p)$. A particular payoff for a player, $G^\text{equil}_p$, can be represented by drawing an arrow from the origin to a point on this circle. The direction of this arrow conveys meaning: it is the direction which linearly increases the equilibrium-invariant embedding payoff received under a joint, and it points to the region where equilibria are likely to reside. The original payoffs, $T \boldsymbol{g}_p(\theta_p)$, could be any vectors perpendicular to the plane that the circle lies on.

The deviation gains (Equations~\eqref{eq:wsce_def}, \eqref{eq:ce_def}, and \eqref{eq:cce_def}) of the various equilibrium concepts can also be visualized. Each row of the deviation gain matrix corresponds to a half plane. The rows in aggregate make a convex polytope, which can also be visualized in three dimensions. NEs are where this polytope intersects with the factorizable joint manifold.

We propose a visualization of 2×2 games that includes their equilibrium-invariant embedding, best-response-invariant embedding, (C)CE polytope and NE set (Figure~\ref{fig:twoxtwo_polytope}). When the arrows are near opposite, the game is invariant-zero-sum (for example  \coordinationgame~Cycle, Figure~\ref{fig:cycle_polytope}). When the arrows are near alignment, the game is invariant-common-payoff (for example  \coordinationgame~Coordination, Figure~\ref{fig:coordination_polytope}). Dominant games have near perpendicular vectors. Sometimes a continuum of NEs is feasible, when that is the case it will be shown with a dashed line (for example  \horseplaygame~Horseplay, Figure~\ref{fig:horseplay_polytope}).

\begin{figure}[t]
    \centering
    \begin{subfigure}[t]{0.32\textwidth}
        \includegraphics[width=1.0\textwidth]{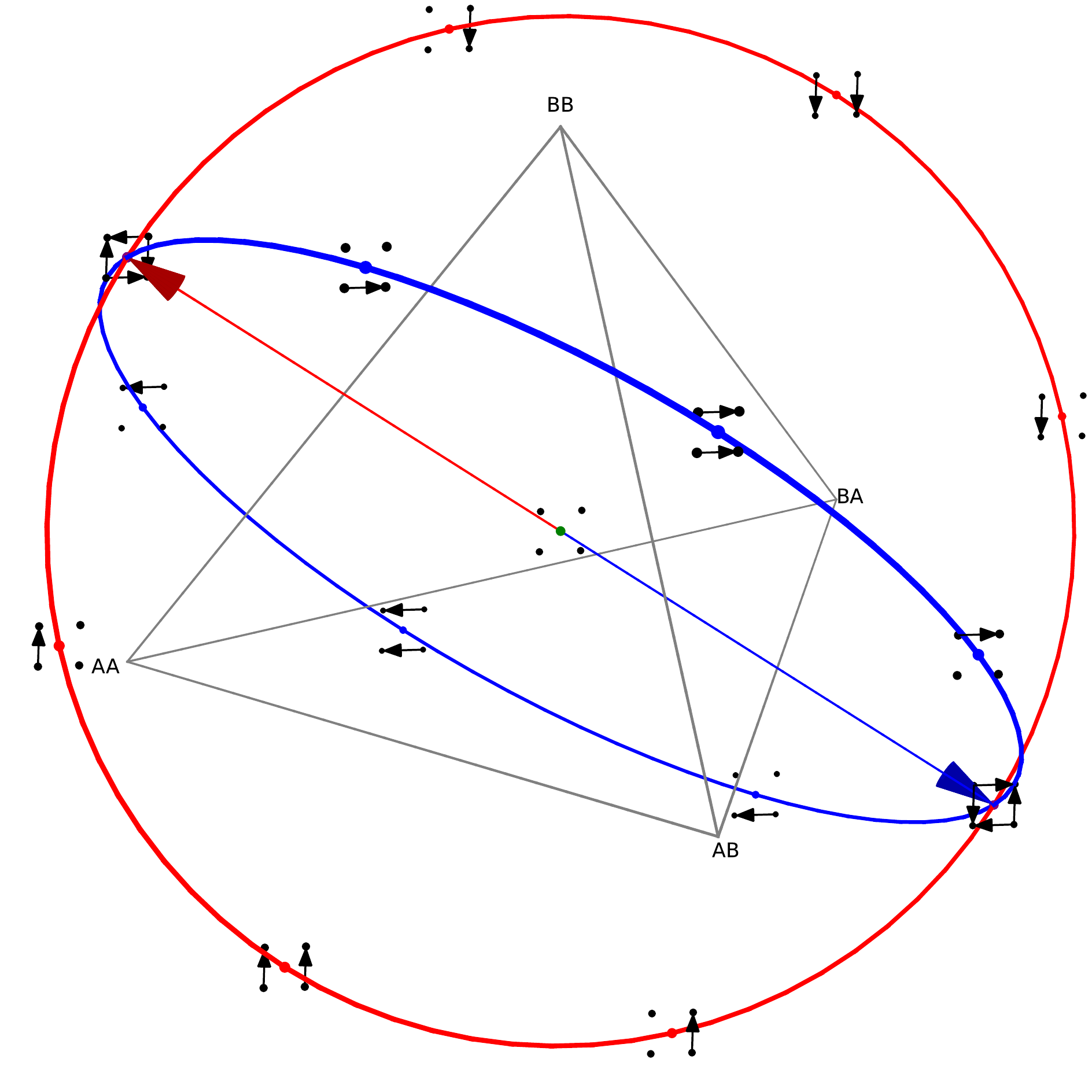}
        \caption{\cyclegame~Cycle Polytope}
        \label{fig:cycle_polytope}
    \end{subfigure} ~
    \begin{subfigure}[t]{0.32\textwidth}
        \includegraphics[width=1.0\textwidth]{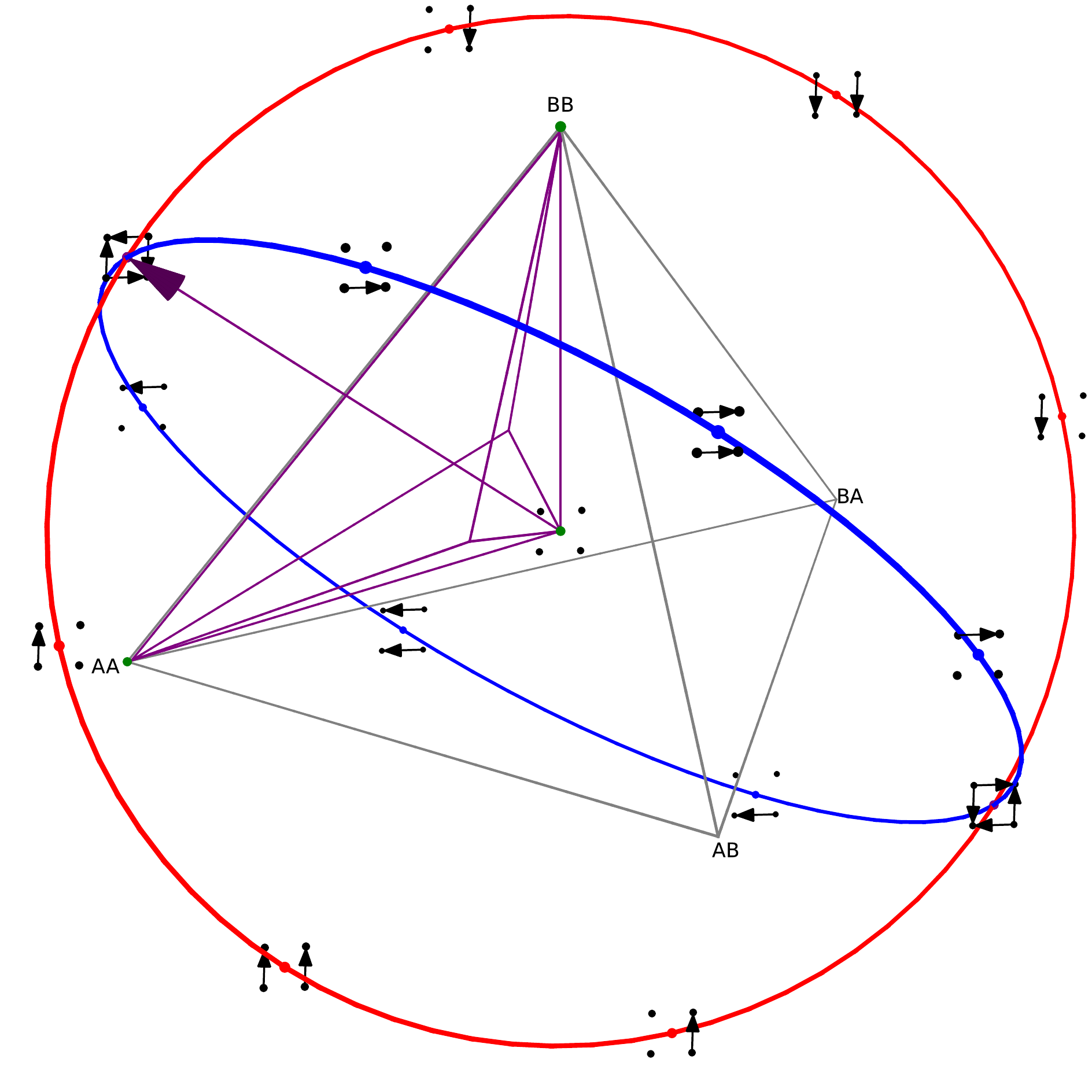}
        \caption{\coordinationgame~Coordination Polytope}
        \label{fig:coordination_polytope}
    \end{subfigure} ~
    \begin{subfigure}[t]{0.32\textwidth}
        \includegraphics[width=1.0\textwidth]{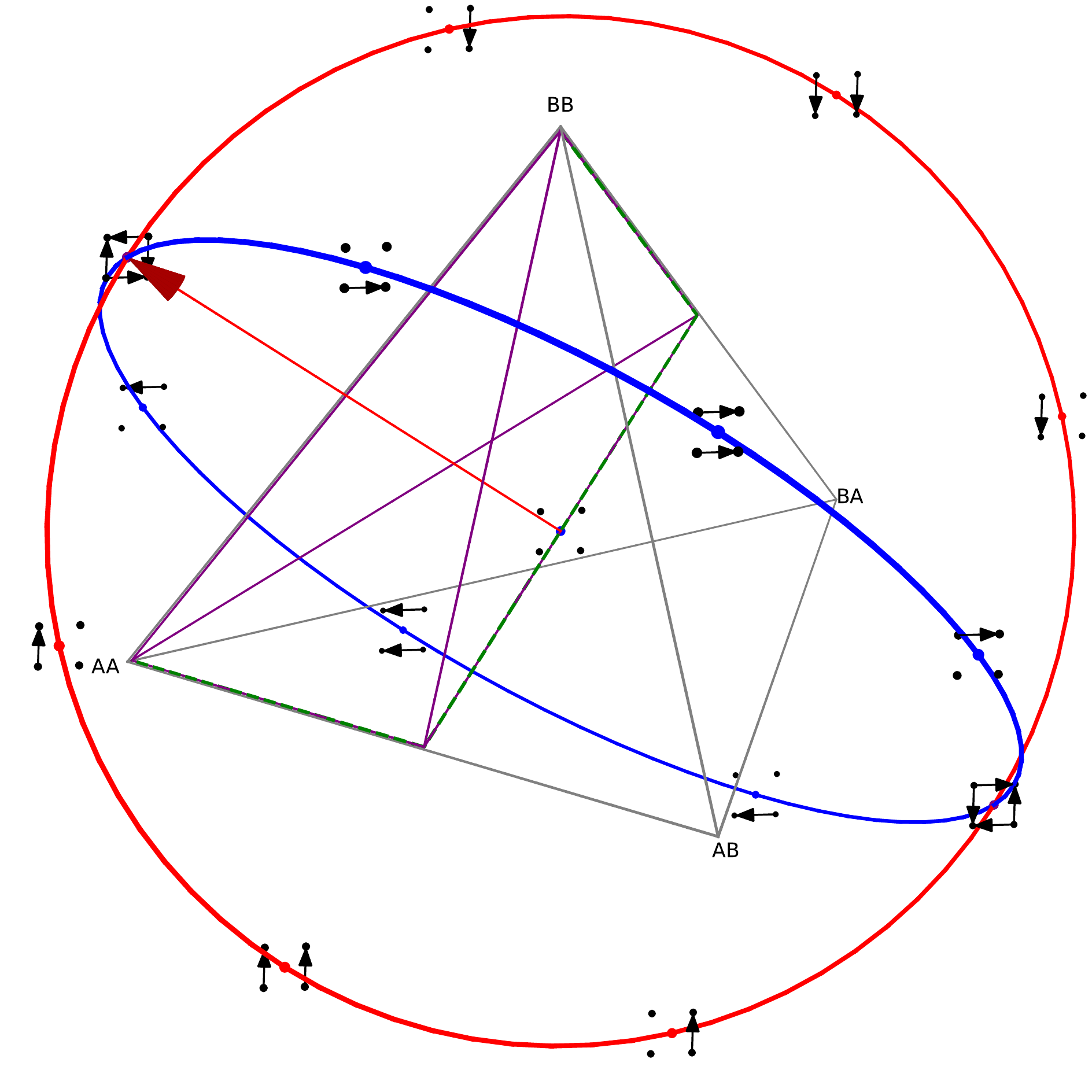}
        \caption{\horseplaygame~Horseplay Polytope}
        \label{fig:horseplay_polytope}
    \end{subfigure}
    
    \begin{subfigure}[t]{0.32\textwidth}
        \includegraphics[width=1.0\textwidth]{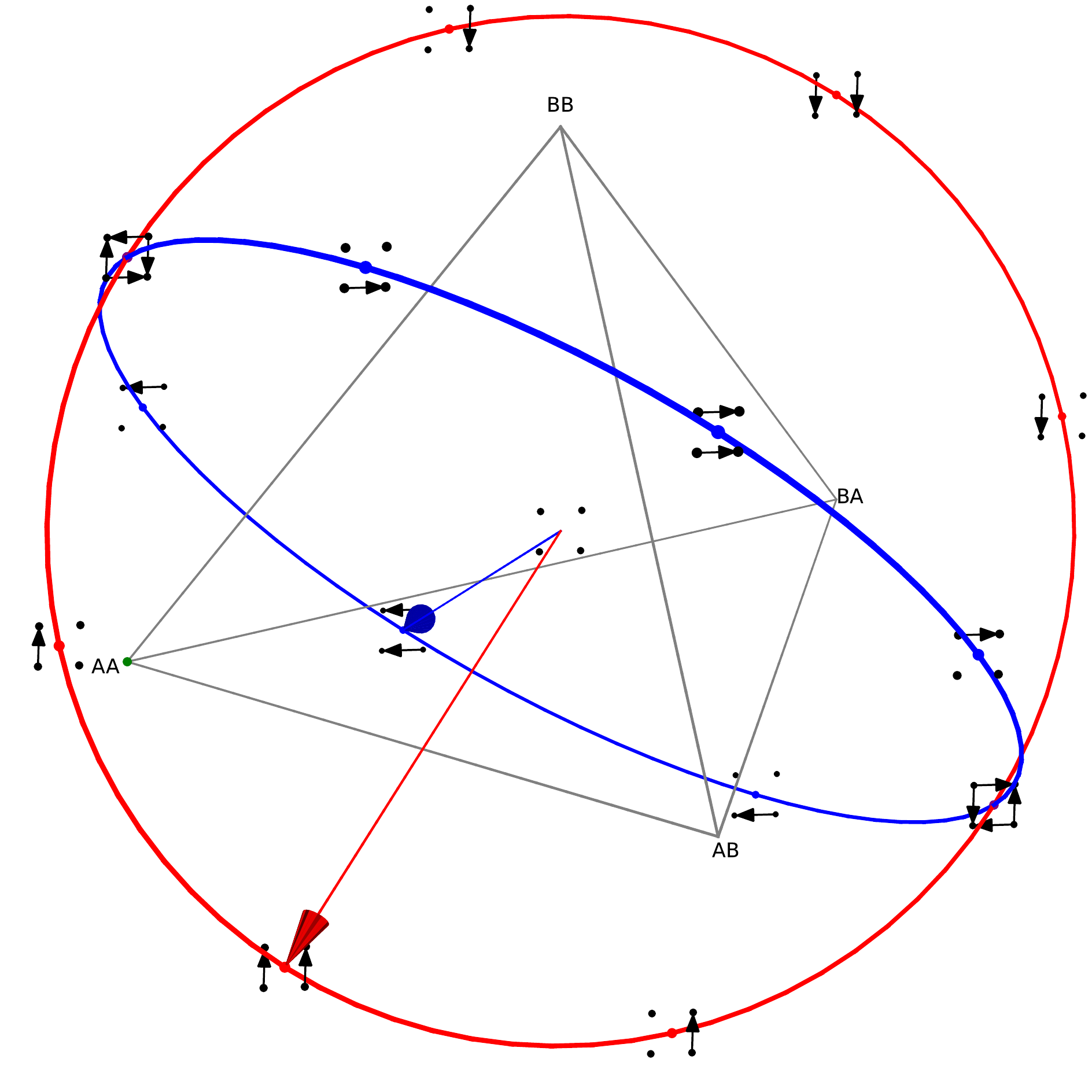}
        \caption{\cyclegame~Dominant Polytope}
        \label{fig:dominant_polytope}
    \end{subfigure} ~
    \begin{subfigure}[t]{0.32\textwidth}
        \includegraphics[width=1.0\textwidth]{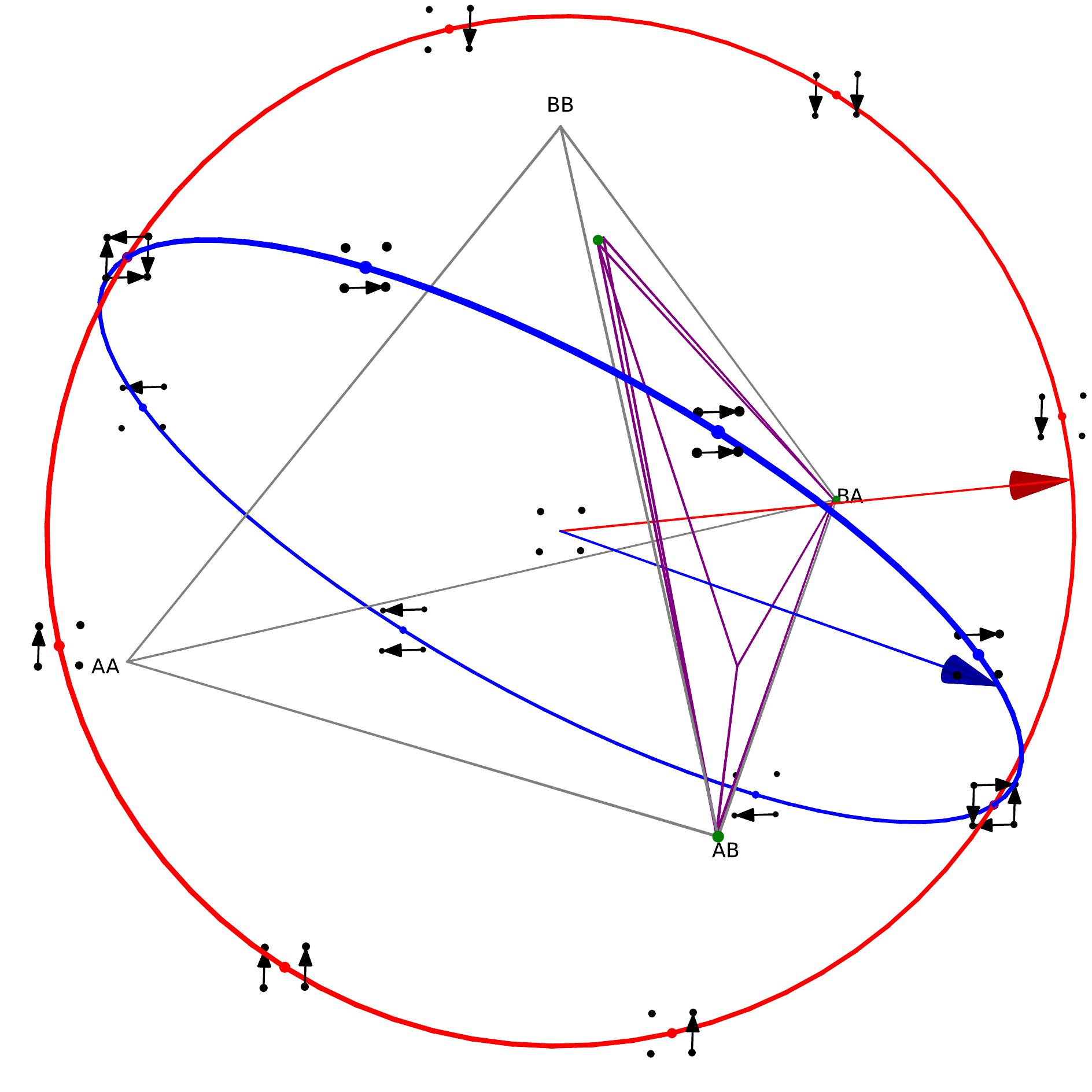}
        \caption{\chickengame~Chicken Polytope}
        \label{fig:chicken_polytope}
    \end{subfigure} ~
    \begin{subfigure}[t]{0.32\textwidth}
        \includegraphics[width=1.0\textwidth]{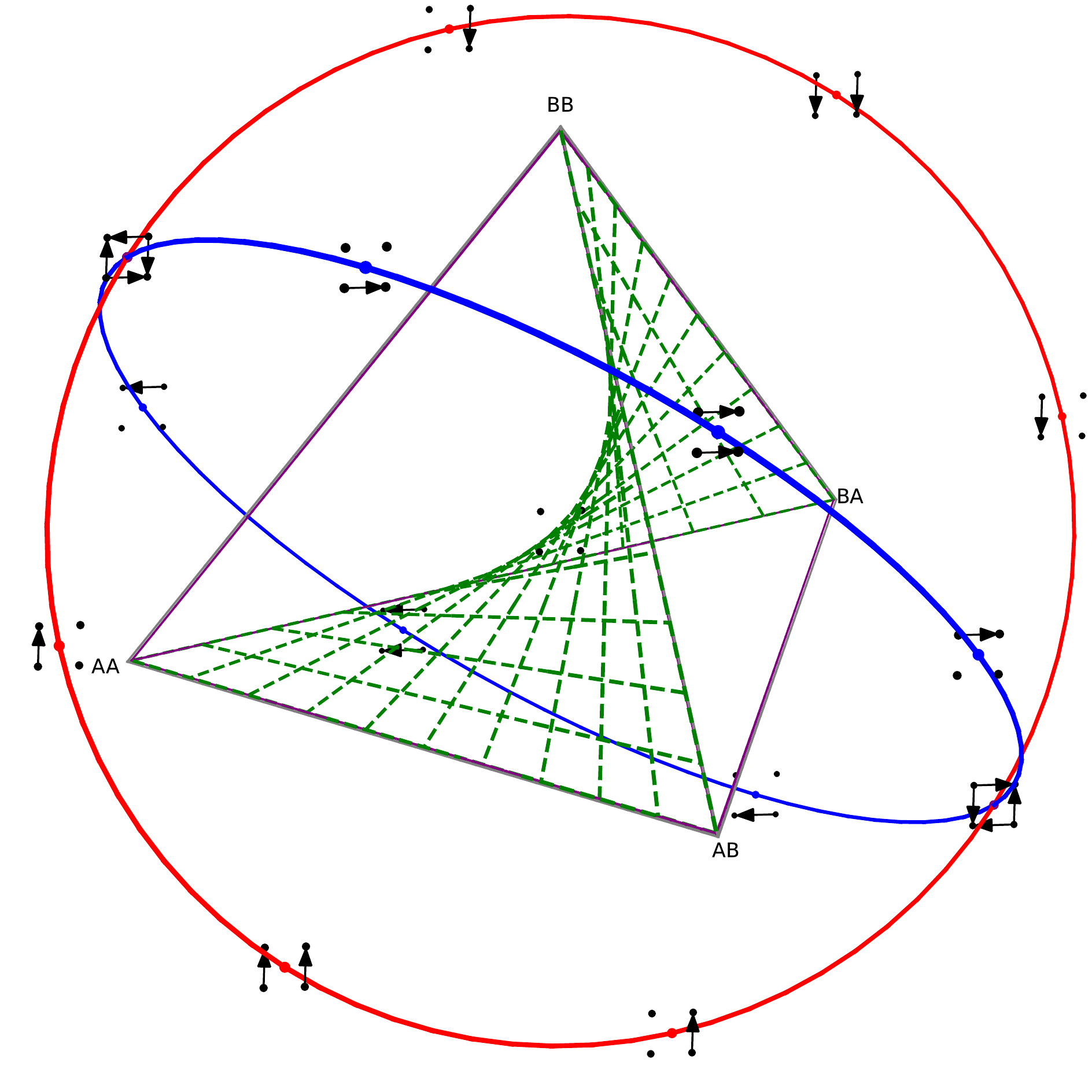}
        \caption{\nullgame~Null Polytope}
        \label{fig:null_polytope}
    \end{subfigure}
    \caption{Visualization of 2×2 games. The $\theta_1$ (red) and $\theta_2$ (blue) parameters are shown on their unit circles. The arrows are the invariant unit vectors of each player's payoff so the direction indicates joints that will linearly increase payoff. The (C)CE polytope of feasible equilibria is shown in purple. Green shows NEs.}
    \label{fig:twoxtwo_polytope}
\end{figure}

\subsection{Two-Player Game Visualization}
\label{sec:twoplayervis}

\begin{figure}[t!]
    \centering
    \stratsymplot{width=\linewidth,image=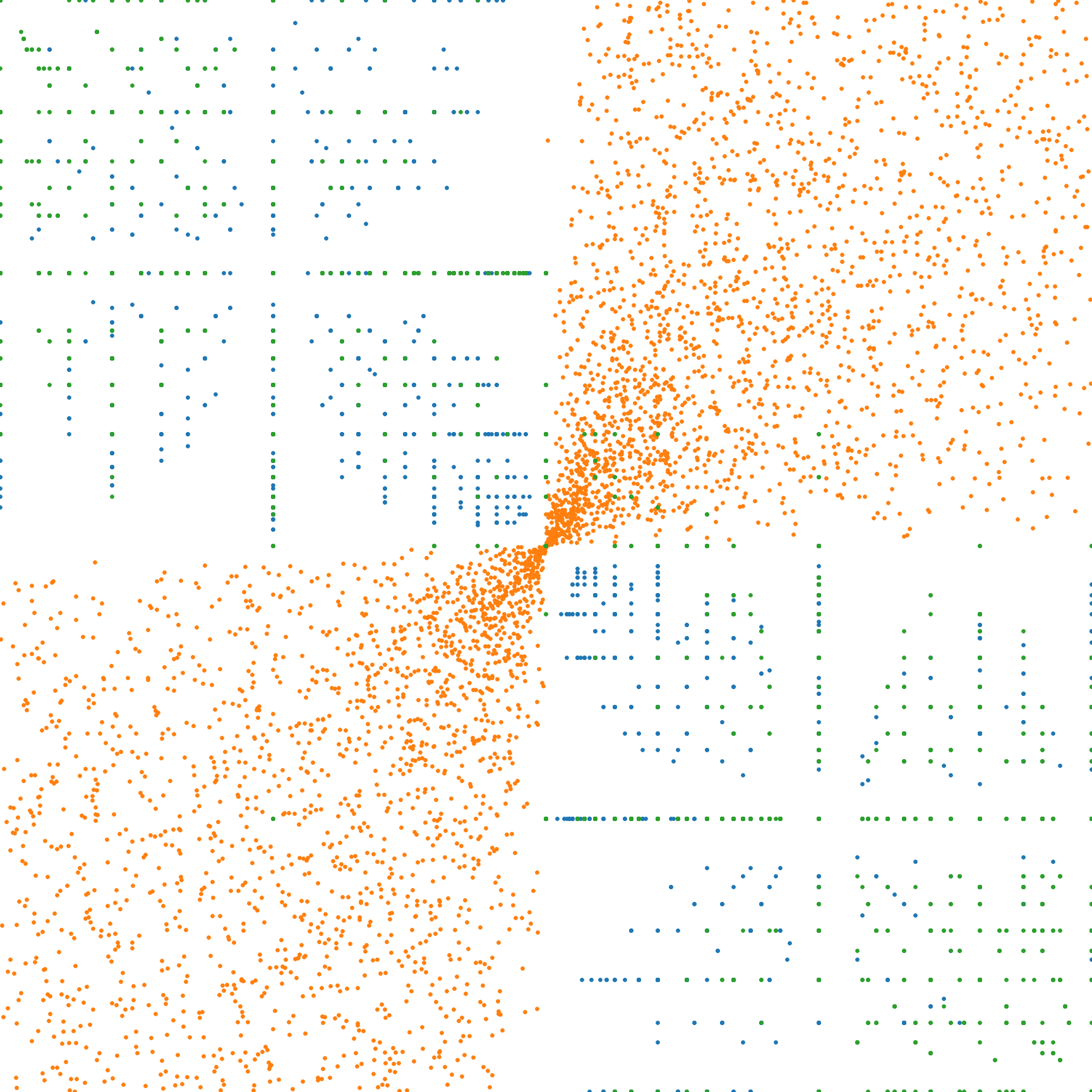,no partial trivial embedding}
    \caption{Visualization of the strategy space of asymmetric, two-player, extensive-form games: \colorsquare{tabB1color} zero-sum Kuhn Poker, \colorsquare{tabB2color} common-payoff Tiny Bridge, and \colorsquare{tabB3color} mixed-motive Sheriff. Zero-sum games only have interactions in the diagonal quadrants. Common-payoff games only have interactions in the off-diagonal quadrants. Mixed motive games can have interactions in all quadrants. The games are asymmetric so we can utilize strategy permutation but not player permutation to plot the interactions on a reduced invariant embedding.}
    \label{fig:mxm_asymmetric}
\end{figure}

A two-player game with more than two strategies ($|\mathcal{A}_1| \times |\mathcal{A}_2|$), many strategies (such as an extensive-form\footnote{All extensive-form games have an equivalent normal-form representation. However, subgame perfection is not representable in normal-form.} game with deterministic policies), or even infinitely many strategies (such as an extensive-form game with stochastic policies), can be summarized by considering either every possible 2×2 subgame or approximated by sampling subgames, $\tilde{G}_p$. To produce a single point, uniformly sample strategies $\tilde{a}_1^A$, $\tilde{a}_1^B$, $\tilde{a}_2^A$, and $\tilde{a}_2^B$, to produce a sampled 2×2 payoff. Then find the equilibrium-invariant embedding $(\tilde{\theta}_1, \tilde{\theta}_2)$, using symmetries if appropriate.
\begin{align}
    \tilde{G}^\text{global}_p(\tilde{a}_1^A, \tilde{a}_1^B, \tilde{a}_2^A, \tilde{a}_2^B) = \begin{bmatrix}
        G_p(\tilde{a}_1^A,\tilde{a}_2^A) & G_p(\tilde{a}_1^A,\tilde{a}_2^B) \\
        G_p(\tilde{a}_1^B,\tilde{a}_2^A) & G_p(\tilde{a}_1^B,\tilde{a}_2^B)
    \end{bmatrix} \to (\tilde{\theta}^\text{global}_1, \tilde{\theta}^\text{global}_2)
\end{align}
Large two-player games can be represented as a point cloud of sampled equilibrium-invariant embeddings. We propose that properties of the game can be deduced from the positions and density of the the embeddings in this plot. We call this sampling scheme \emph{global} because all strategies are uniformly sampled over the full normal-form game being analysed.

\paragraph{Extensive-Form Games}  We visualize (Figure~\ref{fig:mxm_asymmetric}) three two-player extensive-form games taken from the OpenSpiel library \citep{lanctot2019_openspiel}. All extensive-form games can be converted to normal-form games by considering the enumeration of pure-strategy policies available to each player. To produce our visualization we do not require enumeration of all policies; instead we can simply sample random pure joint policies to approximate the visualization allowing us to scale analysis to large extensive-form games. Because all the games considered in this visualization are asymmetric, we do not utilize player symmetry. We sample strategies arbitrarily from the payoffs, so the order of strategies is also arbitrary and we can utilize strategy symmetry. Therefore, the plots cover the domain: $-\frac{\pi}{2} \leq \theta_p < \frac{\pi}{2}$.

Kuhn Poker \citep{kuhn1950_poker} is an asymmetric, zero-sum simplified poker variant. First, notice that all points lie in invariant-zero-sum quadrants, which is what the theory predicts for a two-player zero-sum extensive-form game. Interestingly, many boundary classes (\aidosgame~Aidos, \picnicgame~Picnic, and \hazardgame~Hazard) appear. This shows that this game has situations where players are indifferent between strategies. Kuhn poker does have a high concentration of points in the \dominantgame~Dominant and \samaritangame~Samaritan classes indicating that in most situations there are obvious strategies that either both or at least one of the players should choose. There are also a significant number of \cyclegame~Cycle interactions which indicates that this game has some interesting strategic depth.

Two-player Cooperative Tiny Bridge \citep{lockhart2020_bridge} is a common-payoff game. As expected, all points lie in the invariant-common-payoff quadrants of the visualization. The game also has a high concentration of points near the origin indicating that most strategic interaction is \dominantgame~Dominant: both players have a strict preference between strategies. \coordinationgame~Coordination also features prominently, indicating there are situations that require players to coordinate to maximize payoffs. It seems no boundary classes appear in Tiny Bridge meaning that it is rare for players to be indifferent over their strategies.

Sheriff \citep{farina2019_sheriff} is an asymmetric, mixed-motive game. The point cloud of this game covers both invariant-common-payoff and invariant-zero-sum quadrants, however it predominately occupies the invariant-zero-sum quadrant indicating that the game is more competitive than cooperative. Again, boundary classes (\aidosgame~Aidos, \picnicgame~Picnic, and \hazardgame~Hazard) appear, indicating times when players are indifferent.

\begin{figure}[t!]
    \centering
    \begin{subfigure}[t]{0.48\textwidth}
        \playersymzerosumplot{width=\linewidth,image=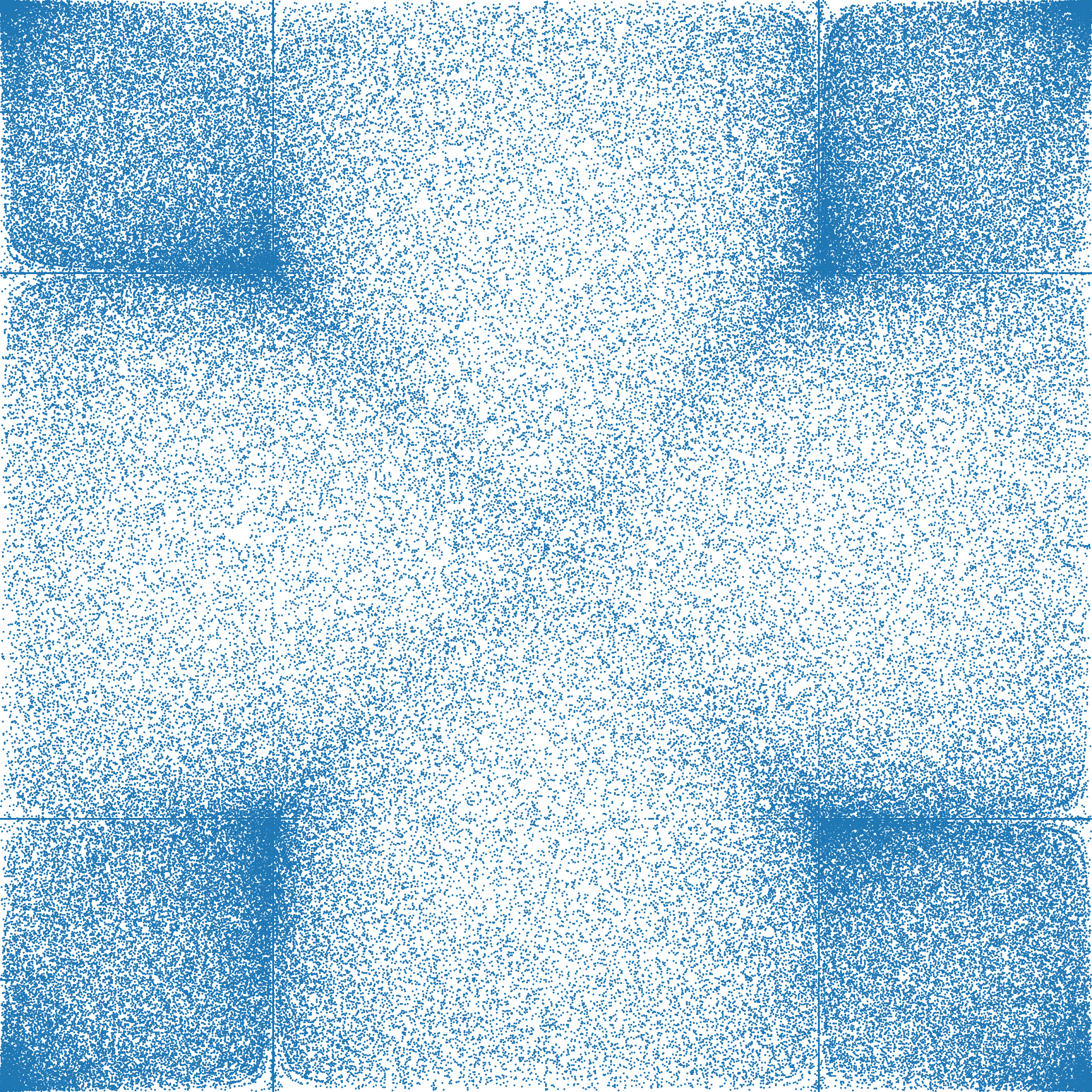,font size=\footnotesize}
        \caption{Approximate summary of the full $888 \times 888$ AlphaStar league \citep{vinyals2019_starcraft}. Each strategy corresponds to a learned policy in the extensive-form game and the payoff was calculated empirically. The plot is redundant outside of the domain $0 \leq \theta_1 \leq \frac{\pi}{2}$ and $-\frac{\pi}{2} \leq \theta_2 \leq 0$.}
        \label{fig:alphastar_global}
    \end{subfigure} \hfill
    \begin{subfigure}[t]{0.48\textwidth}
        \playersymzerosumplot{width=\linewidth,image=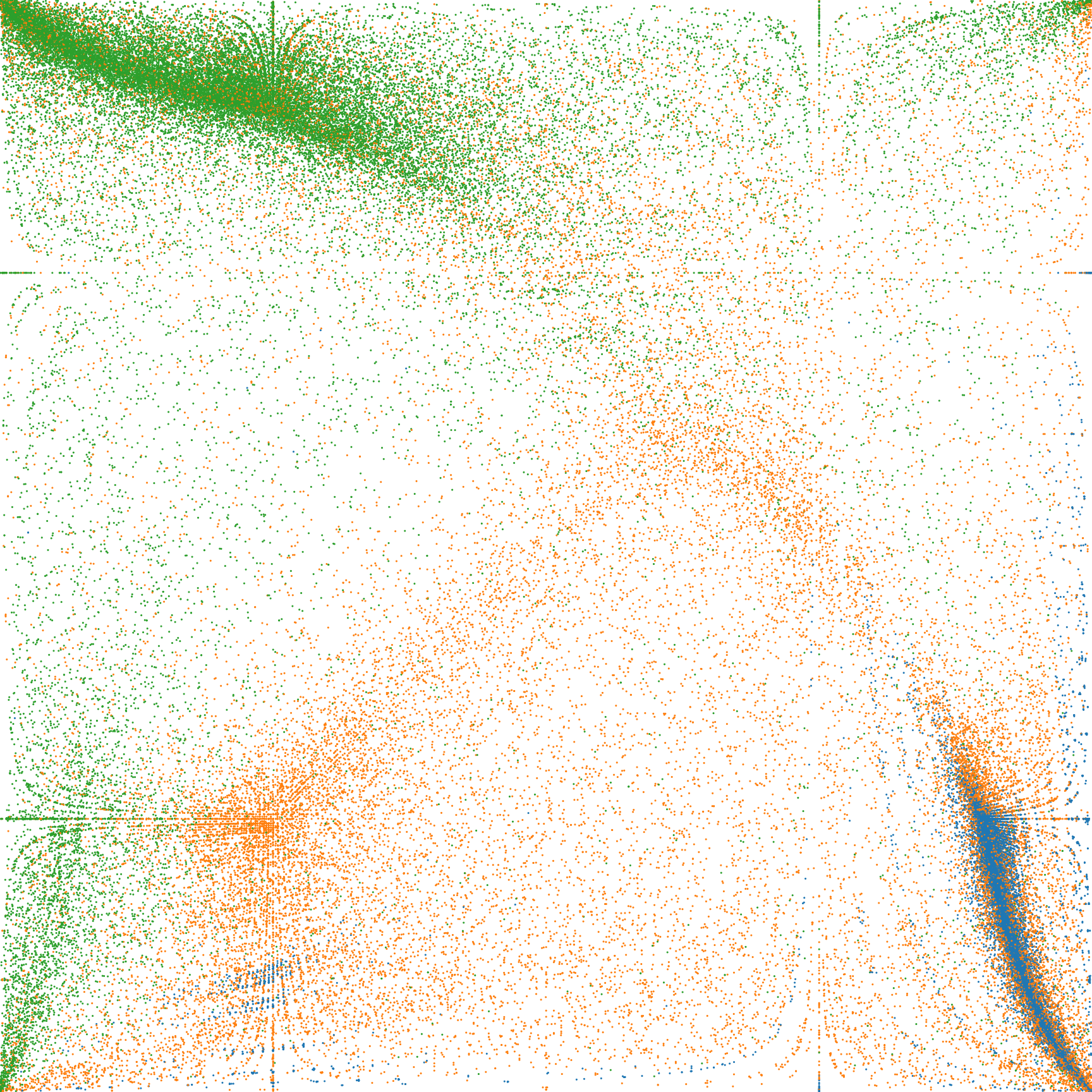,font size=\footnotesize}
        \caption{Local approximation of the AlphaStar league. The first strategy is fixed for both players, and only the second sampled. Key: \colorsquare{tabB1color} early training, \colorsquare{tabB2color} middle training, and \colorsquare{tabB3color} late training.}
        \label{fig:alphastar_local}
    \end{subfigure}
    \caption{Analysis of a large empirical normal-form game. In early and late training, the local strategic space is mostly transitive (Dominant), and in mid training the space is cyclic (Cycle). This matches \cite{czarnecki2020_spinning}'s ``spinning top'' hypothesis in games of skill.}
    \label{fig:alphastar}
\end{figure}

\paragraph{AlphaStar League} The AlphaStar league is a symmetric zero-sum, $888\times888$, empirical normal-form game. It consists of payoffs between policies learned by AlphaStar \citep{vinyals2019_starcraft}. The visualization of this game is shown in Figure~\ref{fig:alphastar_global}. The game is symmetric, so we can utilize player symmetry. The game is zero-sum, so we only need to consider zero-sum quadrants. This plot is also strategy invariant, so we could have utilized strategy symmetry. However, to have a simpler comparison to Figure~\ref{fig:alphastar_local}, we do not use strategy symmetry. Therefore the plot is over the domain: $0 \leq \theta_1 \leq \pi$ and $-\pi \leq \theta_2 \leq 0$, although outside of $0 \leq \theta_1 \leq \frac{\pi}{2}$ and $-\frac{\pi}{2} \leq \theta_2 \leq 0$ the plot is redundant. The empirical game mainly has \dominantgame~Dominant strategic interaction, although there are significant \samaritangame~Samaritan and \cyclegame~Cycle components.

\cite{czarnecki2020_spinning} hypothesised that games of skill have few cyclic interactions early in training as policies can easily transitively improve. In mid training, there are numerous reasonably competent and diverse policies, which results in interesting cyclic interactions. However, in late training, the interactions between policies become less cyclic, as the skills needed to play the game are perfected. \cite{czarnecki2020_spinning} describes this phenomenon as a ``spinning top'' shaped distribution of cycles. We test this hypothesis in our visualization by using an alternative game sampling scheme: fixing the first strategy, $a^A$, for both players, and sampling the second strategy for each player randomly, $\tilde{a}^B_p$.
\begin{align}
    \tilde{G}^\text{local}_p(\tilde{a}^B_1,\tilde{a}^B_2) = \begin{bmatrix}
        G_p(a^A,a^A) & G_p(a^A,\tilde{a}^B_2) \\
        G_p(\tilde{a}^B_1,a^A) & G_p(\tilde{a}^B_1,\tilde{a}^B_2)
    \end{bmatrix} \to (\tilde{\theta}^\text{local}_1, \tilde{\theta}^\text{local}_2)
\end{align}
The AlphaStar league game is symmetric, so it is more natural to use the same fixed strategy for both players. Note that in general, we could have different fixed strategies for each player.
\begin{align}
    \tilde{G}^\text{local}_p(\tilde{a}^B_1,\tilde{a}^B_2) = \begin{bmatrix}
        G_p(a_1^A,a_2^A) & G_p(a_1^A,\tilde{a}^B_2) \\
        G_p(\tilde{a}^B_1,a_2^A) & G_p(\tilde{a}^B_1,\tilde{a}^B_2)
    \end{bmatrix} \to (\tilde{\theta}^\text{local}_1, \tilde{\theta}^\text{local}_2)
\end{align}
We call this sampling scheme \emph{local} because it is sampling over deviation strategies around a fixed background strategy $a^{AA} = (a_1^A, a_2^A)$. Such a sampling strategy will provide an indication of the strategic dynamics locally around these background strategies.

The strategies in the AlphaStar league game correspond to policies roughly ordered according to their training time. Later strategies are policies trained against distributions over previous policies. Therefore, later policies have more training time and have trained against more diverse opponents, and will more likely be stronger. But what does the strategic landscape of the game look like at each stage of training? By analysing the game around background strategies which correspond to early, middle and late training, we can test \cite{czarnecki2020_spinning}'s hypothesis (Figure~\ref{fig:alphastar_local}).

Unsurprisingly, \colorsquare{tabB1color} early training primarily occupies a \equigame{-1}{-1}{-1}{-1}~Dominant region, where players strictly wish to deviate from the their weakly trained fixed background policy. In contrast,  \colorsquare{tabB3color} late training primarily occupies the opposite \equigame{1}{1}{1}{1}~Dominant region, where players strictly prefer to stick with their fixed background strategies. \colorsquare{tabB2color} Mid training has a varied strategic space, but has more mass in the \cyclegame~Cycle region than the other background policies. Broadly, this analysis supports the ``spinning top'' hypothesis. A deeper analysis of the training timeline could uncover more structure.

\subsection{N-Player Polymatrix Game Visualization}
\label{sec:polymatrixvis}

It is possible to extend the two-player local subgame sampling technique to n-player games by defining a fixed background strategy for all players, $a^{A...A} = (a_1^A, ..., a_N^A)$, and then visualizing all pairwise player interactions.
\begin{align}
    \tilde{G}^\text{local}_{pq}(\tilde{a}^B_p,\tilde{a}^B_q) = \begin{bmatrix}
        G_p(a_p^A,a_q^A,a_{-p-q}^{...}) & G_p(a_p^A,\tilde{a}^B_2,a_{-p-q}^{...}) \\
        G_p(\tilde{a}^B_p,a_q^A,a_{-p-q}^{...}) & G_p(\tilde{a}^B_p,\tilde{a}^B_q,a_{-p-q}^{...})
    \end{bmatrix} \to (\tilde{\theta}^\text{local}_p, \tilde{\theta}^\text{local}_q) \qquad \forall p \neq q \in [1,N]
\end{align}
To avoid ambiguity with the player slots of the original n-player game we rename the representation parameters from $\theta_1$ and $\theta_2$ to $\theta_r$ and $\theta_c$ to indicate the row and column players. Because the order of the players and strategies matter in this pairwise approximation of an n-player game, it is necessary to use the full invariant space: $-\pi \leq \theta_r \leq +\pi$ and $-\pi \leq \theta_c \leq +\pi$. It is unnecessary to plot both permutations of the player pairs: the point cloud will be the same but mirrored over $\theta_r = \theta_c$.

Considering only pairwise player interaction is an established succinct game representation called the \emph{polymatrix approximation} \citep{janovskaja1968_polymatrix}. This representation is described by the tuple $(G_{1,2}(a_1, a_2), G_{1,3}(a_1, a_3),\allowbreak ...,\allowbreak G_{N-1,N}(a_N, a_{N-1}))$. By fixing the background strategies in an n-player normal-form game we create a \emph{local polymatrix approximation} around these background strategies. We need not restrict ourselves to strategies in a normal-form game. We can instead sample entire policies, $\tilde{\pi}_p^B$, in extensive-form games around fixed background policies, $\pi^{A...A} = (\pi_1^A, ..., \pi_N^A)$. Either stochastic or deterministic policies can be sampled (we sample deterministic policies in our experiments). Therefore the visualization can be used to approximate the strategic dynamics of a game around salient policies, such as the policies players are currently executing in a game, or perhaps ones that have been found using a learning method. We expect the local polymatrix approximation landscape to be different around different background policies, as policies influence transition dynamics in the game.
\begin{align}
    \tilde{G}^\text{local}_{pq}(\tilde{\pi}^B_p,\tilde{\pi}^B_q) = \begin{bmatrix}
        G_p(\pi_p^A,\pi_q^A,\pi_{-p-q}^{...}) & G_p(\pi_p^A,\tilde{\pi}^B_2,\pi_{-p-q}^{...}) \\
        G_p(\tilde{\pi}^B_p,\pi_q^A,\pi_{-p-q}^{...}) & G_p(\tilde{\pi}^B_p,\tilde{\pi}^B_q,\pi_{-p-q}^{...})
    \end{bmatrix} \to (\tilde{\theta}^\text{local}_p, \tilde{\theta}^\text{local}_q) \qquad \forall p \neq q \in [1,N]
\end{align}

\paragraph{Three-Player Leduc Poker} Leduc poker \citep{southey2005_leduc}, implemented in OpenSpiel \citep{lanctot2019_openspiel}, is a simplified Texas Hold'em implementation with $N$ players, two suits and $2(N+1)$ cards. We study the three-player game (Figure~\ref{fig:leduc_poker}), with each player respectively having the background policies: always raise, always call, and always fold. If any of these actions are infeasible at an information state, the policy will fall back to the call action. Unsurprisingly, given players must pay a blind, always fold is a losing strategy. This is most apparent when analysing \colorsquare{tab2color} call vs fold where call is almost always the best strategy to play. The majority of the points are in the basins of games along the $\theta_c$ axis: \equigame{1}{1}{1}{1}~\equigame{1}{1}{-1}{-1}~Dominant and \equigame{1}{1}{1}{-1}~\equigame{1}{1}{-1}{1}~Samaritan. This indicates two properties. Firstly, that call is almost always a dominant strategy for the row player (in the context of the other background policies), regardless of the column player's strategy. Secondly, the strength of the fold strategy depends on what the other players play. It is sometimes good (\equigame{1}{1}{1}{1} Dominant and \equigame{1}{1}{1}{-1} Samaritan), and sometimes bad (\equigame{1}{1}{-1}{-1} Dominant and \equigame{1}{1}{-1}{1} Samaritan). Fold is also poor in the context of \colorsquare{tab3color} fold vs raise. Deviating from fold is sometimes a better strategy (\equigame{-1}{-1}{1}{1}~Dominant and \equigame{-1}{-1}{1}{-1}~Samaritan) but there are situations where it is not (\equigame{1}{1}{1}{1}~Dominant and \equigame{1}{-1}{1}{1}~Samaritan). It is very rare for fold to be the worse strategy when playing against the other background policies, but be the preferred strategy against the deviation strategy (\equigame{-1}{1}{1}{1}~Samaritan). Generally, clustering around the $\theta_c$ axis indicates a strong row player and clustering around the $\theta_r$ axis indicates a strong column player. The \colorsquare{tab1color} raise vs call dynamics are interesting: the majority of the points lie in the anti-clockwise invariant-zero-sum region. Player 3, who always folds, plays such a weak strategy that they are an irrelevant player and the local raise vs call two-player game is almost perfectly zero-sum. Call is usually a good strategy against raise and a poor strategy against deviations from raise, and while raise is a poor strategy against call it is usually a good strategy against deviations, resulting in a \equigame{-1}{1}{1}{-1} Cycle. The majority of the points for all the local games lie in the invariant-zero-sum quadrants, which is unsurprising because this is a zero-sum game. Only two-player invariant-zero-sum games lie completely in the invariant-zero-sum quadrants.

\begin{figure}[t!]
    \centering
    \invarplot{no quadrant names,image=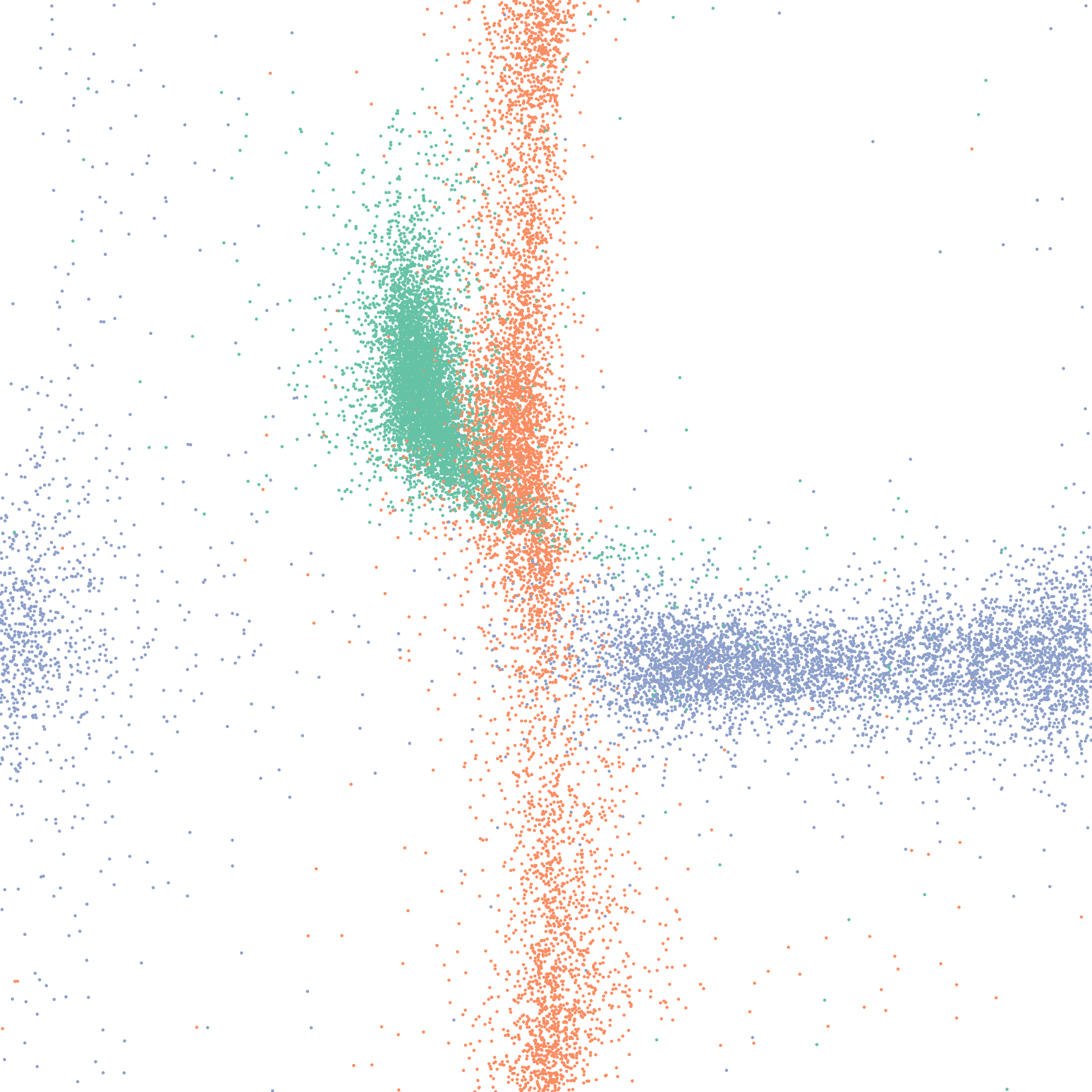,no partial trivial embedding,rc axes,no phi axes,no highlight symmetries,no symmetric games,no symmetries}
    \caption{Local polymatrix approximation visualization of three-player Leduc poker around the background strategies: always raise, always call, and always fold. Key: \colorsquare{tab1color} raise vs call, \colorsquare{tab2color} call vs fold, \colorsquare{tab3color} fold vs raise.}
    \label{fig:leduc_poker}
\end{figure}

\paragraph{Tiny Bridge 2vs2} Tiny Bridge is an extensive-form, two versus two, team game \citep{lockhart2020_bridge} implemented in OpenSpiel \citep{lanctot2019_openspiel} with zero-sum dynamics between teams and common-payoff dynamics within the team. The game is a simplified version of the popular card game Contract Bridge, but consists of only two suits each with four cards. Using the polymatrix visualization tools described above we study the dynamics of Tiny Bridge. In doing this we verify that the visualization a) captures interesting dynamics between teammates and opponents, and b) is able to differentiate between dynamics under different background policies. To produce quantifiably different policies, we generate twelve random deterministic policies for each player, compute an empirical normal-form game from their expected returns \citep{wellman2006_egta} (Figure~\ref{fig:tiny_bridge_payoffs}), and rate them using a game theoretic rating scheme \citep{marris2022_game_theoretic_rating} (Figure~\ref{fig:tiny_bridge_rating}). The ratings scheme is calculated on expected payoff under a joint equilibrium distribution (Figure~\ref{fig:tiny_bridge_cce}). We choose to use a maximum entropy CCE. We select the strongest policies as the ``best vs best'' background set (Figure~\ref{fig:tiny_bridge_polymatrix_visualization_best}) and the strongest of team 1 with the weakest of team 2 as the ``best vs worst'' background set (Figure~\ref{fig:tiny_bridge_polymatrix_visualization_mixed}). The local polymatrix approximation of each set is visualized using a different colour for each of the pairwise interactions. The mixed set contains (relatively) strong policies for team 1: they are good at countering their opponents and good at coordinating with each other. The strong set contains competent policies for all players.

The two teams, labeled 1 and 2, each have two players, labeled A and B. For both sets, as expected, the common-payoff intra-team dynamics (\colorsquare{tab1color}~1A vs 1B and \colorsquare{tab2color}~2A vs 2B) only lie in the off-diagonal quadrants and the zero-sum inter-team dynamics (\colorsquare{tab3color}~1A vs 2A, \colorsquare{tab4color}~1A vs 2B, \colorsquare{tab5color}~1B vs 2A, and \colorsquare{tab6color}~1B vs 2B) lie in the diagonal quadrants. Focusing on intra-team dynamics (\colorsquare{tab1color}~1A vs 1B and \colorsquare{tab2color}~2A vs 2B), for the best vs best background set the point cloud is concentrated around \dominantgame~Dominant and \equigame{1}{1}{1}{-1}~Samaritan. This indicates that both teams have good intra-team cooperation. However for the best vs worst background set, the worse team's policies (\colorsquare{tab2color}~2A vs 2B) have poor intra-team cooperation. Points are clustered around \equigame{-1}{-1}{-1}{-1}~Dominant and \equigame{-1}{-1}{-1}{1}~Samaritan, indicating that deviating away from the background policy is advantageous. Focusing on inter-team dynamics, the best vs worst background set shows that team 1 out-competes team 2 (\colorsquare{tab3color}~1A vs 2A, \colorsquare{tab4color}~1A vs 2B, \colorsquare{tab5color}~1B vs 2A, and \colorsquare{tab6color}~1B vs 2B). The points are clustered along the $\theta_c$ axis which indicates a stronger row player. Furthermore, the points are primarily in the basins of \equigame{1}{1}{-1}{1}~Samaritan and \equigame{1}{1}{-1}{-1}~Dominant: showing that the row player prefers its background policy over deviations and the column player prefers deviations over background policy. Player 1A's dynamics (\colorsquare{tab3color}~1A vs 2A and \colorsquare{tab4color}~1A vs 2B) are closer to the $\theta_c$ axis than player 1B's dynamics (\colorsquare{tab5color}~1B vs 2A and \colorsquare{tab6color}~1B vs 2B) indicating that out of the two players on team 1, A is more competitive. This is not surprising as 1A plays first which has a natural advantage in Bridge as they have first opportunity to bid and convey information. The inter-team dynamics in the best vs best background policies are more balanced.

\begin{figure}[t!]
    \centering
    \begin{subfigure}[t]{0.19\textwidth}
        \includegraphics[width=\textwidth]{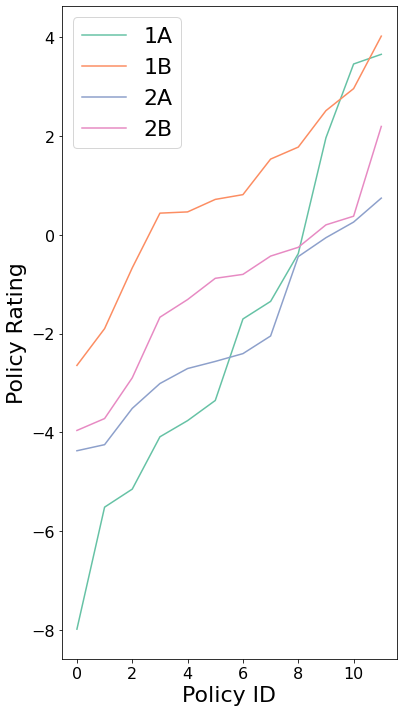}
        \caption{Ratings of Policies}
        \label{fig:tiny_bridge_rating}
    \end{subfigure} \hfill
    \begin{subfigure}[t]{0.38\textwidth}
        \includegraphics[width=\textwidth]{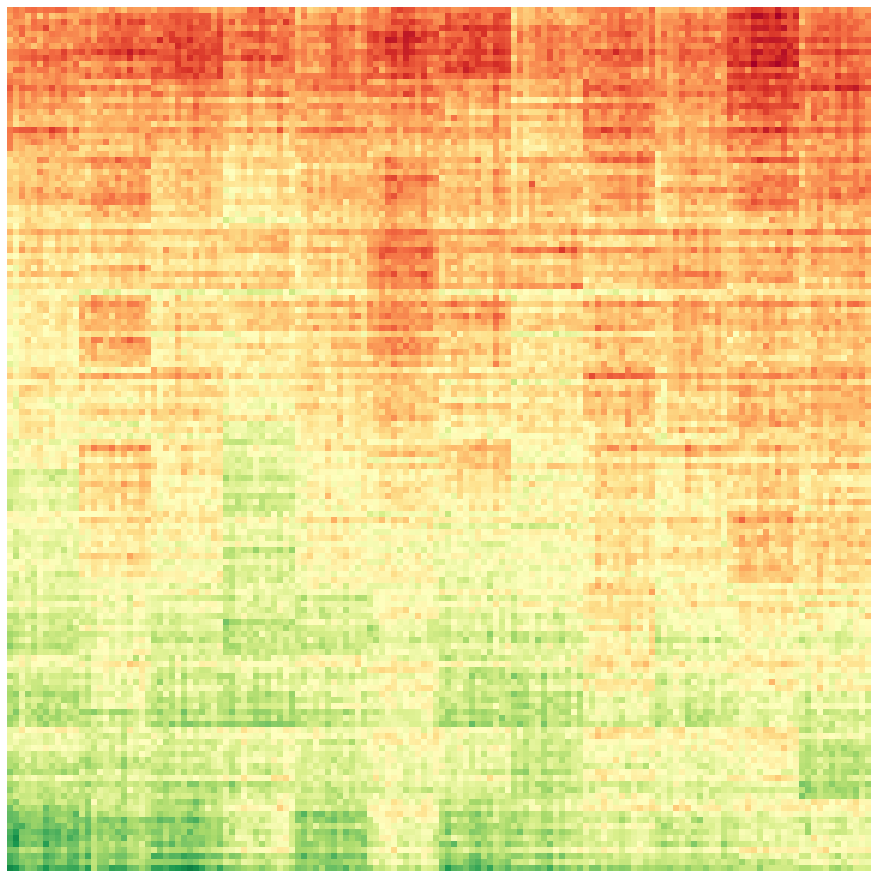}
        \caption{Team 1 vs Team 2 Payoffs}
        \label{fig:tiny_bridge_payoffs}
    \end{subfigure} \hfill
    \begin{subfigure}[t]{0.38\textwidth}
        \includegraphics[width=\textwidth]{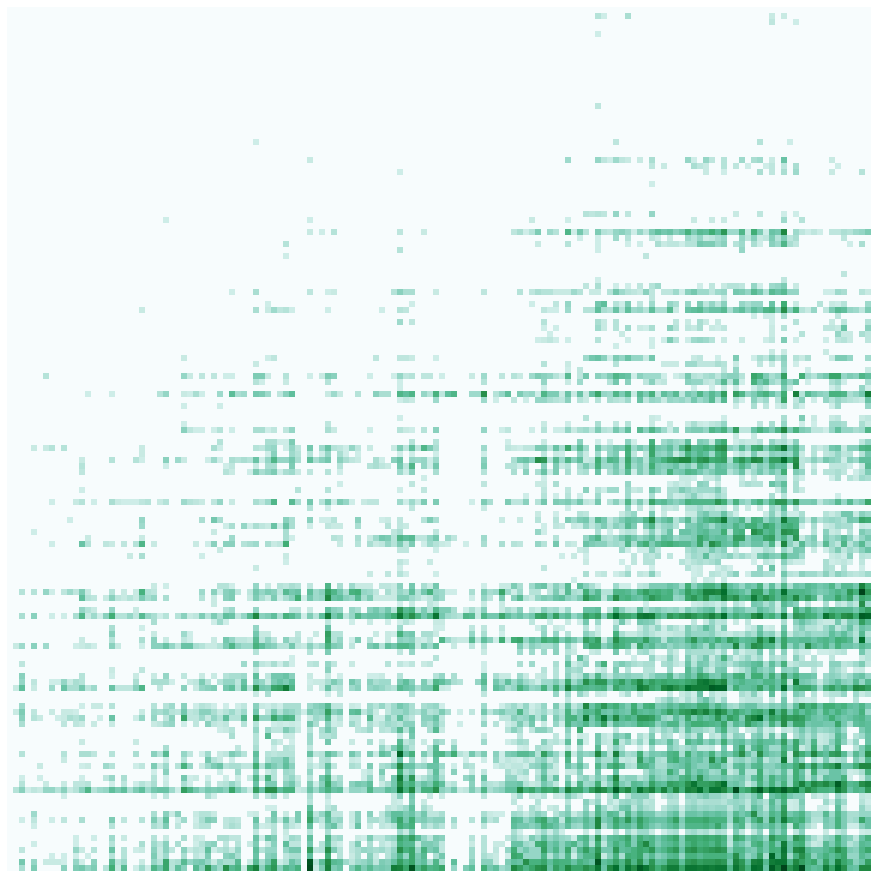}
        \caption{Team 1 vs Team 2 CCE}
        \label{fig:tiny_bridge_cce}
    \end{subfigure}
    \caption{Twelve randomly generated policies for each of the four players in Tiny Bridge were rated using a game theoretic rating technique \citep{marris2022_game_theoretic_rating} based on a CCE. The ratings, payoffs, and CCE have been reordered from lowest to highest rating. The CCE and payoffs are visualized in two dimensions as team vs team for convenience; they remain four player games. Green, yellow and red indicate high, zero, and low team 1 payoff.}
    \label{fig:tiny_bridge_policy}
\end{figure}

\begin{figure}[t!]
    \centering
    \begin{subfigure}[t]{0.49\textwidth}
        \invarplot{no quadrant names,image=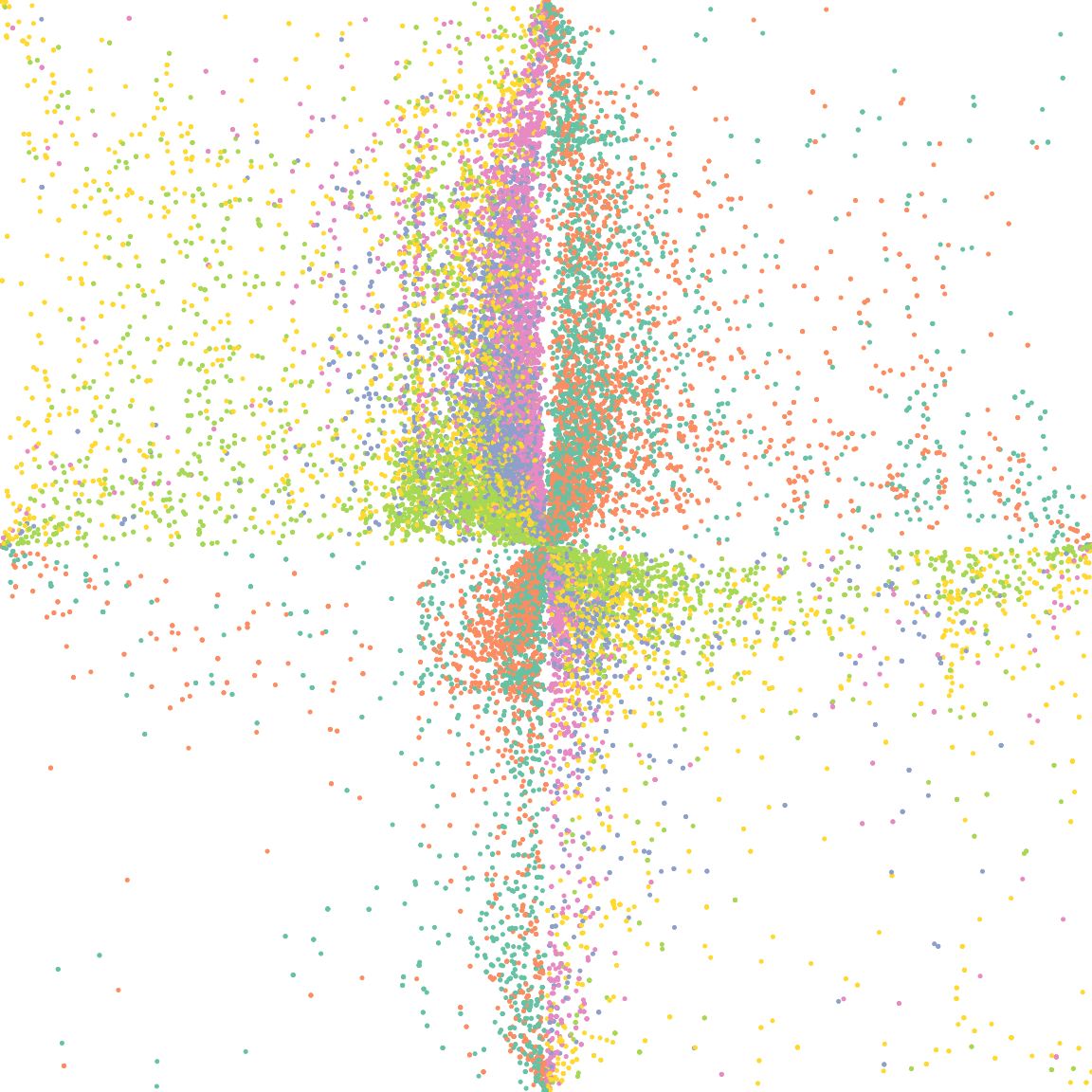,no partial trivial embedding,rc axes,no phi axes,no highlight symmetries,no symmetric games,no symmetries}
        \caption{Best team 1 vs best team 2}
        \label{fig:tiny_bridge_polymatrix_visualization_best}
    \end{subfigure} \hfill
    \begin{subfigure}[t]{0.49\textwidth}
        \invarplot{no quadrant names,image=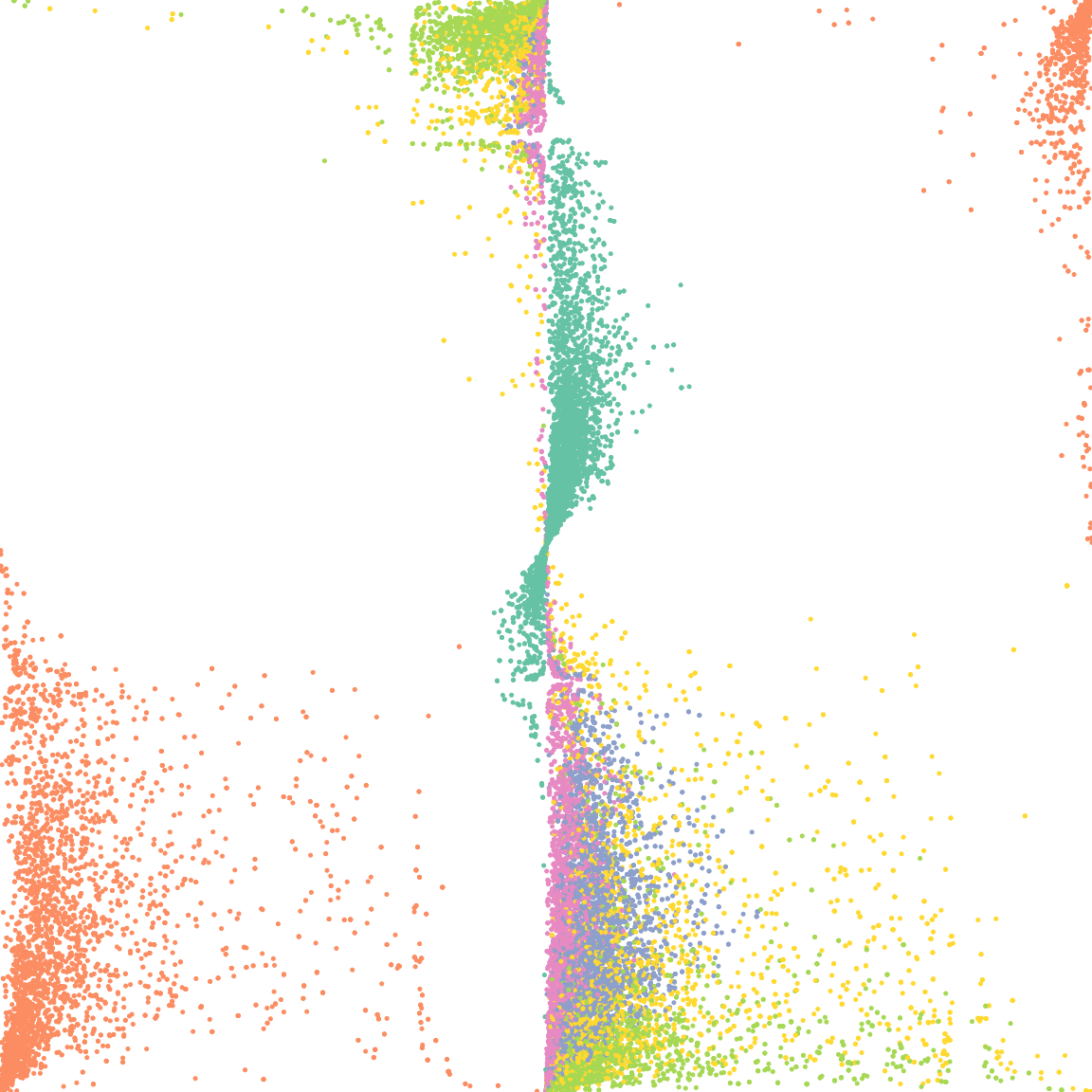,no partial trivial embedding,rc axes,no phi axes,no highlight symmetries,no symmetric games,no symmetries}
        \caption{Best team 1 vs worst team 2}
        \label{fig:tiny_bridge_polymatrix_visualization_mixed}
    \end{subfigure}
    \caption{Visualization of local polymatrix approximation of the 2vs2 team extensive-form Tiny Bridge with two different background policies sets. There are two teams, 1 and 2, each with two players, A and B. Key: \colorsquare{tab1color}~1A vs 1B, \colorsquare{tab2color}~2A vs 2B, \colorsquare{tab3color}~1A vs 2A, \colorsquare{tab4color}~1A vs 2B, \colorsquare{tab5color}~1B vs 2A, and \colorsquare{tab6color}~1B vs 2B.}
    \label{fig:tiny_bridge_polymatrix_visualization}
\end{figure}

\section{Discussion}

In this work we derived distance metrics and embeddings for n-player general-sum normal-form games, such that they are equilibrium-invariant, equilibrium-symmetric, and better-response-invariant. Similar metrics and embeddings could be readily derived for other succinct representations of games such as polymatrix, symmetric, zero-sum, or common-payoff.

To explore equilibrium-invariance, we focused on the most popular equilibrium solution concepts including NEs, CEs, and CCEs. Other equilibria such as quantal response equilibrium (QRE)~\citep{mckelvey1995_qre_lle_nf} may also be compatible with the equilibrium-invariant embedding we derived in this work. There may also be interesting connections to evolutionary stable strategies (ESS). We leave verification of other equilibrium concepts to future work.

Theorem~\ref{theorem:reversible_gains} highlighted the direct connection between the deviation gains and the equilibrium-invariant embedding. Since the deviation gains directly describe the space of equilibria, if one knows the equilibria of a system, one could estimate the deviation gains, and hence the equilibrium-invariant embeddings. Therefore the equilibrium-invariant embedding could be a useful tool in inverse game theory.

A metric-space can be useful in performing perturbation analysis (as hinted in Section~\ref{sec:equi_distance}). It is not uncommon for payoffs to be estimated from data, such as in empirical game-theoretic analysis (EGTA) \citep{walsh2002_egta,wellman2006_egta}. Such payoffs have uncertainty, and small changes in the payoffs can cause large changes to the resulting equilibria. A notion of distance between games can help answer questions about how a game's equilibria may change with a different estimate. For example, is it near an equivalence class boundary, or about to switch from being invariant-zero-sum to invariant-common-payoff?

Popular game-theoretic rating methods, like Nash Average \citep{balduzzi2018_nashaverage}, are used to rate and rank strategies in normal-form games. These rankings are invariant to affine transforms of the payoffs (Section~\ref{sec:game_theoretic_rating}). Therefore the equilibrium-invariant embedding preserves the game-theoretic ranking of strategies, which is further evidence of its fundamental nature.

Furthermore, we discovered a novel per-strategy scale better-response-invariant transform, that can be used to derive a set of 2×2 best-response-invariant embeddings. After symmetry, this results in 15 equivalence classes. This same set has been identified previously by studying best-responses of 2×2 games \citep{borm1987_classification_of_2x2_games}. There is a deep connection between best-response-invariance and equilibrium-invariance \citep{morris2004_best_response_equivalence}. However, this work also provides an equilibrium-invariant embedding, distance measure, efficient parameterization, graph representation, and naming scheme for the best-response-invariant embeddings.

We note that the majority of the games in the best-response-invariant embedding have indifferences (11 out of 15). Indifferences cannot be captured in ordinal payoffs: partially ordinal payoffs are required. However, popular topologies and classifications for 2×2 games only focus on ordinal games. We argue that such a choice both a) limits the space of interesting games that are studied, and b) places too much prominence on games without indifferences which are highly redundant, comprising of only 4 out of 15 of the best-response-invariant equivalence classes.

The 2×2 best-response-invariant embeddings can be used quickly to calculate equilibria for any 2×2 game, simply by storing the extreme points of the (C)CE polytope in a lookup table for the 15 best-response-invariant embeddings. The equilibria can be calculated for any 2×2 game by a) calculating the equilibrium-invariant embedding b) identifying the equivalence class the game belongs to, and c) scaling the equilibria in the lookup table according to Theorem~\ref{thm:fundamental_set}.

Summarizing and visualizing large datasets with high dimensionality is an important area of research in machine learning. Principled techniques like PCA \citep{pearson1901_pca,hotelling1936_pca} that reduce dimensionality, while maintaining the maximum amount of information, are ubiquitous. PCA allows data to be visualized in fewer dimensions. Other less principled techniques like t-SNE \citep{vandermaaten2008_tsne,hintonroweis2003_sne} are also a very popular tool for inspecting datasets. While the fields of statistics and machine learning have benefited from such tools for decades, game theory has lacked such analysis tools, although some attempts have been made to visualize games \citep{czarnecki2020_spinning,omidshafiei2022_multiagent_behaviour_analysis,omidshafiei2020_landscape_of_multiplayer_games}. Large games are extremely complex: they have many possible equilibria and complicated better-response dynamics. The visualization tools in this work build on fundamental principles and could be an important first step to developing such analysis techniques for large games.

\paragraph{Limitations} \hfill \\
To motivate the metric spaces and embeddings we focused on 2×2 games. We derived an efficient parameterization of the 2×2 equilibrium-invariant embedding, which requires only two variables. This is achieved by making an assumption: only the equilibria, or similarly, the strategic interactions (best-response dynamics), of games are important. Most solution concepts are equilibrium or best-response based and the majority of the study of games is devoted to finding these solutions. However, this assumption does have a consequence: the preference ordering of \emph{joint strategy} payoffs is not necessarily maintained after equilibrium-invariant transforms. This means that equilibrium selection methods such as maximum welfare could select for different equilibria in the equilibrium-invariant embedding\footnote{Although, with knowledge of the mean and scale used to make the transformation, a trivial modification to the objective of the linear program could preserve the selection}. Furthermore, the narrative of games that are motivated based on the ordering of joint payoffs may unravel. For example the most studied game, Prisoner's Dilemma, has dominant strategies that do not result in a welfare maximizing equilibrium. \ordinalgame{4231}{4321}~Prisoner's Dilemma is transformed to \dominantgame~Dominant which is intuitive from an equilibrium preserving perspective, but less so from a welfare or social dilemma perspective. We stress that this is a feature of the methods described in this work. If one is only concerned with the resulting rational behaviour of players, simplifying analysis of the game to only its equilibrium-invariant embedding or best-response-invariant embedding is a valuable way of removing redundant features of games.

The 2×2 equilibrium-invariant embedding can be leveraged to produce visualizations of larger games. Payoff structure, equilibrium properties, and other details can be read from the visualizations by observing where the point cloud spreads and how the density of points land in best-response-invariant equivalence classes. Although we provide evidence that visualizations can give an insight into how cyclic a game is, it is only capable of showing cycles of length two, longer cycles may not be captured. For example, Rock-Paper-Scissors visualized using a 2×2 point cloud would only show ~\dominantgame~Dominant points.

\section{Conclusion}

We study payoff transforms that reduce the degrees of freedom in the space of games. This makes them a) easier to understand, b) easier to sample from, and c) easier to visualize. In this work we defined an equilibrium inspired metric-space, an equilibrium-invariant embedding, equilibrium-symmetric embedding, and better-response-invariant embedding for games. In order to explore these concepts we focused on 2×2 games which are well studied in the literature. We uncovered an efficient two variable parameterization of the 2×2 equilibrium-invariant embedding. These variables geometrically represent angles on unit circles which allows them to be clearly visualized in two dimensions. Several properties, including equilibrium support, cyclicness, competitiveness, distances, and symmetries, can be read from this visualization, which we explore. We applied a new equilibrium-payoff transform that enabled further simplification, and rediscovered a set of 15 equivalence classes of games. We argued this set is fundamental, and covers all interesting 2×2 strategic interactions. We name and explore the properties of these classes. Finally, we develop visualization tools for 2×2 and $|\mathcal{A}_1|$×$|\mathcal{A}_2|$ normal-form games and arbitrary dimensional polymatrix games. Since all extensive-form games have a normal-form representation and a local polymatrix approximation representation, we explored visualizing the strategic space of large games that have until now been considered intractable to visualize. We hope this work builds intuition for normal-form games, champions the set of 15 2×2 game classes, provides useful visualization tools for game theorists, and computational insights for game theoretic algorithm developers. 

\section*{Acknowledgements}

Thanks to Georg Ostrovski, David Parkes, Joel Leibo, Thore Graepel, Karl Tuyls, Ed Hughes, Harshnira Patani, and Kevin McKee for helpful discussion and feedback.




\bibliography{bibtex,colab}

\clearpage
\appendix

\section{Deviation Gains as Linear Operators}
\label{sec:gains_as_linear_operators}

The deviation gains (Equations~\eqref{eq:wsce_def}, \eqref{eq:ce_def}) and~\eqref{eq:cce_def}) can be found from linear transforms of each player's payoff.

\begin{lemma} \label{lemma:deviation_gain_rank}
    The WSCE, CE and CCE deviation gains can be written as linear operations on the payoffs, $A^\text{WSCE}(a'_p, a''_p, a_{-p}) = \sum_{\tilde{a}} T^\text{CCE}_p(a'_p, a''_p, a_{-p}, \tilde{a}) G_p(\tilde{a})$, $A^\text{CE}(a'_p, a''_p, a) = \sum_{\tilde{a}} T^\text{CE}_p(a'_p, a''_p, a, \tilde{a}) G_p(\tilde{a})$, and $A^\text{CCE}(a'_p, a) = \sum_{\tilde{a}} T^\text{CCE}_p(a'_p, a, \tilde{a}) G_p(\tilde{a})$. The rank of the linear operations is $|\mathcal{A}| - |\mathcal{A}_{-p}|$.
\end{lemma}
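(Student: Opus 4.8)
The plan is to separate the two assertions and dispatch the linearity claim immediately, then spend the effort on the rank. Linearity is essentially free: each of $A^\text{WSCE}_p$, $A^\text{CE}_p$, $A^\text{CCE}_p$ is assembled purely from additions, subtractions, and selections of entries of $G_p$, so, regarding $G_p$ as a vector in $\mathbb{R}^{|\mathcal{A}|}$, each gain is a fixed linear combination of its coordinates and one simply reads off the operator entries --- for instance $T^\text{CCE}_p(a'_p\otimes a,\tilde a)=\mathbf{1}[\tilde a=(a'_p,a_{-p})]-\mathbf{1}[\tilde a=a]$, and analogously for $T^\text{WSCE}_p$ and $T^\text{CE}_p$. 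All the content is in the rank, which I would obtain from rank--nullity by identifying the kernel of each operator.

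The key observation I would use is that all three operators annihilate exactly the payoffs that do not depend on the player's own strategy. For one inclusion: replacing $G_p(a_p,a_{-p})$ by $G_p(a_p,a_{-p})+b_p(a_{-p})$ for an arbitrary offset $b_p\colon\mathcal{A}_{-p}\to\mathbb{R}$ leaves every gain unchanged --- this is precisely the offset half of the affine transform taken at unit scale (Theorem~\ref{thoerem:affine_transform}) --- so the $|\mathcal{A}_{-p}|$-dimensional subspace $V=\{\,G_p: G_p(a_p,a_{-p})=b_p(a_{-p})\ \text{for some }b_p\,\}$ lies in $\ker T^\text{WSCE}_p\cap\ker T^\text{CE}_p\cap\ker T^\text{CCE}_p$. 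For the reverse inclusion: $G_p\in\ker T^\text{WSCE}_p$ forces $G_p(a'_p,a_{-p})=G_p(a''_p,a_{-p})$ for all $a'_p,a''_p,a_{-p}$, i.e. $G_p\in V$; the same follows for $T^\text{CCE}_p$ (whose gains contain these WSCE differences outright) and for $T^\text{CE}_p$ (a vanishing CE gain already on the slice $a_p=a''_p$ gives the same equalities). Hence $\ker T^\text{WSCE}_p=\ker T^\text{CE}_p=\ker T^\text{CCE}_p=V$, of dimension $|\mathcal{A}_{-p}|$, and rank--nullity on $\mathbb{R}^{|\mathcal{A}|}$ yields $\text{rank}(T^\text{WSCE}_p)=\text{rank}(T^\text{CE}_p)=\text{rank}(T^\text{CCE}_p)=|\mathcal{A}|-|\mathcal{A}_{-p}|$.

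The one step I expect to need care is the kernel computation for $T^\text{CE}_p$: its rows come in blocks indexed by the recommended strategy $a''_p$, and each block only reads the $a_p=a''_p$ slice of $G_p$, so I must confirm this restricted access still detects every bit of dependence of $G_p$ on $a_p$. It does, because letting $a''_p$ range over $\mathcal{A}_p$ and comparing within each block recovers every pairwise difference $G_p(a'_p,a_{-p})-G_p(a''_p,a_{-p})$. Beyond this bookkeeping I do not anticipate a genuine obstacle; once the ``invariant to offsets $b_p(a_{-p})$'' remark is in hand the rank is a one-line consequence, and it also matches the degrees-of-freedom count $(|\mathcal{A}_p|-1)|\mathcal{A}_{-p}|=|\mathcal{A}|-|\mathcal{A}_{-p}|$ used in the Linear Offset Rank Reduction argument.
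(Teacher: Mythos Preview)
Your proposal is correct and takes a genuinely different route from the paper. The paper writes out the block structure of the operator matrices $T^\text{CCE}_1$ and $T^\text{CE}_1$ explicitly, observes that one block column equals the negative sum of the others (giving $|\mathcal{A}_{-p}|$ dependent columns), and then asserts that the remaining columns are linearly independent. You instead compute the kernel and apply rank--nullity: the kernel is exactly the space of trivial payoffs $V=\{G_p:G_p(a_p,a_{-p})=b_p(a_{-p})\}$, of dimension $|\mathcal{A}_{-p}|$, and the rank follows. Your argument is arguably cleaner --- it handles all three operators uniformly, it ties directly back to the affine-offset invariance the paper has already proved, and it actually justifies both inequalities in the rank claim, whereas the paper's ``the remaining are linearly independent'' is stated without further argument. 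The paper's approach, on the other hand, makes the structure of the operators visually explicit, which may aid intuition for readers who want to see the matrices themselves.
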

\begin{proof}
    Flatten the payoff into a vector, $G_p(a''')$, of length $|\mathcal{A}|$, flatten the gain into vectors, $A^\text{CCE}_p(a'_p \otimes a)$, $A^\text{WSCE}_p(a'_p \otimes a''_p \otimes a_{-p})$ and $A^\text{CE}_p(a'_p \otimes a''_p \otimes a)$, of length $|\mathcal{A}_p||\mathcal{A}|$, $|\mathcal{A}_p|^2|\mathcal{A}_p|$ and $|\mathcal{A}_p|^2|\mathcal{A}|$, and use matrix linear operators, $T^\text{CCE}_p(a'_p \otimes a, a)$, $T^\text{WSCE}_p(a'_p \otimes a''_p \otimes a_{-p}, a''')$ and $T^\text{CE}_p(a'_p \otimes a''_p \otimes a, a''')$, with shapes $|\mathcal{A}_p||\mathcal{A}| \times |\mathcal{A}|$, $|\mathcal{A}_p|^2|\mathcal{A}_{-p}| \times |\mathcal{A}|$ and $|\mathcal{A}_p|^2|\mathcal{A}| \times |\mathcal{A}|$. Inspect the block matrix structure of $T^\text{CCE}_1$, where $I$ is the identity matrix of shape $|\mathcal{A}_{-1}| \times |\mathcal{A}_{-1}|$, which has $|\mathcal{A}_1|$ block columns and $|\mathcal{A}_1|^2$ block rows. Note the property that the definition of the WSCE is just a reshaped version of the CCE, $A^\text{WSCE}_p(a'_p, a''_p, a_{-p})= A^\text{CCE} (a'_p, a''_p, a_{-p})$ . A similar inspection of $T^\text{CE}_1$ can be made, which has $|\mathcal{A}_1|$ block columns and $|\mathcal{A}_1|^3$ block rows.
    
    \noindent\begin{subequations}
        \begin{minipage}[c]{0.48\textwidth}
            \begin{align}
                \tilde{T}^\text{WSCE}_1 = T^\text{CCE}_1  = \begin{bmatrix}
                     $0$ &  $0$ & ... &  $0$ &  $0$ \\
                     $I$ & $-I$ & ... &  $0$ &  $0$ \\
                     \vdots & \vdots & \ddots &  \vdots &  \vdots \\
                     $I$ &  $0$ & ... & $-I$ &  $0$ \\
                     $I$ &  $0$ & ... &  $0$ & $-I$ \\ \hline
                    $-I$ &  $I$ & ... &  $0$ &  $0$ \\
                     $0$ &  $0$ & ... &  $0$ &  $0$ \\
                     \vdots & \vdots & \ddots &  \vdots &  \vdots \\
                     $0$ &  $I$ & ... & $-I$ &  $0$ \\
                     $0$ &  $I$ & ... &  $0$ & $-I$ \\  \hline
                     \vdots & \vdots & \vdots &  \vdots & \vdots \\  \hline
                    $-I$ &  $0$ & ... &  $0$ &  $I$ \\
                     $0$ & $-I$ & ... &  $0$ &  $I$ \\
                     \vdots & \vdots & \ddots &  \vdots &  \vdots \\
                     $0$ &  $0$ & ... & $-I$ &  $I$ \\
                     $0$ &  $0$ & ... &  $0$ &  $0$
                \end{bmatrix}
            \end{align}
        \end{minipage}
        \hfill
        \begin{minipage}[c]{0.48\textwidth}
            \begin{align}
                T^\text{CE}_1 = \begin{bmatrix}
                     $0$ &  $0$ & ... &  $0$ &  $0$ \\
                     $0$ &  $0$ & ... &  $0$ &  $0$ \\
                     \vdots & \vdots & \ddots &  \vdots &  \vdots \\
                     $0$ &  $0$ & ... &  $0$ &  $0$ \\
                     $0$ &  $0$ & ... &  $0$ &  $0$ \\  \hline
                     $0$ &  $0$ & ... &  $0$ &  $0$ \\
                     $I$ & $-I$ & ... &  $0$ &  $0$ \\
                     \vdots & \vdots & \ddots &  \vdots &  \vdots \\
                     $0$ &  $0$ & ... &  $0$ &  $0$ \\
                     $0$ &  $0$ & ... &  $0$ &  $0$ \\  \hline
                     \vdots & \vdots & \vdots &  \vdots & \vdots \\  \hline
                     $0$ &  $0$ & ... &  $0$ &  $0$ \\
                     $0$ &  $0$ & ... &  $0$ &  $0$ \\
                     \vdots & \vdots & \ddots &  \vdots &  \vdots \\
                     $0$ &  $0$ & ... &  $0$ &  $0$ \\
                     $I$ &  $0$ & ... &  $0$ & $-I$ \\  \hline \hline
                    $-I$ &  $I$ & ... &  $0$ &  $0$ \\
                     $0$ &  $0$ & ... &  $0$ &  $0$ \\
                     \vdots & \vdots & \ddots &  \vdots &  \vdots \\
                     $0$ &  $0$ & ... &  $0$ &  $0$ \\
                     $0$ &  $0$ & ... &  $0$ &  $0$ \\  \hline
                     $0$ &  $0$ & ... &  $0$ &  $0$ \\
                     $0$ &  $0$ & ... &  $0$ &  $0$ \\
                     \vdots & \vdots & \ddots &  \vdots &  \vdots \\
                     $0$ &  $0$ & ... &  $0$ &  $0$ \\
                     $0$ &  $0$ & ... &  $0$ &  $0$ \\ \hline
                     \vdots & \vdots & \vdots &  \vdots &  \vdots
                \end{bmatrix}
            \end{align}
        \end{minipage}
    \end{subequations}
    
    The first block column can be constructed from the negative sum of the remaining block columns. Therefore there are $|\mathcal{A}_{-1}|$ redundant columns. The remaining are linearly independent, resulting in a rank of $|\mathcal{A}| - |\mathcal{A}_{-1}|$. Similar construction patterns can be made for $T^\text{CCE}_p(a'_p \otimes a, a)$. Again, one block column, or $|\mathcal{A}_{-p}|$ are not linearly independent, so the rank is $|\mathcal{A}| - |\mathcal{A}_{-1}|$.
\end{proof}

\section{Game Theoretic Rating}
\label{sec:game_theoretic_rating}

Game-theoretic rating is a method for rating strategies in a normal-form game. Let a strategy rating, $r_p(a_p)$, be a numerical scalar for each player's strategies, and the strategy rank be the sorting order $w_p(a_p) = \arg \sort r_p(a_p)$ defined such that equal ratings are given equal rank (for example, $[1.1, -0.2, 0.3, 0.3] \to [3, 0, 1, 1]$). A very simple (but not game-theoretic) way of rating strategies in a game would be to examine their average payoff.
\begin{definition}[Average Rating]
    \begin{align}
        r^\text{avg}_p(a_p) = \frac{1}{|\mathcal{A}_{-p}|} \sum_{a_{-p}} G_p(a_p, a_{-p})
    \end{align}
\end{definition}

A popular game-theoretic rating method, Nash averaging (NA) \citep{balduzzi2018_nashaverage}, weights the payoffs by mixing over the maximum entropy Nash equilibrium solution. It is most commonly applied in two-player zero-sum where the solution is unique.
\begin{definition}[Nash Average Rating]
    \begin{align}
        r^\text{NA}_p(a_p) = \sum_{a_{-p}} G_p(a_p, a_{-p}) \left(\otimes_{q \in -p} \sigma_p^\text{MENE}(a_q) \right)
    \end{align}
\end{definition}

\begin{theorem}[Rank-Invariance]
    Nash-average is rank-invariant to affine transformations, $G_p(a) \to \hat{G}_p(a) = s_p G_p(a) + b_p(a_{-p})$, where $s_p > 0$ and $b_p(a_{-p})$ is arbitrary.
\end{theorem}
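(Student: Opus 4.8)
The plan is to show that an affine transform $G_p(a)\to\hat G_p(a)=s_pG_p(a)+b_p(a_{-p})$ with $s_p>0$ acts on the Nash-average rating vector $r^\text{NA}_p$ as a strictly increasing affine map $t\mapsto s_p t + c_p$, and that any such map preserves the sorting order $w_p(a_p)=\arg\sort r^\text{NA}_p(a_p)$, ties included. First I would invoke Theorem~\ref{thoerem:affine_transform}: since $b_p(a_{-p})$ is an offset over the other players' strategies and $s_p>0$ is a positive per-player scale, the set of ($\epsilon=0$) Nash equilibria of $\hat G$ equals that of $G$. The maximum-entropy Nash equilibrium is by construction a function of the Nash equilibrium set alone (the entropy-maximizing point of that set, under a fixed tie-break; unique in the two-player zero-sum case where the method is typically applied), and this set is literally unchanged, so $\hat\sigma^\text{MENE}=\sigma^\text{MENE}$; in particular the marginals $\sigma^\text{MENE}_p(a_q)$ that weight the payoffs are identical before and after the transform.

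Next I would substitute $\hat G_p(a_p,a_{-p})=s_pG_p(a_p,a_{-p})+b_p(a_{-p})$ into the definition of $r^\text{NA}$ and use linearity of the weighted sum. The factor $s_p$ pulls out of the first term, yielding $s_p\,r^\text{NA}_p(a_p)$. The remaining term $\sum_{a_{-p}}b_p(a_{-p})\bigl(\otimes_{q\in-p}\sigma^\text{MENE}_p(a_q)\bigr)$ depends on $a_{-p}$ only through both of its factors and not on $a_p$ at all, so it is a constant $c_p$. Hence $\hat r^\text{NA}_p(a_p)=s_p\,r^\text{NA}_p(a_p)+c_p$ for all $a_p\in\mathcal A_p$, with $s_p>0$ and $c_p$ independent of $a_p$.

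Finally, since $t\mapsto s_p t + c_p$ is strictly increasing on $\mathbb R$, it preserves both strict order and equality: $r^\text{NA}_p(a_p)\lessgtr r^\text{NA}_p(a'_p)$ iff $\hat r^\text{NA}_p(a_p)\lessgtr\hat r^\text{NA}_p(a'_p)$, and equal ratings remain equal. Therefore $\arg\sort$ returns the same rank vector, which is exactly rank-invariance. The step requiring the most care — the main obstacle — is the claim that the maximum-entropy selection is unaffected; this is not a computation but an observation that the selected distribution depends only on the equilibrium set and not on payoff magnitudes, which is immediate from Theorem~\ref{thoerem:affine_transform}. It is also worth stressing that the rating itself is \emph{not} invariant (it is positively rescaled and shifted by $c_p$); only the induced ranking $w_p$ is preserved, which is what the theorem asserts.
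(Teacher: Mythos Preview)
Your proposal is correct and follows the same approach as the paper: substitute the transformed payoff into the Nash-average definition, observe that the result is $s_p r^\text{NA}_p(a_p)$ plus an $a_p$-independent constant, and conclude that the ranking is preserved. You are actually more careful than the paper in one respect: the paper's proof silently uses the same $\sigma^\text{MENE}$ before and after the transform, whereas you explicitly justify this via Theorem~\ref{thoerem:affine_transform} and the observation that the entropy-maximizing selection depends only on the (unchanged) equilibrium set.
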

\begin{proof}
    Consider the ranking after an affine transformation
    \begin{subequations}
    \begin{align}
        \hat{r}^\text{NA}_p(a_p) &= \left( \sum_{a_{-p}} s_p G_p(a_p, a_{-p}) + b_p(a_{-p}) \right) \left(\otimes_{q \in -p} \sigma_p^\text{MENE}(a_q) \right) \\
        &= s_p r^\text{NA}_p(a_p) +  \underbrace{\sum_{a_{-p}} b_p(a_{-p}) \left(\otimes_{q \in -p} \sigma_p^\text{MENE}(a_q) \right)}_{\text{Constant: does not depend on $a_p$.}}
    \end{align}
    \end{subequations}
    It is easy to see that $w_p(a_p) = \hat{w}_p(a_p)$.
\end{proof}

The equilibrium-invariant embedding therefore preserves game-theoretic ranking of strategies. Furthermore, the equilibrium-invariant embedding may provide a more natural normalization of payoffs than the approach originally suggested by \cite{balduzzi2018_nashaverage}.

\section{Additional Figures}

\begin{figure}
    \centering
    \includegraphics[width=1.0\textwidth]{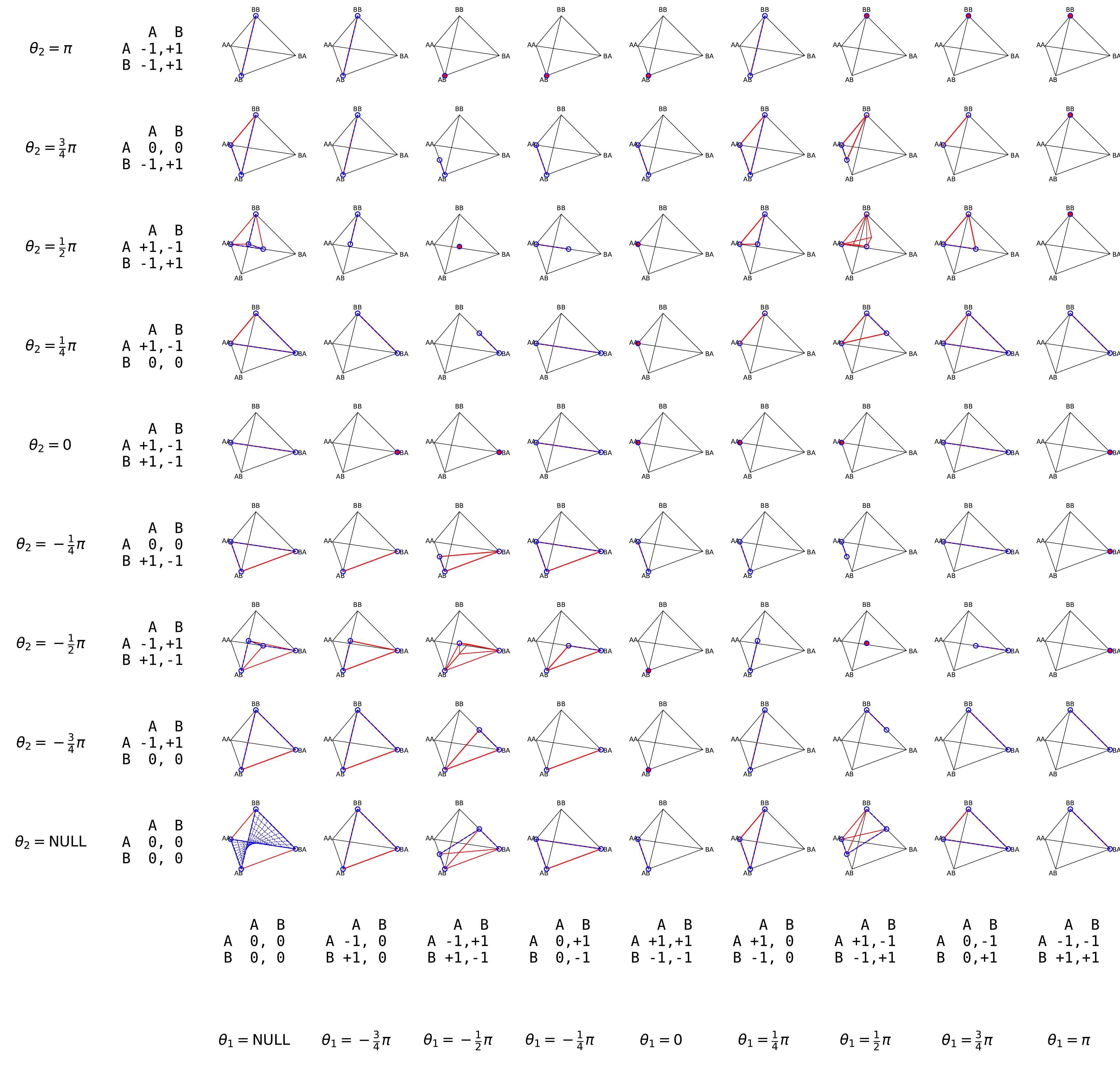}
    \caption{The space of NEs (dashed blue lines) and (C)CEs (solid red polytopes) equilibria for 2×2 games. For two-strategy games CEs and CCEs are identical. The left column contains games where player 1 has trivial payoff. The bottom row contains games where player 2 has trivial payoff. The intersection of the polytopes in these games results in the polytopes of the non-null games. In \nullgame~Null, all joints are (C)CEs and all factorizable joints are NEs. The NEs of the diagonal quadrants have convex (interchangeable) solutions because they can be mapped onto zero-sum games. The off-diagonal games are non-interchangeable because they can be mapped onto common-payoff games.}
    \label{fig:better_response_invariant_eqiulibria}
\end{figure}

\begin{figure}[h]
    \centering
    \input{tex/figure_equilibrium_support}
    \caption{Shows the joint strategies that can have support in equilibrium over the space of 2×2 games. The tiling indicates the support of a joint distribution (square) with four optionally shaded quadrants that correspond to the four joint strategies. All nonzero combinations $2^4 - 1 = 15$ of supports are possible. For example, \cyclegame~Cycle and \coordinationgame Coordination have full support \cyclesupport, \dominantgame~Dominant has pure support \dominantsupport, \fossickgame~Fossick has diagonal support \fossicksupport, and \daredevilgame~Daredevil has support over all but one joint \daredevilsupport. The majority of space either has a pure joint strategy or permits full-support mixed equilibrium. Other equilibria are possible but are measure-zero and exist on the boundaries between the pure and full-support games.}
    \label{fig:equilibrium_support}
\end{figure}


\begin{figure}[h]
    \centering
    \begin{tikzpicture}[
    /pgfplots/y=7cm, /pgfplots/x=7cm 
]
\begin{axis}[
    anchor=origin, 
    axis lines = middle,
    xmin=0.0, xmax=+1.3,
    ymin=-1.1, ymax=+1.15,
    xticklabels={}, xtick={+1},
    yticklabels={,}, ytick={-1,+1},
    xlabel = \(\phi_\text{cycle}\),
    ylabel = {\(\phi_\text{coord}\)},
    xmajorgrids=false,
    ymajorgrids=false,
    grid style=dashed,
]

\addplot[color=black] coordinates {(0.0,1.0) (1.0, 0.0) (0.0,-1.0)};

\addplot[color=gray, dotted] coordinates {(0.25,-0.75) (0.0,-0.5) (0.5,0.0)  (0.75,-0.25)};
\addplot[color=gray, dotted] coordinates {(0.5,0.0)  (0.75,0.25)};
\addplot[color=gray, dotted] coordinates {(0.0,0.5)  (0.25,0.75)};

\addplot[color=gray, dashed] coordinates {(0, 0) (0.5, 0.5)};
\addplot[color=gray, dashed] coordinates {(0, 0) (0.5, -0.5)};

\end{axis}

\matrix (m_coordination) at (0.0*7cm,1.0*7cm) [table,text width=1.6em,label={[align=center]below:\footnotesize \equigame{+-}{-+}{+-}{-+}~Coordination}]
{
\node[fill=gray!10,fill opacity=.5,text opacity=1]{$+1,+1$}; & \node[fill=gray!10,fill opacity=.5,text opacity=1]{$-1,-1$}; \\
\node[fill=gray!10,fill opacity=.5,text opacity=1]{$-1,-1$}; & \node[fill=gray!10,fill opacity=.5,text opacity=1]{$+1,+1$}; \\
};

\matrix (m_coordination) at (0.0*7cm,0.5*7cm) [table,text width=1.6em,label={[align=center]below:\footnotesize \equigame{+-}{00}{+-}{00}~Fossick}]
{
\node[fill=gray!10,fill opacity=.5,text opacity=1]{$+1,+1$}; & $~0,-1$ \\
$-1,~0$ & \node[fill=gray!10,fill opacity=.5,text opacity=1]{$~0,~0$}; \\
};

\matrix (t1) at (0.0*7cm,0.0*7cm) [table,text width=1.6em,label={[align=center]below:\footnotesize \equigame{+-}{+-}{+-}{+-}~Dominant}]
{
\node[fill=gray!10,fill opacity=.5,text opacity=1]{$+1,+1$}; & $+1,-1$ \\
$-1,+1$ & $-1,-1$ \\
};

\matrix (tr1) at (0.0*7cm,-0.5*7cm) [table,text width=1.6em,label={[align=center]below:\footnotesize  \equigame{00}{+-}{00}{+-}~Daredevil}]
{
\node[fill=gray!10,fill opacity=.5,text opacity=1]{$\phantom{+}0,\phantom{+}0$}; & \node[fill=gray!10,fill opacity=.5,text opacity=1]{$+1,\phantom{+}0$}; \\
\node[fill=gray!10,fill opacity=.5,text opacity=1]{$\phantom{+}0,+1$}; & $-1,-1$ \\
};

\matrix (m_coordination) at (0.0*7cm,-1.0*7cm) [table,text width=1.6em,label={[align=center]below:\footnotesize\color{gray!20} \equigame{-+}{+-}{-+}{+-}~Coordination}]
{
\node[fill=gray!10,opacity=.25]{$-1,-1$}; & \node[fill=gray!10,opacity=.25]{$+1,+1$}; \\
\node[fill=gray!10,opacity=.25]{$+1,+1$}; & \node[fill=gray!10,opacity=.25]{$-1,-1$}; \\
};

\matrix (tr1) at (0.25*7cm,0.75*7cm) [table,text width=1.6em,label=below:\footnotesize  \equigame{+-}{-+}{+-}{00}~Safety]
{
\node[fill=gray!10,fill opacity=.5,text opacity=1]{$+1,+1$}; & $-1,-1$ \\
\node[fill=gray!10,fill opacity=.5,text opacity=1]{$-1,~0$}; & \node[fill=gray!10,fill opacity=.5,text opacity=1]{$+1,~0$}; \\
};

\matrix (tr1) at (0.25*7cm,0.25*7cm) [table,text width=1.6em,label=below:\footnotesize  \equigame{+-}{00}{+-}{+-}~Picnic]
{
\node[fill=gray!10,fill opacity=.5,text opacity=1]{$+1,+1$}; & $~0,-1$ \\
$-1,+1$ & $~0,-1$ \\
};

\matrix (tr1) at (0.25*7cm,-0.25*7cm) [table,text width=1.6em,label=below:\footnotesize  \equigame{+-}{+-}{00}{+-}~Aidos]
{
\node[fill=gray!10,fill opacity=.5,text opacity=1]{$+1,~0$}; & \node[fill=gray!10,fill opacity=.5,text opacity=1]{$+1,~0$}; \\
$-1,+1$ & $-1,-1$ \\
};

\matrix (tr1) at (0.25*7cm,-0.75*7cm) [table,text width=1.6em,label=below:\footnotesize \color{gray!20} \equigame{00}{+-}{-+}{+-} Safety]
{
\node[fill=gray!10,opacity=.25]{$~0,-1$}; & \node[fill=gray!10,opacity=.25]{$+1,+1$}; \\
\node[fill=gray!10,opacity=.25]{$~0,+1$}; & \node[opacity=.25]{$-1,-1$}; \\
};

\matrix (tr1) at (0.5*7cm,0.5*7cm) [table,text width=1.6em,label=below:\footnotesize  \equigame{+-}{-+}{+-}{+-}~Samaritan]
{
\node[fill=gray!10,fill opacity=.5,text opacity=1]{$+1,+1$}; & $-1,-1$ \\
$-1,+1$ & $+1,-1$ \\
};

\matrix (tr1) at (0.5*7cm,0.0*7cm) [table,text width=1.6em,label=below:\footnotesize  \equigame{+-}{00}{00}{+-}~Heist]
{
\node[fill=gray!10,fill opacity=.5,text opacity=1]{$+1,~0$}; & \node[fill=gray!10,fill opacity=.5,text opacity=1]{$~0,~0$}; \\
$-1,+1$ & $~0,-1$ \\
};

\matrix (tr1) at (0.5*7cm,-0.5*7cm) [table,text width=1.6em,label=below:\footnotesize \color{gray!20} \equigame{+-}{+-}{-+}{+-}~Samaritan]
{
\node[opacity=.25]{$+1,-1$}; & \node[fill=gray!10,opacity=.25]{$+1,+1$}; \\
\node[opacity=.25]{$-1,+1$}; & \node[opacity=.25]{$-1,-1$}; \\
};

\matrix (tr1) at (0.75*7cm,0.25*7cm) [table,text width=1.6em,label=below:\footnotesize  \equigame{+-}{-+}{00}{+-}~Hazard]
{
\node[fill=gray!10,fill opacity=.5,text opacity=1]{$+1,~0$}; & \node[fill=gray!10,fill opacity=.5,text opacity=1]{$-1,~0$}; \\
$-1,+1$ & $+1,-1$ \\
};

\matrix (tr1) at (0.75*7cm,-0.25*7cm) [table,text width=1.6em,label=below:\footnotesize \color{gray!20} \equigame{+-}{00}{-+}{+-}~Hazard]
{
\node[opacity=.25]{$+1,-1$}; & \node[fill=gray!10,opacity=.25]{$~0,+1$}; \\
\node[opacity=.25]{$-1,+1$}; & \node[fill=gray!10,opacity=.25]{$~0,-1$}; \\
};

\matrix (m2) at (1.0*7cm,0.0*7cm) [table,text width=1.6em,label={[align=center]below:\footnotesize \equigame{+-}{-+}{-+}{+-}~Cycle}]
{
\node[fill=gray!10,fill opacity=.5,text opacity=1]{$+1,-1$}; & \node[fill=gray!10,fill opacity=.5,text opacity=1]{$-1,+1$}; \\
\node[fill=gray!10,fill opacity=.5,text opacity=1]{$-1,+1$}; & \node[fill=gray!10,fill opacity=.5,text opacity=1]{$+1,-1$}; \\
};

\matrix (tr1) at (1.25*7cm,-0.25*7cm) [table,text width=1.6em,label=below:\footnotesize \equigame{+-}{-+}{00}{00}~Horseplay]
{
\node[fill=gray!10,fill opacity=.5,text opacity=1]{$+1,~0$}; & \node[fill=gray!10,fill opacity=.5,text opacity=1]{$-1,~0$}; \\
\node[fill=gray!10,fill opacity=.5,text opacity=1]{$-1,~0$}; & \node[fill=gray!10,fill opacity=.5,text opacity=1]{$+1,~0$}; \\
};

\matrix (tr1) at (1.0*7cm,-0.5*7cm) [table,text width=1.6em,label=below:\footnotesize \equigame{+-}{00}{00}{00}~Dress]
{
\node[fill=gray!10,fill opacity=.5,text opacity=1]{$+1,~0$}; & \node[fill=gray!10,fill opacity=.5,text opacity=1]{$~0,~0$}; \\
$-1,~0$ & \node[fill=gray!10,fill opacity=.5,text opacity=1]{$~0,~0$}; \\
};

\matrix (tr1) at (0.75*7cm,-0.75*7cm) [table,text width=1.6em,label=below:\footnotesize \equigame{+-}{+-}{00}{00}~Ignorance]
{
\node[fill=gray!10,fill opacity=.5,text opacity=1]{$+1,~0$}; & \node[fill=gray!10,fill opacity=.5,text opacity=1]{$+1,~0$}; \\
$-1,~0$ & $-1,~0$ \\
};

\matrix (tr1) at (1.25*7cm,-0.75*7cm) [table,text width=1.6em,label=below:\footnotesize  \nullgame~Null]
{
\node[fill=gray!10,fill opacity=.5,text opacity=1]{$~0,~0$}; & \node[fill=gray!10,fill opacity=.5,text opacity=1]{$~0,~0$}; \\
\node[fill=gray!10,fill opacity=.5,text opacity=1]{$~0,~0$}; & \node[fill=gray!10,fill opacity=.5,text opacity=1]{$~0,~0$}; \\
};
\end{tikzpicture}
    \caption{Shows 15 fundamental two-player two-strategy best-response-invariant embeddings. Payoffs are shown in the table, with shaded joint strategies appearing in an equilibrium. The faded games are symmetric permutations of other games. All 2×2 games will map onto one of these games. The dashed gray lines show the quadrants: the right quadrant is zero-sum clockwise cyclic and left half-quadrants show coordination games. The dotted lines show the equilibrium solution region which, going clockwise from the top, are diagonal-coordination, top-left dominant, clockwise-coordination, top-right dominant, and off-diagonal-coordination. Games outside the equilibrium-symmetric embedding are trivial.}
\end{figure}

\end{document}